\definecolor{pku-red}{RGB}{139,0,18}
\newtheorem{theorem}{Theorem}[section]
\newtheorem{corollary}[theorem]{Corollary}
\newtheorem{proposition}[theorem]{Proposition}
\newtheorem{lemma}[theorem]{Lemma}
\theoremstyle{definition}
\newtheorem{definition}[theorem]{Definition}
\newtheorem{condition}{Condition}
\providecommand{\email}[1]{\href{mailto:#1}{\nolinkurl{#1}\xspace}}
\newcommand\child[1]{\operatorname{Child}(#1)}
\DeclareMathOperator*{\argmax}{arg\,max}
\DeclareMathOperator*{\argmin}{arg\,min}
\DeclareMathOperator{\BR}{BR}
\tikzset{
box/.style ={
rectangle, % the shape 
rounded corners =1pt, % round corners
minimum width =50pt, %minimum width
minimum height =20pt, %minimum height
inner sep=5pt, %distance between the frame and the words
draw=black, % frame color}
align = center
}}
\title{
Optimal Private Payoff Manipulation against Commitment in Extensive-form Games
}
\author{Yurong Chen \\Peking University \\\email{chenyurong@pku.edu.cn} \and  Xiaotie Deng \\Peking University \\ \email{xiaotie@pku.edu.cn} \and Yuhao Li \\Columbia University \\ \email{yuhaoli@cs.columbia.edu}}
\date{}
\begin{document}

\maketitle

\begin{abstract}
To take advantage of strategy commitment, a useful tactic of playing games, a leader must learn enough information about the follower's payoff function. 
However, this leaves the follower a chance to
provide fake information and 
influence 
the final game outcome. 
Through a carefully contrived payoff function misreported to the learning leader, the follower may induce an outcome that benefits him more, compared to the ones when he truthfully behaves.

We study the follower's optimal manipulation via such strategic behaviors in extensive-form games.
Followers' different attitudes are taken into account. 
An optimistic follower maximizes his true utility among all game outcomes that can be induced by some payoff function.  A pessimistic follower only considers misreporting payoff functions that induce a unique game outcome. For all the settings considered in this paper, we characterize all the possible game outcomes that can be induced successfully. We show that it is polynomial-time tractable for the follower to find the optimal way of misreporting  his private payoff information. Our work completely resolves this follower's optimal manipulation problem on an extensive-form game tree. 
\end{abstract}

\newpage
\setcounter{page}{1}

\section{Introduction}

In extensive-form games, game trees capture the sequential interaction among players. While the interaction structure of a game is predetermined, a player may still be able to commit to a strategy first at every turn of her move rather than choose strategies simultaneously. This makes her the ``conceptual leader'' and potentially gain more utility. 
Such an idea is originally captured in \cite{von1934marktform}. With the prediction that the follower will best-respond to the commitment,
the leader can commit to a strategy that optimizes her own utility. 
The pair of a leader's optimal commitment and a follower's corresponding best response is
% commonly
called strong Stackelberg equilibrium (SSE), with the assumption that the follower breaks ties in favor of the leader to assure its existence
% the existence of such a solution concept
\citep{letchford2010computing}.

Generally, the leader can gain potentially
% no less, and often
more in SSE than in Nash Equilibrium (NE), since she can always choose her NE strategy 
for the follower to 
best-respond to, so as to reach NE as a feasible
Stackelberg solution. 
% conceptual-leader solution.
Due to its modelling of conceptually asymmetric roles, 
% Due to its ability of modelling such scenarios where players with conceptually asymmetric roles are involved, 
SSE in extensive-form games has inspired a wide range of applications, such as modelling sequential defenses and attacks in security games~\citep{nguyen2019tackling}, addressing execution uncertainties in airport securities~\citep{delle2014game}, and 
so on~\citep{sinha2018stackelberg, cooper2019stackelberg, durkota2019hardening}. 
% various practical settings with sequential interactions

However, to achieve this ``first-mover'' advantage, the leader needs to have enough knowledge about the follower's payoff function. 
Various learning methods have been proposed to elicit the follower's payoff information through interaction ~\citep{letchford2009learning, Blum2014learning,nika2016three, roth2016watch,peng2019learning}.
% , leading to an invasion of his privacy. 
A practical example is big data discriminatory pricing, as a result of learning application to the economics. According to different prices, consumers will best-respond by buying or not buying. 
Then 
platforms can construct user profiles via data collection, and set specialized prices for the same products to different customers
% to increase their own revenues~
\citep{Mahdawi2018is, chang2021does}. 

% To exercise his right of privacy, 
Such learning process
% through interaction
then leaves
the follower a chance to misreport his private payoff information.
% has the capability to
He 
can strategically
act as possessing an alternative payoff function and induce a new game. 
This may improve his real utility on the SSE of the newly induced game. 
An example is shown in Figure \ref{example: cheat can gain more} where the leader commits to a pure strategy. When the follower reports his real payoff function, he gets a value of $3$ on the SSE of the game, as shown in Figure \ref{example: cheat can gain more-truthful}. When he reports a payoff function shown in Figure \ref{example: cheat can gain more-cheat}, his real utility rises to
% a greater value of 
$4$ on the SSE of the induced game.
% , greater than what he gains on the SSE of the original game. 

Back to the above practical example, there are evidences that customers are aware of the price discrimination and beginning their manipulation, 
% to manipulate,
% show some strategic behaviors, 
like changing their credit cards~\citep{Mahdawi2018is}. We note that even though the leader is fully rational, she might temporarily fail to realize such strategic behaviors. Such cases may happen when the leader is playing games with huge amounts of followers simultaneously, e.g., games between e-commerce platforms and consumers. And a rational follower would definitely seize
and benefit from this chance.
% this chance to improve his utilities. 
Actually, as will be further illustrated in the related work, the learning and manipulation of private information has shown increasing interests in both learning and game theory communities.

In this paper, we study how a follower optimally misleads the leader through strategic reporting. Specifically, we model the whole process into two phases: in the first phase, the follower reports a payoff function to the leader, which is a simplification of leader's learning follower's payoff information through interaction; in the second phase, 
% the two players
the leader
computes an SSE in an extensive-form game with perfect information, according to a predetermined game tree, the leader's publicly-known payoff function and the follower's reported payoff function.

Two 
% game
settings of commitment are 
% thoroughly
studied: 
we first consider the setting where the leader only commits to a pure strategy. We then proceed to the more general and complicated setting where the leader commits to a behavioral strategy. 
We will regard a probability distribution over leaf nodes of the game tree as the game outcome that the follower aims to finally induce through strategic reporting, because of that 1) different strategy profiles may lead to the same 
% probability
distribution; and that 2) it is the distribution that plays a decisive role in calculating the players' expected utilities.

\begin{figure}[!htbp]
\captionsetup[subfigure]{}
\subcaptionbox{When the follower acts according to his true function $U_F$. \label{example: cheat can gain more-truthful}}[.49\textwidth]{%
\centering
\begin{tikzpicture}[scale=0.95, every node/.style={transform shape}]
\node[above left] at (3,1.6) {$F$};
\node[above left] at (1.5,0.8) {$L$};
\node[above right] at (4.5,0.8) {$L$};
\node at (0,-0.3) {$U_L$};
\node at (0,-0.7) {$U_F$};
\node at (0.75,-0.3) {$8$};
\node at (0.75,-0.7) {$1$};
\node at (2.25,-0.3) {$4$};
\node at (2.25,-0.7) {$3$};
\node at (3.75,-0.3) {$2$};
\node at (3.75,-0.7) {$4$};
\node at (5.25,-0.3) {$1$};
\node at (5.25,-0.7) {$2$};
% \node[below] at (1,0) {$8$};
% \node[below] at (0,0) {$0$};
\filldraw (3,1.6) circle (.08)
          (1.5,0.8) circle (.08)
          (4.5,0.8) circle (.08)
          (0.75,0) circle (.08)
        %   (2.25,0) circle (.1)
          (3.75,0) circle (.08)
          (5.25,0) circle (.08);
\filldraw[blue] (2.25,0) circle (.08);
\draw[thick] (3,1.6)--(1.5,0.8)
             (3,1.6)--(4.5,0.8)
             (1.5,0.8)--(0.75,0)
             (1.5,0.8)--(2.25,0)
             (4.5,0.8)--(3.75,0)
             (4.5,0.8)--(5.25,0);
\draw[->, thick, blue] (2.8,1.58)--(1.6,0.94);
\draw[->, thick, blue] (1.645,0.75)--(2.245,0.11);
\draw[->, thick, blue] (4.645,0.75)--(5.245,0.11);
\end{tikzpicture}
}
\hfill
\subcaptionbox{When the follower acts according to $\tilde{U}_F$.\label{example: cheat can gain more-cheat}}[.49\textwidth]{
\begin{tikzpicture}
\centering
\node[above left] at (3,1.6) {$F$};
\node[above left] at (1.5,0.8) {$L$};
\node[above right] at (4.5,0.8) {$L$};
\node at (0,-0.3) {$U_L$};
\node at (0,-0.7) {$U_F$};
\node at (0.75,-0.3) {$8$};
\node at (0.75,-0.7) {$-9$};
\node at (2.25,-0.3) {$4$};
\node at (2.25,-0.7) {$-9$};
\node at (3.75,-0.3) {$2$};
\node at (3.75,-0.7) {$0$};
\node at (5.25,-0.3) {$1$};
\node at (5.25,-0.7) {$1$};
% \node[below] at (1,0) {$8$};
% \node[below] at (0,0) {$0$};
\filldraw (3,1.6) circle (.08)
          (1.5,0.8) circle (.08)
          (4.5,0.8) circle (.08)
          (0.75,0) circle (.08)
          (2.25,0) circle (.08)
        %   (3.75,0) circle (.08)
          (5.25,0) circle (.08);
\filldraw[blue] (3.75,0) circle (.08);
\draw[thick] (3,1.6)--(1.5,0.8)
             (3,1.6)--(4.5,0.8)
             (1.5,0.8)--(0.75,0)
             (1.5,0.8)--(2.25,0)
             (4.5,0.8)--(3.75,0)
             (4.5,0.8)--(5.25,0);
\draw[->, thick, blue] (3.2,1.58)--(4.4,0.94);
\draw[->, thick, blue] (1.645,0.75)--(2.245,0.11);
\draw[->, thick, blue] (4.355,0.75)--(3.755,0.11);
\end{tikzpicture}}
\caption{An example how the follower can gain more utility by strategically reporting an alternative payoff function, when the leader only commits to pure strategies. 
% In this extensive-form game, the 
The follower acts at the root node, while the leader acts at other internal nodes. The first row below the game tree shows the leader's payoff at each leaf node, while the second row shows the follower's. Blue arrows denote the SSEs in the games. Fig.~\ref{example: cheat can gain more-truthful} shows results when the follower reports his true payoff function $U_F$, while Fig.~\ref{example: cheat can gain more-cheat} shows that when he reports a  payoff function $\tilde{U}_F$. He gains $1$ more utility by strategic reporting. }
\label{example: cheat can gain more}
\end{figure}
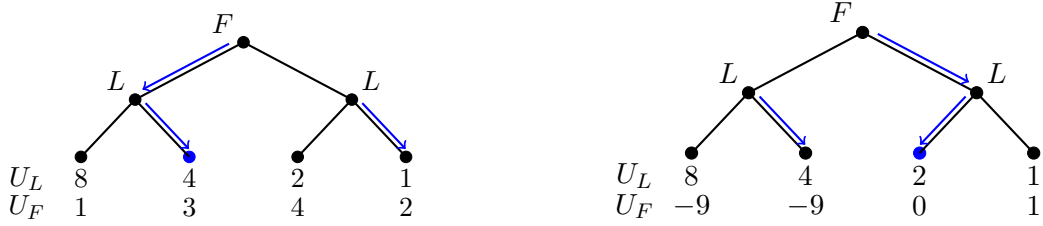
For both settings, we develop four key results as follows: 
\begin{enumerate}
    \item We characterize all the 
    % game outcomes
    probability distributions  
    that can be led to by \textbf{one} SSE of an induced game, made up of the leader's true payoff function and a follower's proposed payoff function, as  called \textit{inducibility} (\Cref{thm: cheat in pure commit}, \Cref{thm: cheat in behavioral commit}).

    \item We propose polynomial-time algorithms for the follower to find an optimal 
    % game outcome
    distribution
    that yields his highest utility among all the inducible ones, 
    % those that can be successfully induced, 
    and construct a 
    % corresponding 
    follower's payoff function
    that induces this distribution 
    % that leads to this distribution 
    (\Cref{thm: algorithm for pure case}, \Cref{thm: algorithm for behavioral case}).
    
    \item Ultimately, the bar on the follower's strategic reporting is raised: we characterize all the
    % game outcomes
    probability distributions
    that there is a way of reporting to make it the \textbf{unique} outcome, led to by all induced SSEs 
    % of the induced game 
    (\Cref{thm: strong inducibility under pure}, \Cref{thm: strong inducibility under behavioral}), as 
    % we 
    called \textit{strong inducibility}.  We give polynomial-time algorithms to find an optimal distribution for the follower or a near-optimal distribution when the optimal one does not exist (\Cref{thm: algorithm for strongly inducible under pure}, \Cref{thm: algorithm for strongly inducible under behavioral}). 
    
    \item We compare the optimal utilities in one game that a follower can get among different settings, especially those under
    inducible  and strongly inducible distributions, respectively: we fully characterize the games where these two values are equal (\Cref{thm: full characterization of near optimality}).  We say such games satisfy the property \textit{Utility Supremum Equivalence}. 
    
\end{enumerate}

\Cref{relationship graph} summarizes our main results by their relationships with different settings. In more detail, under the inducibility and pure commitment case,
we show that a distribution can be successfully induced so long as the leader's utility is no less than her max-min value. Beyond this setting, max-min value is not enough for such characterization, due to the best responses' subgame perfect properties on extensive-form games. While conditions are becoming stricter and more complicated, we show that the follower can gain no less utility under behavioral commitment case than under pure commitment case. Algorithmically, for the inducibility and pure commitment case,
the time complexities of the above algorithms are optimal.

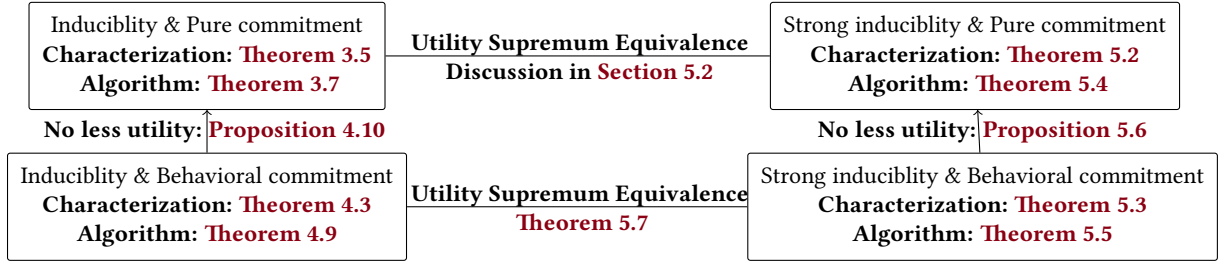
\begin{figure}[!htbp]
\begin{tikzpicture}[scale=1, every node/.style={transform shape}]
\centering
\node[box, font = \footnotesize] (1) at (0,2) {{Induciblity \& Pure commitment}\\{\textbf{Characterization: \Cref{thm: cheat in pure commit}}} \\{ \textbf{Algorithm: \Cref{thm: algorithm for pure case}}}};
\node[box, font = \footnotesize] (2) at (0,0) {{Induciblity \& Behavioral commitment}\\{\textbf{Characterization: \Cref{thm: cheat in behavioral commit}}} \\ {\textbf{Algorithm: \Cref{thm: algorithm for behavioral case}}}};
\node[box, font = \footnotesize] (3) at (10.1,2) {{Strong induciblity \& Pure commitment}\\{\textbf{Characterization: \Cref{thm: strong inducibility under pure}}} \\{ \textbf{Algorithm: \Cref{thm: algorithm for strongly inducible under pure}}}};
\node[box, font = \footnotesize] (4) at (10.2,0) {{Strong induciblity \& Behavioral commitment}\\{\textbf{Characterization: \Cref{thm: strong inducibility under behavioral}}} \\ {\textbf{Algorithm: \Cref{thm: algorithm for strongly inducible under behavioral}}}};
\draw[->] (2)--(1);
\node[auto,sloped,font=\footnotesize] at (0.1,1) {\textbf{No less utility: \Cref{proposition: higher utility}}};
\draw[-] (2)--(4);
\node[auto,sloped,font=\footnotesize, align = center] at (4.9,0) {\textbf{Utility Supremum Equivalence}\\ \textbf{ \Cref{thm: full characterization of near optimality}}};
\draw[->] (4)--(3);
\node[auto,sloped,font=\footnotesize] at (10.2,1) {\textbf{No less utility: \Cref{proposition: higher utility under strongly inducibility}}};
\draw[-] (1)--(3);
\node[auto,sloped,font=\footnotesize, align = center] at (4.9,2) {\textbf{Utility Supremum Equivalence}\\ \textbf{Discussion in \Cref{sec: efficiency of strong inducibility}}};
\end{tikzpicture}
\caption{Summary of our results under different settings. Edges link to the settings where the optimal utilities that the follower can get through strategic reporting are compared. }
\label{relationship graph}
\end{figure}

\subsection*{Technique Overview}
Our technique exploits the sequential structure of extensive-form games, modeled by game trees. 

First, we abstract from strategy profiles and treat 
% game outcomes--
probability distributions on leaf nodes of game trees
as the game outcomes 
the follower aims to induce via strategic reporting:
While 
SSE 
are defined by strategy profiles, multiple strategy profiles
can lead to the same 
% probability 
distribution over leaf nodes.
It is the latter that really influences players' expected utilities, and thus is what players really care. 
To realize a desired distribution,
the easiest-to-analyze way for the leader is to commit to minimax strategies on the subgames she does not want the follower to choose. However, there can be other commitments that is hard to describe but also lead to the same outcome. These nonessentials, as we will show, can be ignored by focusing on 
% game outcomes
distributions 
instead of strategy profiles. In particular, this greatly simplifies our algorithms for finding optimal (strongly-)inducible distributions.

Second, we use recursive conditions on game trees to avoid non-convexity issues: when we analyze the leader's optimal commitment (when payoff functions are known) and the follower's optimal strategic reporting, consider the sets of all possible 
% game outcomes
distributions
that can be achieved by commitments and induced by strategic reporting, respectively. Since the follower always best-responds with pure strategies, these sets are not convex. 
% And their shapes are highly dependent on the structures of the game trees. 
Thus, traditional convex programming approaches, e.g., linear programming, are not applicable in our problem. Instead, based on the sequential structure and the subgame perfect property of best responses, specially owned by extensive-form games, we are able to characterize the aforementioned sets and design algorithms with recursion. 

Third, we successfully identify a subclass of ``succinct'' distributions on leaf nodes, whose support size is at most 2: 
when we want to design algorithms for optimal deception or to find a sufficient and necessary condition whether a distribution can be uniquely induced, this subclass of ``succinct'' distributions are easier to analyze, and maintain the same supremum properties with the universal set. That is, for each distribution that can be (uniquely) induced, one can always find a member of this subclass, that is also (uniquely) induced and yields no less utilities for both players. With this discovery, we are able to reduce an optimization problem among arbitrary distributions to that among a 
% smaller and 
clearer-to-analyze subclass of distributions, design algorithms for behavioral commitment settings and characterize games that achieve
% the conditions for a game to achieve
``Utility Supremum Equivalence'' property under strong inducibility.

\subsection*{Other Related Work}

There is an increasing interest in the learning and manipulation of private information.
% in both the learning and game theory community. 
For example, to commit to an optimal auction that maximizes revenue, the seller needs to know the buyers' value distributions and thus how buyers will best-respond to the mechanism~\citep{myerson1981optimal}.  
% This is actually buyers' private information. 
Since the distributions are actually the buyers' private information, they  may benefit from
% a chance to 
pretending a different one by manipulating the bid data \citep{pingzhong2018the, deng2020private, chen2022budget}. In this paper, the private information is the follower's payoff function. 
Our work also relates to the adversarial learning in the machine learning community~\citep{lowd2005adversarial, barreno2010security}, which considers the problem that the training data may be manipulated maliciously, 
% thus
leading to inaccurate prediction. 
Thorough studies on extensive-form games with perfect information
% such strategic behaviors 
% extensive-form games with perfect information 
% in our work
would help design more algorithms 
robust to such data manipulation behavior. 

Previous work on the follower's strategic reporting
on SSE 
mainly focuses on normal-form games and security games. As for the normal-form games, \citet{gan2019imitative} studied the case where the follower chooses from a finite set of payoff functions to disguise.
% and proposed a way for the leader to counteract. 
\citet{birmpas2021optimally} considered the 
% more general
case where the follower can 
pretend any payoff function. The characterization condition in their work is also maximin value, which coincides with our results on pure commitment and inducibility setting. However, maximin value is not enough in other settings considered in our paper, which attributes to the sequential structures of extensive-form games. 

Although any extensive-form game can be transferred into a normal-form game, results in \citet{birmpas2021optimally} cannot directly be applied to ours. 
First, such transformation has an exponential blow-up in size. 
Second, compared to normal-form games, the sets of all game outcomes that can be induced are not even convex, 
thus linear programming used in \citet{birmpas2021optimally} fails in our case, and new techniques are needed. 

Security games are a special case of 
% conceptual-leader 
Stackelberg games with compact  representation, where the defender (the leader) allocates security resources to several targets and the attacker (the follower) chooses one to attack. 
\citet{gan2019manipulating} considered the case where the attacker is able to imitate a different type, while \citet{nguyen2019imitative} studied the case
% more general case
where the attacker can imitate any payoff function. Their results on ways of imitation show similarities of ours, while totally different techniques are used due to different game structures.  \citet{nguyen2019deception} considered the case in which the attacker with dynamic payoffs can manipulate attacks in repeated security games.  \citet{nguyen2020partial} took the different bounded rationality level of the attacker into consideration. Other studies on the asymmetric information in security games considered how the defenders manipulate when they possess more information~\citep{yin2013optimal,xu2015exploring,rabinovich2015information,guo2017comparing, schlenker2018deceiving}. 

We note that the  
Stackelberg 
equilibria on normal-form 
games and security games 
can be remodeled as
% is equivalent to 
the subgame perfect equilibria (SPE) of a two-layer extensive-form games with perfect information. In this game, the leader acts at root node and her action space there is all her mixed strategies, which is infinite. Our results do not necessarily imply theirs, as we focus on games where the players' action space at each node is finite.

\subsection*{Paper Organization}

\Cref{sec: preliminary} introduces the preliminaries needed throughout this paper. \Cref{sec: pure case} presents the results of the case when the leader commits to pure strategies. 
\Cref{sec:behavioral case} proceeds to 
% the results of 
the case when the leader commits to behavioral strategies. 
\Cref{sec: strong inducibility} provides the results on studying outcomes that can be uniquely induced. 
\Cref{sec: conclusion} summarizes. All omitted proofs are in the appendix.

\section{Preliminaries}
\label{sec: preliminary}

We consider two-player extensive-form games with perfect information on the second phase:
\begin{definition}[Extensive-form Game]
A two-player extensive-form game with perfect information is a tuple 
$(T_{root}, N, \mathcal{P},\{U_i\}_{i\in N})$, where
\begin{itemize}
    \item $T_{root} = (H\cup Z; \operatorname{Child})$ is a rooted game tree, in which  
    % $H$ is the set of decision nodes (internal nodes) and $Z$ is the set of terminal nodes (leaf nodes)
    \begin{itemize}
        \item $root$ is the root node;
        % \yuhao{root.}
        \item $H$ is the set of decision nodes (internal nodes);
        \item $Z$ is the set of terminal nodes (leaf nodes);
        \item $\operatorname{Child}: H \rightarrow 2^{H\cup Z}$ assigns each internal node $v\in H$ the set of children nodes of $v$, i.e., the set of available actions at $v$;
    \end{itemize}
    \item $N=\{L,F\}$ is the set of players. $L$ ($F$) denotes the \textit{Conceptual Leader} (\textit{Follower}, respectively).
    \item $\mathcal{P}: H \rightarrow N$ assigns each internal node $v \in H$ the unique player to make decision at $v$.
    \item $U_i:Z\rightarrow \mathbb{R}$ is the payoff function of player $i\in N$. 
\end{itemize}

\end{definition}
Since the players $N$ and  assignment $\mathcal{P}$ keeps the same throughout our analysis, we simply denote an extensive-form game by $(T_{root},U_L,U_F)$. 
Let $\mathcal{P}^{-1}(i) = \{ v \in H: \mathcal{P}(v) = i\}$ be the set of decision nodes designated to player $i\in\{L,F\}$. 
A pure strategy 
$\pi_i\in \Pi_i\coloneqq\prod_{v\in \mathcal{P}^{-1}(i)}\child{v}$ 
chooses exactly one action at each node for player $i$ to act. 
A behavioral strategy 
$\delta_i\in \Delta_i\coloneqq \prod_{v\in \mathcal{P}^{-1}(i)}\Delta(\child{v})$
chooses a probability distribution over the action set
% $\mathcal{A}(h)$
$\child{v}$
at each node $v \in \mathcal{P}^{-1}(i)$. 
% for player $i$ to act. 
Note that a pure strategy is also a behavioral strategy. 

Define player $i$'s utility under strategy profile $(\delta_L,\delta_F)$ to be
\begin{equation*}
    U_i(\delta_L,\delta_F) \coloneqq \sum_{z\in Z} U_i(z)p_{(\delta_L,\delta_F)}(z)
\end{equation*}
where $p_{(\delta_L,\delta_F)}(z)$ denotes the probability that the game ends in $z$ under 
% strategy profile 
$(\delta_L,\delta_F)$. 
Then we can define the best response set of the follower to be\footnote{Since there must be a pure strategy best response for any leader's (pure or behavioral) commitment, we assume the follower always chooses a pure strategy best response~\citep{letchford2010computing}. }
\begin{equation*}
    \BR(\delta_L)={\arg\max}_{\pi_F \in \Pi_F} U_F(\delta_L, \pi_F). 
\end{equation*}

The key solution concept of our paper is the strong 
% conceptual-leader 
Stackelberg
equilibrium (SSE). We consider two settings in extensive-form games.  
In one setting, the leader is only allowed to commit to pure strategies. Then we proceed to the general setting when the leader commits to behavioral strategies. 

\begin{definition}[Strong Stackelberg Equilibrium]
Given a game $(T_{root},U_L,U_F)$, 

A \textbf{strong Stackelberg equilibrium with pure commitment} is defined as 
\begin{equation*}
        (\pi_L,\pi_F)\in \mathop{\arg\max}_{{\pi_L'}\in \Pi_L,\pi_F^{\prime}\in \BR({\pi_L'})}U_L({\pi_L'},{\pi_F'});
    \end{equation*}

A \textbf{strong Stackelberg equilibrium with behavioral commitment} is defined as 
    \begin{equation*}
    (\delta_L,\pi_F)\in \mathop{\arg\max}_{{\delta_L'}\in \Delta_L,{\pi_F'}\in \BR({\delta_L'})}U_L({\delta_L'},{\pi_F'}).
    \end{equation*}

\end{definition}

We note that the definitions of SSEs imply that the follower breaks ties in favor of the leader, and then their existence in these two settings is guaranteed 
% defined as SSEs in~
\citep{letchford2010computing}.
An example in \Cref{example: gain more by committing behavioral} shows that, in contrast to SPEs in games with perfect information, 
the leader can gain more by committing to behavioral strategies in SSEs: when the leader only commits to pure strategies, the best utility she can get is $2$, while she can get expected utility of $2.8$ by behavioral commitment.

\begin{figure}[!tbhp]
\captionsetup[subfigure]{}
% \centering
\subcaptionbox{SSEs with pure commitment of the game. \label{example: gain more by committing behavioral-pure SSE} }[.49\textwidth]{%
\centering
\begin{tikzpicture}[scale=0.95, every node/.style={transform shape}]
\centering
\node[above left] at (3,1.6) {$F$};
\node[above left] at (1.5,0.8) {$L$};
\node[above right] at (4.5,0.8) {$L$};
\node at (0,-0.3) {$U_L$};
\node at (0,-0.7) {$U_F$};
\node at (0.75,-0.3) {$2$};
\node at (0.75,-0.7) {$3$};
\node at (2.25,-0.3) {$4$};
\node at (2.25,-0.7) {$-2$};
\node at (4.5, 0.5) {$1$};
\node at (4.5,0.1) {$1$};
\node at (4, 0.5) {$U_L$};
\node at (4,0.1) {$U_F$};
% \node at (3.75,-0.5) {$1$};
% \node at (3.75,-1) {$1$};
% \node at (5.25,-0.5) {$1$};
% \node at (5.25,-1) {$1$};
% \node[below] at (1,0) {$8$};
% \node[below] at (0,0) {$0$};
\filldraw (3,1.6) circle (.08)
          (1.5,0.8) circle (.08)
          (4.5,0.8) circle (.08)
          (0.75,0) circle (.08)
          (2.25,0) circle (.08);
        %   (3.75,0) circle (.1)
        %   (5.25,0) circle (.1);
% \filldraw[blue] (2.25,0) circle (.1);
\draw[thick] (3,1.6)--(1.5,0.8) 
         (3,1.6)--(4.5,0.8) 
         (1.5,0.8)--(0.75,0) 
         (1.5,0.8)--(2.25,0);
        %  (4.5,1)--(3.75,0)
        %  (4.5,1)--(5.25,0);
\draw[->, thick, blue] (2.8,1.58)--(1.6,0.94);
\draw[->, thick, blue] (1.355,0.75)--(0.755,0.11);
% \draw[->, thick, blue] (4.645,0.75)--(5.245,0.11);
% \draw[->, thick,
\end{tikzpicture}}
\hfill
\subcaptionbox{SSEs with behavioral commitment of the game. \label{example: gain more by committing behavioral-behavioral SSE}}[.49\textwidth]{
\begin{tikzpicture}[scale=0.95, every node/.style={transform shape}]
\centering
\node[above left] at (3,1.6) {$F$};
\node[above left] at (1.5,0.8) {$L$};
\node[above right] at (4.5,0.8) {$L$};
\node at (0,-0.3) {$U_L$};
\node at (0,-0.7) {$U_F$};
\node at (0.75,-0.3) {$2$};
\node at (0.75,-0.7) {$3$};
\node at (2.25,-0.3) {$4$};
\node at (2.25,-0.7) {$-2$};
\node at (4.5, 0.5) {$1$};
\node at (4.5,0.1) {$1$};
\node at (4, 0.5) {$U_L$};
\node at (4,0.1) {$U_F$};
% \node at (3.75,-0.5) {$1$};
% \node at (3.75,-1) {$1$};
% \node at (5.25,-0.5) {$1$};
% \node at (5.25,-1) {$1$};
% \node[below] at (1,0) {$8$};
% \node[below] at (0,0) {$0$};
\filldraw (3,1.6) circle (.08)
          (1.5,0.8) circle (.08)
          (4.5,0.8) circle (.08)
          (0.75,0) circle (.08)
          (2.25,0) circle (.08);
        %   (3.75,0) circle (.1)
        %   (5.25,0) circle (.1);
% \filldraw[blue] (2.25,0) circle (.1);
\draw[thick] (3,1.6)--(1.5,0.8) 
         (3,1.6)--(4.5,0.8) 
         (1.5,0.8)--(0.75,0) 
         (1.5,0.8)--(2.25,0);
        %  (4.5,1)--(3.75,0)
        %  (4.5,1)--(5.25,0);
\draw[->, thick, blue] (2.8,1.58)--(1.6,0.94);
\draw[->, densely dashed, blue] (1.355,0.75)--(0.755,0.11);
\draw[->, densely dashed, blue] (1.645,0.75)--(2.245,0.11);
\node[left,blue] at (1.055, 0.43) {$\frac{3}{5}$};
\node[right,blue] at (1.945, 0.43) {$\frac{2}{5}$};
\end{tikzpicture}}
\caption{An example where the leader gains more utility by committing to behavioral strategies than by only committing to pure strategies. Blue solid arrows show that the players choose pure actions at some nodes. Blue dotted arrows show that the players choose distributions. Fig.~\ref{example: gain more by committing behavioral-pure SSE} shows the SSE with pure commitment of the example game, i.e., $L$ commits to going left and $F$ best-responds with going left.
% $F$ chooses to go left and then $L$ goes left.
%, with strategies arbitrarily defined at other nodes. 
The leader gets $2$ by optimal pure commitment. Fig.~\ref{example: gain more by committing behavioral-behavioral SSE} shows the SSE with behavioral commitment, i.e., the leader commits to going left with probability $\frac{3}{5}$ and going right with probability $\frac{2}{5}$.
% then the follower best responds with going left. 
% the follower goes left, then the leader goes left with probability $\frac{3}{5}$ and goes right with probability $\frac{2}{5}$. 
The follower still chooses to go left, because his expected utilities by going both ways are the same, and he breaks ties in favor of the leader. The leader can get $2.8$ in the SSE with behavioral commitment, which is $0.8$ more than the SSE with pure commitment. }
\label{example: gain more by committing behavioral}
\end{figure}
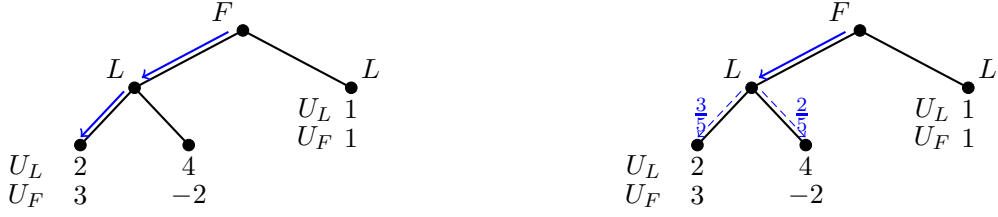

\paragraph{Equilibrium manipulation via strategic reporting}
% Now we introduce notions for studying the follower's strategic reporting on both settings. 
While any strategy profile $(\delta_L,\pi_F)$ realizes a unique
distribution 
over leaf nodes 
$p_{(\delta_L,\pi_F)} \in \Delta(Z)$, 
a probability distribution may correspond to zero or multiple strategy profiles, as the follower always best-responds with pure strategies. We call a distribution \textbf{realizable} if it corresponds to at least one strategy profile and give it a characterization in \Cref{sec: properties for behavioral case}. 
We will simply call $(\delta_L,\pi_F)$ a strategy profile of distribution $p\in\Delta(Z)$ if $p_{(\delta_L,\pi_F)} = p$. 

Since players' expected utilities are actually determined by these 
% probability 
distributions, 
% over leaf nodes, 
% led to by their strategies,
we will study all distributions that the follower can report a payoff function and induce an SSE leading to it. We use ``inducibility'', originally proposed in \citet{birmpas2021optimally}, to describe such distributions.

\begin{definition}[Inducibility]
\label{def: induce}

Given game $(T_{root},U_L,U_F)$,  distribution on leaf nodes $p\in\Delta(Z)$ is \textbf{induced by} payoff function $\tilde{U}_F$, 
if a strategy profile $(\delta_L,\pi_F)$ of $p$
is an SSE of game $(T_{root},U_L,\tilde{U}_F)$. We say $p$ is \textbf{inducible}, if such a $\tilde{U}_F$ exists.  
% We say $\tilde{U}^F$ induces $p$. 

Specifically, $p$ is inducible \textbf{with pure commitment} if $(\delta_L,\pi_F)$ is an SSE with pure commitment; $p$ is inducible \textbf{with behavioral commitment} if $(\delta_L,\pi_F)$ is an SSE with behavioral commitment. 
% We say $\tilde{U}^F$ induces $p$.
\end{definition}
% We will simply omits the $U_i$s when there is no confusion, as $U_L$ is actually fixed throughout the paper. 

\paragraph{Function restrictions on subtrees}
Finally, due to the tree structure of extensive-form games, functions defined on subtrees are often involved in the incoming analysis.  
We denote the subgame tree with root node $v \in H\cup Z$ as $T_v$, the nodes of $T_v$ as $(H\cup Z)_v$, while the internal nodes and terminal nodes as $H_v$ and $Z_v$, respectively. 

Let $f|_{v}$ be a function defined on a subset $S \subseteq (H\cup Z)_v$ for node $v \in H$. For  $w\in\child{v}$, define its restriction on subtree $T_w$ of $T_v$, $f|_{w}$, to be:
\begin{equation*}
    \begin{aligned}
     \operatorname{Dom}  f|_{w}&= S \cap \left(H\cup Z\right)_w\\
     f|_{w}(x)  &= f|_{v}(x),~\forall x \in S \cap \left(H\cup Z\right)_w
    \end{aligned}
\end{equation*}

When $f=\BR$, to simplify notations, noticing that any strategy profile defined on a subtree can be 
regarded as the restriction of one defined on the whole game tree,
% extended to one 
% a strategy profile
% on the whole game tree,
we will only use $\pi_F \in \BR(\delta_L)|_{v}$ to denote that $\pi_F|_{v}$ is a best response 
% strategy 
of $\delta_L|_{v}$ at subtree $T_v$. We call such 
$(\delta_L,\pi_F)|_{v}$ \textbf{feasible} in game $(T_{v}, U_L|_{v},U_F|_{v})$. 

We will omit the $U_i$s and simply say a distribution (strategy profile) is inducible (feasible, respectively) at $T_v$ when there is no confusion.

\section{Optimally Reporting under Pure Commitment}
\label{sec: pure case}

This section presents results of the follower's strategic reporting when the leader only commits to pure strategies. 
When $\delta_L$ is a pure strategy $\pi_L$, $(\pi_L,\pi_F)$ leads to a distribution that put all probability on one leaf node and is  equivalent to the one $z\in Z$ that $p_{(\pi_L,\pi_F)}(z) = 1$. Thus we will directly talk about the inducibility of leaf nodes.
We first give some important properties for pure commitment with respect to the maximin values in \Cref{sec: properties for pure case}.  We characterize all the inducible leaf nodes with pure commitment in 
% \Cref{thm: cheat in pure commit} of
\Cref{sec:cheat result in pure case}. To put it simply, a leaf node is inducible if and only if the leader's utility is no less than her maximin value.
We then 
give a linear-time algorithm
% show that there is a linear-time algorithm
to find an optimal inducible leaf node and construct a 
corresponding follower's
payoff function in 
\Cref{sec: algorithms under pure}.

\subsection{Relating the Maximin Value with Pure Commitments}
\label{sec: properties for pure case}

We first define the maximin value of an extensive-form game as follows.
\begin{definition}[$M_i$]
\label{def: max-min value}
Let $v\in H\cup Z$. We define $M_i(v)$ for $i \in N$ recursively as follows:
\begin{eqnarray}
M_i(v)=\left\{ \begin{array}{ll}
U_i(v),& {v\in Z};\\
\max_{w\in \mathrm{Child}(v)} M_i(w),& v\in \mathcal{P}^{-1}(i);\\
\min_{w\in \mathrm{Child}(v)} M_i(w),& v\in \mathcal{P}^{-1}(-i).
\end{array}
\right.
\end{eqnarray}
\end{definition}

We relate $M_i(v)$ to values players can get by leader's optimal commitment. Specifically,  \Cref{lemma: min utility in pure commit} shows that $M_i(root)$ is the least value players can guarantee themselves on SSEs.

\begin{restatable}{proposition}{lemmaminutilityinpurecommit}\label{lemma: min utility in pure commit}
Given a leaf node $z\in Z$, if there exists an SSE with pure commitment of game $(T_{root}, U_L,U_F)$ leading to $z$, then $U_L(z)\geq M_L(root)$ and $U_F(z)\geq M_F(root)$.
% (This lemma could possibly be written as a theorem.)
\end{restatable}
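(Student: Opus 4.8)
The plan is to prove the two inequalities separately, each by an induction on the game tree that compares the payoff realized along the SSE play with the recursively-defined maximin value $M_i$ of \Cref{def: max-min value}. The guiding observation is that both bounds express a ``guarantee'', but of two different kinds: the follower's bound $U_F(z)\ge M_F(root)$ holds because in an SSE the follower \emph{best-responds}, whereas the leader's bound $U_L(z)\ge M_L(root)$ holds because committing to her own maximin strategy is always an available option, so the utility-maximizing SSE commitment can only do at least as well.

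For the follower, I would first establish the auxiliary claim that for every leader pure strategy $\pi_L$ and every node $v$, the follower's best-response value $\max_{\pi_F} U_F(\pi_L,\pi_F)|_v$ in the subtree $T_v$ is at least $M_F(v)$. This follows by induction up the tree: at a leaf the two quantities coincide; at a follower-controlled node $v\in\mathcal{P}^{-1}(F)$ the follower takes a maximum over children, which dominates $\max_{w\in\child{v}} M_F(w)=M_F(v)$; and at a leader-controlled node $v\in\mathcal{P}^{-1}(L)$ the strategy $\pi_L$ fixes a single child $w^\ast$, and the inductive bound $\ge M_F(w^\ast)\ge \min_{w\in\child{v}} M_F(w)=M_F(v)$ still holds, since passing to one child can only beat the minimum. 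Applying the claim at $root$ and recalling that in an SSE the follower plays a best response, so $U_F(z)=\max_{\pi_F}U_F(\pi_L,\pi_F)$, yields $U_F(z)\ge M_F(root)$.

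For the leader, I would instead exhibit her maximin strategy $\pi_L^\ast$, which at each node $v\in\mathcal{P}^{-1}(L)$ selects a child maximizing $M_L$, and prove by the analogous induction that $U_L(\pi_L^\ast,\pi_F)|_v\ge M_L(v)$ for \emph{every} follower strategy $\pi_F$: at leader nodes $\pi_L^\ast$ attains the maximizing child, while at follower nodes any child the follower selects still satisfies the inductive bound $\ge \min_{w\in\child{v}} M_L(w)=M_L(v)$. Hence committing to $\pi_L^\ast$ guarantees the leader at least $M_L(root)$ against any follower response, in particular against any $\pi_F'\in\BR(\pi_L^\ast)$. Since the SSE commitment $\pi_L$ maximizes the leader's utility over all commitments with the follower best-responding, we get $U_L(z)=U_L(\pi_L,\pi_F)\ge U_L(\pi_L^\ast,\pi_F')\ge M_L(root)$.

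The main obstacle is bookkeeping the asymmetry between the two arguments together with the alternating roles of the $\min$ and $\max$ in $M_i$: the follower bound is a property of \emph{every} best response and needs no optimization over the leader, whereas the leader bound genuinely invokes SSE optimality, since an arbitrary commitment need not secure $M_L(root)$. I also need to treat the SSE tie-breaking with care — all best responses share the same value $U_F$, so the follower bound is unambiguous, and the favorable tie-break only increases $U_L$, so it cannot endanger the leader bound.
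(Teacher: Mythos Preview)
Your proposal is correct and follows essentially the same approach as the paper. The paper packages the inductive arguments into two auxiliary lemmas (characterizing the max--min/min--max strategies and showing they achieve value $M_i(root)$) and then concludes via the one-line inequalities $U_L(z)\ge\max_{\pi_L'}\min_{\pi_F'}U_L=M_L(root)$ and $U_F(z)\ge\min_{\pi_L'}\max_{\pi_F'}U_F=M_F(root)$; your version simply inlines the relevant direction of those lemmas as the two inductions you describe, but the underlying logic is identical.
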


The following \Cref{lemma: zero-sum in pure commit} shows that the value of an SSE in a zero-sum extensive-form game, i.e., the optimal value the leader can get by committing, is equal to $M_L(root)$. 
% The formal proof of \Cref{lemma: zero-sum in pure commit} is in \Cref{subsec: lemmazerosuminpurecommitment}.

\begin{restatable}{proposition}{lemmazerosuminpurecommitment}\label{lemma: zero-sum in pure commit}
Given a zero-sum extensive-form game $(T_{root},U_L,U_F=-U_L)$, we have \begin{equation*}
    \mathop{\max}_{\pi_L\in \Pi_L,\pi_F\in \BR(\pi_L)}U_L(\pi_L,\pi_F)=M_L(root).
\end{equation*}
\end{restatable}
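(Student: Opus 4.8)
The plan is to prove the two quantities coincide by first collapsing the strong-Stackelberg value into an ordinary pure-strategy maximin value, and then identifying that maximin value with the output of the recursion defining $M_L$. The first step exploits the zero-sum structure: since $U_F = -U_L$, we have $\BR(\pi_L) = \arg\min_{\pi_F\in\Pi_F} U_L(\pi_L,\pi_F)$, so every follower best response attains the common value $\min_{\pi_F\in\Pi_F} U_L(\pi_L,\pi_F)$. Hence the leader's tie-breaking advantage is vacuous, and $\max_{\pi_F\in\BR(\pi_L)} U_L(\pi_L,\pi_F) = \min_{\pi_F\in\Pi_F} U_L(\pi_L,\pi_F)$ for every $\pi_L$. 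Consequently the left-hand side of the claim equals $\max_{\pi_L\in\Pi_L}\min_{\pi_F\in\Pi_F} U_L(\pi_L,\pi_F)$, and it remains to match this pure maximin value with $M_L(root)$ via two inequalities.

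For the lower bound I would simply invoke \Cref{lemma: min utility in pure commit}: the SSE of the zero-sum game leads to some leaf $z^\star$, and the proposition guarantees $U_L(z^\star)\geq M_L(root)$; since the SSE value is exactly $U_L(z^\star)$, this gives $\max_{\pi_L}\min_{\pi_F} U_L(\pi_L,\pi_F)\geq M_L(root)$ at once.

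For the upper bound I would fix an arbitrary leader commitment $\pi_L$ and bound what an adversarial follower can force. Writing $\pi_L(v)$ for the action $\pi_L$ prescribes at a leader node $v$, I define by backward induction on $T_{root}$ the quantity $V^{\pi_L}(v) = \min_{\pi_F|_{v}} U_L(\pi_L|_{v},\pi_F|_{v})$, which satisfies $V^{\pi_L}(v)=U_L(v)$ at leaves, $V^{\pi_L}(v)=V^{\pi_L}(\pi_L(v))$ at leader nodes, and $V^{\pi_L}(v)=\min_{w\in\child{v}}V^{\pi_L}(w)$ at follower nodes. I then prove $V^{\pi_L}(v)\leq M_L(v)$ for every $v$ by induction from the leaves: leaves are equalities; at a leader node the single prescribed action bounds the max, $V^{\pi_L}(v)=V^{\pi_L}(\pi_L(v))\leq M_L(\pi_L(v))\leq \max_{w\in\child{v}}M_L(w)=M_L(v)$; and at a follower node monotonicity of $\min$ with the inductive hypothesis gives $V^{\pi_L}(v)=\min_{w}V^{\pi_L}(w)\leq \min_{w}M_L(w)=M_L(v)$. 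Setting $v=root$ and maximizing over $\pi_L$ yields $\max_{\pi_L}\min_{\pi_F}U_L(\pi_L,\pi_F)\leq M_L(root)$, which together with the lower bound closes the argument.

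The part demanding the most care is the bookkeeping behind the recursion for $V^{\pi_L}$: one must verify that in a perfect-information tree, fixing $\pi_L$ makes the follower's minimization decompose node-by-node, so that $V^{\pi_L}$ at an internal node depends only on the values of its children and the backward-induction formula is legitimate. This rests on the fact that the restrictions $\pi_L|_{w}$ and $\pi_F|_{w}$ on disjoint subtrees $T_w$ are chosen independently and only the subtree actually reached contributes to $U_L$. Once this decomposition is justified, both the reduction to pure maximin and the two matching inequalities are routine, and the only game-theoretic input beyond the recursion is \Cref{lemma: min utility in pure commit}.
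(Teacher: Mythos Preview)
Your argument is correct, but the paper's proof reaches the upper bound by a shorter and more symmetric route. You invoke \Cref{lemma: min utility in pure commit} only for its leader half, $U_L(z^\star)\ge M_L(root)$, and then build a separate backward-induction quantity $V^{\pi_L}$ to obtain $U_L(z^\star)\le M_L(root)$. The paper instead applies \emph{both} inequalities of \Cref{lemma: min utility in pure commit} to the SSE outcome $z^\star$: from $U_F(z^\star)\ge M_F(root)$ together with the zero-sum identities $U_F=-U_L$ and $M_F(root)=-M_L(root)$ one gets $-U_L(z^\star)\ge -M_L(root)$, i.e.\ $U_L(z^\star)\le M_L(root)$, and the squeeze is immediate. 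Your explicit induction for $V^{\pi_L}$ is essentially a reproof of the follower half of \Cref{lemma: min utility in pure commit} (and of the supporting \Cref{lemma: strategy form of max and min} and \Cref{lemma: max-min strategy}) specialized to a fixed $\pi_L$; it is perfectly valid and perhaps more self-contained, but duplicates work already packaged in the earlier proposition.
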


\Cref{lemma: pure commit} characterizes leaf nodes that the leader can lead the game to via pure commitment. 

\begin{restatable}{proposition}{lemmapurecommit}
\label{lemma: pure commit}
For leaf node $z\in Z$ in game $(T_{root},U_L,U_F)$,  there exists a feasible strategy profile $(\pi_L,\pi_F\in \BR(\pi_L))$ leading to $z$
if and only if $U_F(z)\geq M_F(v)$ for every $v\in \mathcal{P}^{-1}(F)$ that is on the path from $root$ to $z$.
\end{restatable}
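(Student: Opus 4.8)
The plan is to prove both directions through the follower's best-response value function under a fixed commitment, compared against the maximin value $M_F$. For a fixed pure commitment $\pi_L$, I would define the follower's optimal continuation value $g_{\pi_L}(v)$ at each node $v$ by backward induction: $g_{\pi_L}(v) = U_F(v)$ at leaves, $g_{\pi_L}(v) = \max_{w\in\child{v}} g_{\pi_L}(w)$ at follower nodes, and $g_{\pi_L}(v) = g_{\pi_L}(w)$ at a leader node whose committed child is $w$. A routine induction then shows $g_{\pi_L}(v) \ge M_F(v)$ for every node $v$ and every $\pi_L$: at leader nodes $g_{\pi_L}$ follows the committed child, whose value is at least the minimum over children, which is exactly the maximin guarantee encoded by $M_F$ at nodes in $\mathcal{P}^{-1}(F)$'s complement.

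For necessity, suppose a feasible profile $(\pi_L,\pi_F)$ with $\pi_F\in\BR(\pi_L)$ reaches $z$. Every node on the root-to-$z$ path is reached with probability one, so at each such follower node $v$ the best-response optimality forces $g_{\pi_L}(v) = U_F(z)$: the realized continuation from the reached node $v$ must be optimal, and it yields $U_F(z)$ because the path terminates at $z$. Combining this with $g_{\pi_L}(v) \ge M_F(v)$ immediately gives $U_F(z) \ge M_F(v)$ for every follower node $v$ on the path.

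For sufficiency, assume $U_F(z) \ge M_F(v)$ for every follower node $v$ on the path $v_0 = root, \dots, v_k = z$. I would construct $\pi_L$ explicitly: at leader nodes on the path, play toward $z$; inside every subtree $T_w$ hanging off the path at a follower node (that is, $w\in\child{v_j}\setminus\{v_{j+1}\}$ with $v_j\in\mathcal{P}^{-1}(F)$) commit to the follower-minimizing (minimax) strategy; elsewhere commit arbitrarily. A short induction shows that the minimax commitment makes the follower's value in such a $T_w$ equal to exactly $M_F(w)$. A backward induction along the path then establishes $g_{\pi_L}(v_j) = U_F(z)$ for all $j$: at leader nodes the on-path child's value propagates, while at a follower node $v_j$ the on-path child yields $U_F(z)$ and every off-path child $w$ yields only $M_F(w) \le M_F(v_j) \le U_F(z)$, using $M_F(v_j) = \max_{w\in\child{v_j}} M_F(w)$. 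Hence reaching $z$ is optimal for the follower, so some $\pi_F\in\BR(\pi_L)$ follows the path to $z$, giving the desired feasible profile.

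The step I expect to be the main obstacle is the sufficiency direction, where two facts must mesh. First, the per-node hypothesis $U_F(z)\ge M_F(v_j)$ must be converted into sibling-wise bounds $U_F(z)\ge M_F(w)$ for all off-path children through $M_F(v_j) = \max_{w} M_F(w)$. Second, and more delicately, the leader's minimax commitment off the path must be shown to cap the follower's deviation value at \emph{exactly} $M_F(w)$, not merely at some value below $U_F(z)$; verifying this cap is the crux, since it is precisely what renders every deviation unprofitable and lets the backward induction close with value $U_F(z)$ all the way up to the root.
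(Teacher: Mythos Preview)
Your proposal is correct and follows essentially the same approach as the paper: the same minimax-off-path commitment for sufficiency and the same subgame best-response reasoning for necessity, just packaged via the value function $g_{\pi_L}$ rather than direct manipulation of $U_F(\pi_L,\pi_F')|_v$. The paper's proof also relies on the two facts you flag as the crux---that the follower's optimal value under the leader's minimax commitment in an off-path subtree $T_w$ is exactly $M_F(w)$ (its Lemma~A.2), and that $M_F(v_j)=\max_{w\in\child{v_j}}M_F(w)$ converts the per-node hypothesis into sibling-wise bounds---so your anticipated obstacles are precisely the ones the paper resolves in the same way.
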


\subsection{Leader's Maximin Value Yields all Inducible Leaf Nodes}
\label{sec:cheat result in pure case}

In this section, we first characterize all inducible leaf nodes with pure commitment.  
% through \Cref{thm: cheat in pure commit}.
\begin{restatable}{theorem}{thmcheatinpurecommit}
\label{thm: cheat in pure commit}
Leaf node $z\in Z$ is inducible with pure commitment if and only if $U_L(z)\geq M_L(root)$.
\end{restatable}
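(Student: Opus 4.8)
The plan is to prove the two directions separately: the necessity direction reduces to an earlier proposition, while sufficiency requires constructing an explicit reported payoff function that forces the leader's optimal commitment to land on $z$.

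For necessity, suppose $z$ is inducible with pure commitment, witnessed by some reported $\tilde{U}_F$ and an SSE with pure commitment of $(T_{root},U_L,\tilde{U}_F)$ leading to $z$. I would apply \Cref{lemma: min utility in pure commit} to the game $(T_{root},U_L,\tilde{U}_F)$ to obtain $U_L(z)\ge M_L(root)$. The key observation is that $M_L(root)$, per \Cref{def: max-min value}, is built from $U_L$ and the tree alone (a max at leader nodes, a min at follower nodes) and is completely independent of the follower's payoff function; hence the value appearing here coincides with the one in the statement. This direction is therefore a one-line invocation.

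For sufficiency, assume $U_L(z)\ge M_L(root)$. I would fix a constant $C>\max_{z'\in Z}|U_L(z')|$ and define the reported function by $\tilde{U}_F(z)=C$ and $\tilde{U}_F(z')=-U_L(z')$ for every leaf $z'\ne z$. Two facts then drive the argument. \emph{First}, the leader can reach $z$: since $C$ strictly exceeds every other reported leaf value, $\tilde{U}_F(z)=C$ is at least the follower's maximin value (computed from $\tilde{U}_F$) at each follower node on the path to $z$, so \Cref{lemma: pure commit} guarantees a feasible profile $(\pi_L,\pi_F\in\BR(\pi_L))$ leading to $z$ and giving the leader utility $U_L(z)$. \emph{Second}, no commitment does strictly better. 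Fix any $\pi_L'$ and let the follower best-respond with respect to $\tilde{U}_F$. If that best response reaches $z$, the leader obtains exactly $U_L(z)$. Otherwise $z$ must be unreachable under $\pi_L'$, since reaching it would strictly improve the follower's reported payoff to the unique maximum $C$; on the reachable leaves the follower then maximizes $\tilde{U}_F=-U_L$, i.e.\ minimizes $U_L$. Consequently the leader's payoff is at most $\max_{\pi_L'}\min_{\pi_F}U_L(\pi_L',\pi_F)$, which equals $M_L(root)$ by \Cref{lemma: zero-sum in pure commit} applied to the zero-sum game $(T_{root},U_L,-U_L)$, and $M_L(root)\le U_L(z)$ by hypothesis. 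Combining the two facts, the leader's optimal commitment value is exactly $U_L(z)$, attained by the profile reaching $z$; that profile is thus an SSE with pure commitment leading to $z$, so $z$ is inducible.

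The main obstacle is exactly the second fact: ruling out that the leader exploits follower indifference off the $z$-path to steer toward some leaf with larger $U_L$. This is why $\tilde{U}_F$ is set equal to $-U_L$ (rather than a constant) on all leaves other than $z$ — it turns the follower into the leader's adversary whenever $z$ is missed, pinning any alternative leader payoff to the maximin value $M_L(root)$ and letting \Cref{lemma: zero-sum in pure commit} close the gap. The remaining points, namely verifying the strict preference for the unique top value $C$ (so that reachability of $z$ forces the follower to select it) and handling ties among optimal commitments in favor of the $z$-reaching profile, are routine but should be stated carefully.
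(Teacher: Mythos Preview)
Your proof is correct and takes a genuinely different route from the paper's. The paper proves sufficiency by \emph{structural induction} on the tree, handling $\mathcal{P}(root)=L$ and $\mathcal{P}(root)=F$ separately and building $\tilde{U}_F$ recursively: at leader roots it recurses into the subtree containing $z$ and sets $\tilde{U}_F=-U_L$ on the siblings; at follower roots it further splits on whether $z$'s child coincides with the $\arg\min M_L$ child. Your construction is a single global assignment ($\tilde{U}_F(z)=C$, $\tilde{U}_F=-U_L$ elsewhere) that works uniformly with no induction or case split. Both rest on the same zero-sum idea, but you apply it in one shot, so the argument is shorter and arguably cleaner; the paper's inductive scaffolding, on the other hand, is reused later (\Cref{thm: cheat in behavioral commit}, \Cref{thm: strong inducibility under pure}) where subtree recursion becomes genuinely necessary, so there it pays dividends. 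One small thing to make fully explicit when you write it up: your appeal to \Cref{lemma: pure commit} uses the follower's maximin value computed from $\tilde{U}_F$, not from $U_F$; this is fine since every such value is bounded above by the maximum leaf value $C=\tilde{U}_F(z)$, but say so.
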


Comparing our results with the characterization of all feasible pure strategy profiles in \Cref{lemma: pure commit}, one can find out that with the ability of exploiting his private utility information, the follower has the actual advantage when the leader only commits to pure strategies. Moreover, he gains more power through strategically reporting than committing: To induce a leaf node, the follower only needs to check the leader's utility at $root$, while to check if it can be achieved by commitment, the leader has to consider the follower's maximin values in every subgame with positive probabilities to be played. While this comparison of power is not 
that 
obvious in the behavioral commitment case, as we will show in \Cref{sec: higher utility than pure}, the follower can generally gain more  under behavioral commitment case than under pure case, via misreporting optimally. 

\begin{proof}[Proof of \Cref{thm: cheat in pure commit}]
\textbf{The necessity} is immediately from \Cref{lemma: min utility in pure commit}. A leaf node $z$ is inducible means that for some 
% follower's 
payoff function, $\tilde{U}_F$, an SSE with pure commitment of game $(T_{root}, U_L,\tilde{U}_F)$ leads to $z$. Thus $U_L(z)\geq M_L(root)$. 
% holds by \Cref{lemma: min utility in pure commit}.
We prove \textbf{the sufficiency} by structural induction over the game tree. 

\textbf{Inductive Base:} When $root \in Z$, the only leaf node in $Z$ is $root$, then $U_L(root)=M_L(root)$. Since the strategy set of $T_{root}$ is empty, $root$ is inducible with pure commitment at $T_{root}$.  
% game
% $(T_{root},U_L,U_F)$.

\textbf{Inductive Step:} When $root \in H$, suppose that leaf node $z\in Z$ satisfies $U_L(z)\geq M_L(root)$. 

\textbf{When $\mathcal{P}(root)=L$}, $ U_L(z)\geq M_L(root)=\max_{v^\prime\in \child{root}} M_L(v^\prime)$. Let $v\in\child{root}$ be $z$'s ancestor, then $U_L(z)\geq M_L(v)$. So $z$ is inducible at $T_v$
% in subgame $(T_v,U_L|_{v},U_F|_{v})$ 
by the inductive hypothesis.  
% Construct the payoff function of the follower $\tilde{U}_F|_{v}$ as follows:
Let $\tilde{U}_F|_{v}$ be a payoff function defined on $T_v$ that an SSE of subgame $(T_v,U_L|_{v},\tilde{U}_F|_{v})$ leads to $z$, and extend it to $T_{root}$ as follows:
\begin{eqnarray*}
\tilde{U}_F(z^\prime)=\left\{ \begin{array}{ll}
\tilde{U}_F|_{v}(z^\prime) & {z^\prime\in Z_v}\\
-U_L(z^\prime) &  z^\prime\in Z\setminus Z_v.
\end{array}
\right.
\end{eqnarray*}

We prove that 
there is 
an SSE with pure commitment of game $(T_{root},U_L,\tilde{U}_F)$ leading to $z$. Consider any pure commitment $\pi_L\in\Pi_L$ of the leader. If $\pi_L(root)=v$, since an SSE of 
% subgame
$(T_v,U_L|_{v},\tilde{U}_F|_{v})$ leads to $z$, the leader's utility by commitment is upper bounded by $U_L(z)$. 
The leader can simply choose the same commitment in the SSE of 
% subgame 
$(T_{v},U_L|_{v},\tilde{U}_F|_{v})$ that leads to $z$ and get $U_L(z)$. If $\pi_L(root)=v'\neq v$, then by \Cref{lemma: zero-sum in pure commit},
% the utility of
the leader's utility is upper bounded by 
\begin{equation*}
    \max_{\pi_L\in \Pi_L, \pi_F\in \BR(\pi_L)|_{{v'}}}U_L(\pi_L,\pi_F)|_{{v'}}=M_L(v')\leq U_L(z). 
\end{equation*}
% where the equation follows from \Cref{lemma: zero-sum in pure commit}.
So there is an SSE with pure commitment in 
% game 
$(T_{root},U_L,\tilde{U}_F)$ leading to $z$.

\textbf{When $\mathcal{P}(root)=F$}, let $v_1\in\child{root}$ be the ancestor of $z$ and $v_2\in\arg\min_{v'\in\child{root}}M_L(v')$. Note that $v_1$ may equal  $v_2$ and we first consider the case $v_1\neq v_2$. Construct 
% payoff function
$\tilde{U}_F$ as follows:
\begin{eqnarray}
\tilde{U}_F(z^\prime)=\left\{ \begin{array}{ll}
-M_L(v_2)+1 & {z^\prime=z};\\
-M_L(v_2)-1 & z^\prime\not\in Z_{v_2}\wedge z^\prime\neq z;\\
-U_L(z^\prime) & z^\prime\in Z_{v_2}.
\end{array}
\right.
\end{eqnarray}

For any pure commitment $\pi_L\in\Pi_L$ of the leader, 
% , the utility of the follower is lower bounded by $-M_L(v_2)$ by \Cref{lemma: zero-sum in pure commit}. 
if the follower cannot achieve $z$ under $\pi_L$, he will choose $v_2$ at $root$ as his best response, obtaining a utility of at least  $-M_L(v_2)$. Then the leader's utility is upper bounded by $M_L(v_2)\leq U_L(z)$ by \Cref{lemma: zero-sum in pure commit}. If the follower can achieve $z$ under 
% the pure commitment 
$\pi_L$, his best response is to achieve $z$ and get $-M_L(v_2)+1$. The leader then gets $U_L(z)$, which is the best she can get. Thus there is an SSE with pure commitment of 
% game 
$(T_{root},U_L,\tilde{U}_F)$ leading to $z$. 

If $v_1=v_2$, $U_L(z)\geq M_L(v_1)$ holds, then $z$ is inducible 
% in subgame $(T_{v_1}, U_L|_{{v_1}},U_F|_{{v_1}})$ 
at $T_{v_1}$ 
by the inductive hypothesis. Let $\tilde{U}_F|_{{v_1}}$ be
a 
% corresponding
% follower's 
payoff function that induces it at $T_{v_1}$. 

Define $\tilde{m}_F(v_1) = \min_{z^\prime\in Z_{v_1}}\tilde{U}_F|_{{v_1}}(z^\prime)$, 
% Construct the payoff function $\tilde{U}_F|_{{v}}$ of the follower as follows:
and extend 
% the payoff function of the follower 
$\tilde{U}_F|_{{v_1}}$ to $T_{root}$ as follows:
\begin{eqnarray}
\tilde{U}_F(z^\prime)=\left\{ \begin{array}{ll}
\tilde{U}_F|_{{v_1}}(z') & {z'\in Z_{v_1}};\\
\tilde{m}_F(v_1)-1 & z'\in Z \setminus Z_{v_1}.
\end{array}
\right.
\end{eqnarray}
For whatever strategy the leader commits to, the follower will always choose $v_1$ at $root$, and get at least $\tilde{m}_F(v_1) > \tilde{m}_F(v_1)-1$.
% ,  which is strictly greater than $\tilde{m}_F(w_1)-1$. 
Then what matters is the strategies' restrictions on $T_{v_1}$. Since an SSE in 
subgame 
% $(\pi_L,\pi_F\in \BR(\pi_L)|_{{w_1}})$ 
$(T_{v_1},U_L|_{{v_1}},\tilde{U}_F|_{{v_1}})$ leads to $z$, there is also an SSE in 
game 
$(T_{root},U_L,\tilde{U}_F)$ leading to $z$. 

This finishes the proof.
\end{proof}

We note that our proof of \Cref{thm: cheat in pure commit} is constructive. With a slight modification, we can actually construct a
% the corresponding 
payoff function in time $O(|H|+|Z|)$ for any inducible leaf node.

\begin{corollary}\label{coro:construct U_F in pure}
For any inducible leaf node $z\in Z$, we can construct a
% the corresponding 
% follower's 
payoff function $\tilde{U}_F$ that induces $z$ in time $O(|H|+|Z|)$.
\end{corollary}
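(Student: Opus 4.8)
The plan is to observe that the induction in the proof of \Cref{thm: cheat in pure commit} already produces $\tilde U_F$ explicitly, and to argue that, after one round of precomputation, the construction can be carried out by a single traversal of the path from $root$ to $z$ together with the off-path subtrees hanging off it, touching each node a constant number of times. Concretely, I would first run one postorder pass over $T_{root}$ to compute $M_L(v)$ for every $v\in H\cup Z$ straight from \Cref{def: max-min value}; this costs $O(|H|+|Z|)$ and makes every maximin value that the recursion consults — in particular the $M_L(v')$ for the children of an $F$-node, used to select $v_2\in\arg\min_{v'}M_L(v')$ — available in $O(1)$.

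Next I would identify the path $root=u_0,u_1,\dots,u_k=z$ and walk along it, implementing the three cases of the inductive step at each $u_j$. The key structural observation is that the recursion descends strictly along this path: in the $\mathcal P(root)=L$ case, and in the $\mathcal P(root)=F$ case with $v_1=v_2$, it recurses into the single child that is the next path node while assigning fixed values ($-U_L(\cdot)$, respectively a threshold) to the leaves of the off-path children; in the $\mathcal P(root)=F$ case with $v_1\neq v_2$ (and in the base case $root=z$) the recursion terminates and assigns all remaining leaves of the current subtree. Consequently each leaf $z'\in Z$ receives its value exactly once, namely at the unique $u_j$ where the subtree containing $z'$ branches off the path (or at $z$ itself). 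Since the off-path subtrees hung at $u_0,\dots,u_k$ are pairwise disjoint and their leaves together with $z$ exhaust $Z$, traversing all of them to set the values costs $O(|H|+|Z|)$ in total, and selecting $v_2$ at each $F$-node costs $O(|\child{u_j}|)$, which sums to $O(|H|+|Z|)$ as well; hence the whole procedure is linear.

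The only delicate point, which I expect to be the main obstacle to a naive linear bound, is the threshold $\tilde m_F(v_1)-1$ in the $\mathcal P(root)=F$, $v_1=v_2$ case: it is defined via the minimum of the recursively built $\tilde U_F|_{v_1}$ and is therefore a forward reference under a top-down pass. I would resolve it in one of two equivalent ways. Either I process the path bottom-up, maintaining the running minimum of the values already assigned inside the current on-path subtree, so that $\tilde m_F(u_{j+1})$ is available in $O(1)$ when I reach the $F$-node $u_j$; or — the ``slight modification'' — I replace every such threshold by a single global constant $c$ chosen strictly below every other value the construction uses (all of the form $-U_L(z')$, $-M_L(v_2)\pm 1$, or the arbitrary base-case value), computed once in $O(|Z|)$. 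One checks that $c<\tilde m_F(v_1)$ at every such level, so the follower still strictly prefers to enter $v_1$ and the correctness argument of \Cref{thm: cheat in pure commit} goes through verbatim, while the $-U_L(\cdot)$ assignments on the off-path children of $L$-nodes — which are exactly what invokes \Cref{lemma: zero-sum in pure commit} — are left untouched. Either variant keeps the total cost at $O(|H|+|Z|)$.
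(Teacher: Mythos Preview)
Your plan matches the paper's (one-line) justification: unwind the constructive induction from the proof of \Cref{thm: cheat in pure commit} after a single postorder pass computing all $M_L(v)$. The structural observation that the recursion descends strictly along the $root$-to-$z$ path, terminating at the first $F$-node with $v_1\neq v_2$ (or at $z$), and that the off-path subtrees at successive levels are disjoint and together with $T_{u_t}$ cover all of $Z$, is exactly right and yields the linear bound.

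Your first resolution of the forward reference to $\tilde m_F(v_1)$ --- walking the path bottom-up while maintaining the running minimum of the values already written into the current on-path subtree --- is correct and is the clean way to do it. The second resolution, a single global constant $c$, has a small gap. If two $F$-levels with $v_1=v_2$ are nested along the path, the inner one writes $c$ into some leaves of $T_{v_1}$, and since $c$ was chosen strictly below every \emph{other} value, this forces $\tilde m_F(v_1)=c$; the check ``$c<\tilde m_F(v_1)$'' then fails at the outer level. The follower is merely indifferent at that outer $F$-node, tie-breaking is in favor of the \emph{leader} (not in favor of entering $v_1$), and the leader may well prefer an off-path leaf, so the argument that every SSE passes through $v_1$ breaks down. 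The fix is immediate --- use, say, $c-(\text{number of }F,\,v_1{=}v_2\text{ levels below})$ so the thresholds strictly decrease as you move toward the root, or simply keep your bottom-up variant --- and the overall $O(|H|+|Z|)$ bound stands.
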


\subsection{Algorithm for Optimally Reporting}
\label{sec: algorithms under pure}

With the 
% help of 
characterization of all inducible leaf nodes (\Cref{thm: cheat in pure commit}), we can compute an optimal inducible leaf node and its 
% corresponding 
follower's payoff function in linear time\footnote{
% Since the basic operations in our algorithms deal with constant times of bits of each input value, 
In this paper, we will use random access machine model to determine the algorithms' time complexities. }: compute $M_L(root)$ and traverse all leaf nodes that satisfy conditions in \Cref{thm: cheat in pure commit}. 
\begin{restatable}{theorem}{thmalgorithmforpurecase}
\label{thm: algorithm for pure case}
Computing an inducible leaf node $z^*$ and a follower's payoff function $\tilde{U}_F^*$ such that 
\begin{itemize}
\item a 
% corresponding
strategy profile $(\pi_L,\pi_F)$ of $z^*$ is an SSE with pure commitment of game $(T_{root},U_L,\tilde{U}_F^*)$;
\item $U_F(z^*)\geq U_F(z)$ for all inducible leaf node $z$;
\end{itemize}
can be solved in time $\Theta(|H|+|Z|)$.
\end{restatable}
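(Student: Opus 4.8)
The plan is to prove the matching upper and lower bounds separately, since the statement asserts running time $\Theta(|H|+|Z|)$. The upper bound should fall out almost immediately from the characterization in \Cref{thm: cheat in pure commit} together with the constructive \Cref{coro:construct U_F in pure}; the genuine work is in the lower bound, which I expect to be the main obstacle.

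For the upper bound I would describe a three-pass algorithm and bound each pass. First, compute $M_L(v)$ for every $v\in H\cup Z$ by a single bottom-up (post-order) traversal, exactly following \Cref{def: max-min value}: set $M_L(z)=U_L(z)$ at leaves, and at an internal node combine the children's values by $\max$ if $\mathcal{P}(v)=L$ and by $\min$ if $\mathcal{P}(v)=F$. Each node is handled once with work proportional to its number of children, so this pass runs in $O(|H|+|Z|)$ and in particular returns $M_L(root)$. Second, invoke \Cref{thm: cheat in pure commit}: the inducible leaves are exactly those with $U_L(z)\ge M_L(root)$, so one linear scan over $Z$ computes $z^*\in\arg\max\{\,U_F(z):U_L(z)\ge M_L(root)\,\}$ in $O(|Z|)$ time. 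This set is nonempty, since the maximin recursion pushes an actual leaf value up to the root (the leaf obtained by going to an $\arg\max$ child at each $L$-node and an $\arg\min$ child at each $F$-node realizes $U_L=M_L(root)$), so at least one inducible leaf always exists. Third, feed $z^*$ into the construction of \Cref{coro:construct U_F in pure} to obtain a witnessing $\tilde U_F^*$ in $O(|H|+|Z|)$. The two required guarantees on the output are then inherited verbatim from these prior results, and the total cost is $O(|H|+|Z|)$.

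For the lower bound I would argue $\Omega(|H|+|Z|)$ in the RAM model. The cleanest piece is $\Omega(|Z|)$: on the family of instances with $U_L\equiv c$ constant on all leaves we have $M_L(root)=c$, so every leaf is inducible and $z^*=\arg\max_{z\in Z}U_F(z)$; since determining the maximizer of $|Z|$ otherwise-arbitrary numbers requires $\Omega(|Z|)$ reads, no correct algorithm can avoid this cost. Moreover, writing down the witness $\tilde U_F^*$, a function on all of $Z$, already forces $\Omega(|Z|)$ output operations. For the dependence on $|H|$, I would give an adversary argument: if an algorithm never examines the payoffs in some subtree, the adversary can perturb a leaf value there to alter $M_L(root)$ (hence the set of inducible leaves) or to make an unread inducible leaf the unique optimizer, contradicting correctness; combined with the observation that every internal node is an ancestor of some leaf whose payoffs must be read, this forces inspection of the entire tree and yields $\Omega(|H|+|Z|)$. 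The hard part will be making this last step fully rigorous in the random-access model, namely ruling out that clever address arithmetic lets an algorithm skip internal nodes while still certifying inducibility and producing a valid $\tilde U_F^*$; I would handle this by fixing a concrete input encoding of $T_{root}$ and showing that on a suitable adversarial family a constant fraction of the encoding must be probed. The upper-bound half, by contrast, is essentially bookkeeping layered on \Cref{thm: cheat in pure commit} and \Cref{coro:construct U_F in pure}.
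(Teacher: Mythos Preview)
Your upper-bound argument is essentially identical to the paper's: compute $M_L$ by a post-order traversal, filter leaves by the criterion of \Cref{thm: cheat in pure commit}, pick the $U_F$-maximizer among them, and invoke \Cref{coro:construct U_F in pure} to build $\tilde U_F^*$. This is exactly Algorithm~\ref{alg:OptimalCheatingPure} and its analysis in the paper.

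Where you diverge is on the lower bound. The paper does not give a separate $\Omega(|H|+|Z|)$ argument for the \emph{problem}; it simply states that the algorithm's running time is $\Theta(|H|+|Z|)$, reading the $\Theta$ as a description of that particular algorithm rather than as a lower bound against all algorithms. Your adversary reasoning is therefore more than the paper offers. Your $\Omega(|Z|)$ piece (output size, or max-finding over arbitrary $U_F$-values when $U_L$ is constant) is clean. The $\Omega(|H|)$ piece you correctly flag as delicate is not forced by payoffs alone, since those live only at leaves; it really depends on the input encoding of the tree structure. Note also that if every internal node has at least two children then $|H|<|Z|$, so $\Theta(|H|+|Z|)=\Theta(|Z|)$ and your $\Omega(|Z|)$ argument already suffices; in the general case one would instead argue that the adjacency data of $T_{root}$ has size $\Theta(|H|+|Z|)$ and must be read to determine $M_L(root)$. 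Either way, this is bookkeeping the paper simply elides.
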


\section{Optimally Reporting under Behavioral Commitment}
\label{sec:behavioral case}

This section presents results of the follower's strategic reporting when the leader commits to behavioral strategies. 
% The characterization of all inducible distributions on leaf nodes under behavioral commitment is more complicated than that under pure commitment.
Due to the game tree structure of extensive-form games, the choice of strategies at a node is closely related to
% the utilities players can get 
players' utilities
in the subgames. Thus maximin value on $root$ is not enough for characterizing all inducible distributions. We show the characterization in \Cref{thm: cheat in behavioral commit}. Its constructive proof provides an efficient way to
find a
% construct a 
% corresponding 
follower's payoff function for an inducible distribution.

We note that the proof of \Cref{thm: cheat in pure commit} cannot be directly applied to the behavioral case. 
When the leader acts at the root node, different from pure commitment case, $U_L(\delta_L,\pi_F)$ is a convex combination of $U_L(\delta_L,\pi_F)|_{v}$s for $v\in\child{root}$. There is no reason for the leader to put positive probabilities on those nodes that lead to less utilities. Thus the conditions no longer hold for the case $\mathcal{P}(root)=L$, 
and neither does the inductive analysis when $\mathcal{P}(root)=F$.

We use two examples 
% shown 
in Figure \ref{example: why the pure result can not imply} to further illustrate the necessities of additional conditions in \Cref{thm: cheat in behavioral commit}. In each example, a 
% corresponding 
strategy profile of a distribution is presented with red arrows. 
% In both examples, 
Both distributions 
% in both examples 
satisfy conditions in \Cref{thm: cheat in pure commit} but are not inducible in the behavioral case.

\begin{figure}[!htbp]
\captionsetup[subfigure]{}
% \centering
\subcaptionbox{Example 1 for the necessities of Condition (1)(a). \label{example: necessities of behavioral condition-1}}[.49\textwidth]{%
\centering
\begin{tikzpicture}[scale=0.95, every node/.style={transform shape}]
\centering
\node[above left] at (3,1.6) {$L$};
\node[above left] at (1.5,0.8) {$F_1$};
\node[above right] at (4.5,0.8) {$F_2$};
\node[above left] at (0.75,0) {$z_1$};
\node[above right] at (2.25,0) {$z_2$};
\node[above left] at (3.75,0) {$z_3$};
\node[above right] at (5.25,0) {$z_4$};
\node at (0,-0.3) {$U_L$};
% \node at (0,-1) {$U_F$};
\node at (0.75,-0.3) {$8$};
\node at (2.25,-0.3) {$4$};
\node at (3.75,-0.3) {$2$};
\node at (5.25,-0.3) {$1$};
% \node at (0.75,-1) {$3$};
% \node at (2.25,-1) {$-2$};
% \node at (3.75,-1) {$1$};
% \node at (5.25,-1) {$1$};
% \node[below] at (1,0) {$8$};
% \node[below] at (0,0) {$0$};
\filldraw (3,1.6) circle (.08)
          (1.5,0.8) circle (.08)
          (4.5,0.8) circle (.08)
          (0.75,0) circle (.08)
          (2.25,0) circle (.08)
          (3.75,0) circle (.08)
          (5.25,0) circle (.08);
\draw[thick] (3,1.6)--(1.5,0.8)
      (3,1.6)--(4.5,0.8)
      (1.5,0.8)--(0.75,0)
      (1.5,0.8)--(2.25,0)
      (4.5,0.8)--(3.75,0)
      (4.5,0.8)--(5.25,0);
\draw[->, thick, densely dashed, red] (3.2,1.58)--(4.4,0.94);
\draw[->, thick, densely dashed, red] (2.8,1.58)--(1.6,0.94);
\draw[->, thick, red] (4.355,0.75)--(3.755,0.11);
\draw[->, thick, red] (1.355,0.75)--(0.755,0.11);
\node[left,red] at (4.3, 1.33) {$\frac{1}{2}$};
\node[right, red] at (1.75,1.33) {$\frac{1}{2}$};
\end{tikzpicture}}
\hfill
\subcaptionbox{Example 2 for the necessities of condition (2)(b). \label{example: necessities of behavioral condition-2}}[.49\textwidth]{
\begin{tikzpicture}[scale=0.95, every node/.style={transform shape}]
\centering
\node[above left] at (2.75,2.4) {$F_1$};
\node[above left] at (1.35,1.6) {$L_1$};
\node[above right] at (4.35,1.6) {$L_2$};
\node[above left] at (3.20,0.8) {$F_2$};
\node[above right] at (5.5,0.8) {$F_3$};
\node[above left] at (0.84,0.8) {$z_1$};
\node[above right] at (1.77,0.8) {$z_2$};
\node[above left] at (2.68,0) {$z_3$};
\node[above right] at (3.71,0) {$z_4$};
\node[above left] at (4.98,0) {$z_5$};
\node[above right] at (6.01,0) {$z_6$};
\node at (0.35,0.5) {$U_L$};
\node at (0.8,0.5) {$8$};
\node at (1.9,0.5) {$4$};
\node at (2.65,-0.3) {$8$};
\node at (3.75,-0.3) {$4$};
\node at (4.95,-0.3) {$2$};
\node at (6.05,-0.3) {$1$};
% \node at (3.75,-0.5) {$2$};
% \node at (5.25,-0.5) {$1$};
% \node at (0,-1) {$U_F$};
% \node at (0.75,-1) {$3$};
% \node at (2.25,-1) {$-2$};
% \node at (3.75,-1) {$1$};
% \node at (5.25,-1) {$1$};
% \node[below] at (1,0) {$8$};
% \node[below] at (0,0) {$0$};
\filldraw (2.75,2.4) circle (.08)
          (1.35,1.6) circle (.08)
          (4.35,1.6) circle (.08)
          (0.8,0.8) circle (.08)
          (1.9,0.8) circle (.08)
          (3.2,0.8) circle (.08)
          (5.5,0.8) circle (.08)
          (2.65,0) circle (.08)
          (3.75,0) circle (.08)
          (4.95,0) circle (.08)
          (6.05,0) circle (.08);
\draw[thick] (2.75,2.4)--(1.35,1.6)
      (2.75,2.4)--(4.35,1.6)
      (1.35,1.6)--(0.8,0.8)
      (1.35,1.6)--(1.9,0.8)
      (4.35,1.6)--(3.2,0.8)
      (4.35,1.6)--(5.5,0.8)
      (3.20,0.8)--(2.65,0)
      (3.20,0.8)--(3.75,0)
      (5.5,0.8)--(4.95,0)
      (5.5,0.8)--(6.05,0);
\draw[->, thick, red] (3.01,2.375)--(4.26,1.75);
\draw[->, thick, densely dashed, red] (4.1, 1.52)--(3.28,0.95);
\draw[->, thick, densely dashed, red] (4.58, 1.52)--(5.42,0.95);
\draw[->, thick, red] (3.04,0.7)--(2.66, 0.15);
\draw[->, thick, red] (5.34,0.7)--(4.96, 0.15);
\node[left,red] at (3.69, 1.31) {$\frac{1}{2}$};
\node[right, red] at (5.0,1.31) {$\frac{1}{2}$};
\end{tikzpicture}}

\caption{Two examples illustrating the necessities of additional conditions in \Cref{thm: cheat in behavioral commit}. Red arrows in each subfigure represent a 
% corresponding 
strategy profile of a distribution $p$ that satisfies the conditions in \Cref{thm: cheat in pure commit} but is not inducible with behavioral commitment. That is, $U_L(p)\geq M_L(root)$.  
As for the distribution and the game shown in Fig.~\ref{example: necessities of behavioral condition-1}, $p(z_1)=p(z_3)=\frac{1}{2}$. When the leader acts at $root$, going left with probability $1$ is always a dominant strategy. There is no need for her to commit to strategies which are mixed at $root$, although she can get utilities greater than $M_L(root)=4$.
In Fig.~\ref{example: necessities of behavioral condition-2}, $p(z_3)=p(z_5)=\frac{1}{2}$. 
For any payoff function $\tilde{U}_F$ that makes the red strategy profile feasible, by \Cref{lemma: feasible}, going to $z_3$ at $F_2$ is the follower's best response at subgame $T_{F_2}$. Then by going left at $L_1$ and $L_2$, the leader can get $8>5$, 
and the red strategy profile will never be an SSE no matter the follower's reported payoff function. }
\label{example: why the pure result can not imply}
\end{figure}
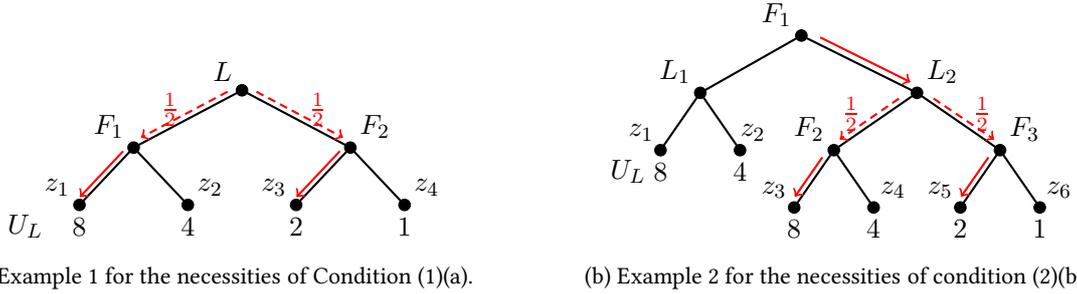

We 
% prove that there is 
provide
a polynomial-time algorithm to find an optimal inducible distribution with behavioral commitment in \Cref{sec: algorithms under behavioral}. Actually, while the characterization conditions are complicated, 
% the proof of 
\Cref{thm: cheat in behavioral commit} helps us find optimal inducible distributions with succinct structure, which we call ``Y-shape'' as defined in \Cref{def:y-shape inducible}.
% of \Cref{sec: simple inducible strategy-behavioral}.
Furthermore, we show that the follower can gain no less utility by strategic reporting with behavioral commitment than with pure commitment in \Cref{sec: higher utility than pure}.

\subsection{Additional Notations}
Given distribution $p\in\Delta(Z)$
, define $\operatorname{Supp}(p)\coloneqq\{z\in Z: p(z)>0\}$. 
% and $|p|\coloneqq |\operatorname{Supp}(p)|$. 
% , for the sake of our ``Y-shape'' definition (\Cref{def:y-shape inducible}). 
% And 
With a slight abuse of notation,
% Abusing notation slightly, 
we recursively extent it onto tree $T_{root}$ as follows, 
\begin{eqnarray}
p(v)=\left\{ \begin{array}{ll}
p(v),&{v\in Z};\\
\sum_{w\in \mathrm{Child}(v)} p(w),&v\in H. 
\end{array}
\right.
\end{eqnarray}
For each $v\in H\cup Z$, $p(v)$ denotes the probability of reaching $v$ from $root$ to realize $p$ on $Z$.

For the incoming inductive proofs, we define the restriction of $p$ on subtree $T_{v}$, for $v$ with $p(v)>0$, to be $p|_{v}\in \Delta(Z_v)$ and  $p|_{v}(z)\coloneqq p(z)/p(v), \forall z\in Z_v$. 
% Note that 
When $v=root$, we simply use $p$ instead of $p|_{{root}}$. The notions of inducibility 
and utility on subgame $(T_v,U_L|_{v},U_F|_{v})$ are defined similarly.
% We will simply say $p|_v$ is inducible at $T_v$ when there is no confusion. 

For a distribution $p|_{v} \in \Delta(Z_v)$ and $w\in H_v$, we use
\begin{equation*}
\mathrm{Supp}(p|_{v},w)\coloneqq \left\{x\in\child{w}:p|_{v}(x)>0\right\}
\end{equation*}
to denote the set of $w$'s children that are reached with positive probabilities from $v$ to realize $p|_{v}$.

For a strategy profile $(\delta_L,\pi_F)|_{v}$ on subtree $T_v$, we use $p_{(\delta_L,\pi_F)}|_v$ to denote the its 
% probability
distribution on $Z_v$. Similarly, for $w \in H_v$,
we use
\begin{equation*}
        \operatorname{Supp}((\delta_L,\pi_F)|_{v},w)\coloneqq \left\{ x\in \child{w}: p_{(\delta_L,\pi_F)}|_{v}(x) > 0\right\}
    \end{equation*}
to denote the set of $w$'s children that are reached with positive probabilities from $v$ under $(\delta_L,\pi_F)|_{v}$. 
% that $v$ puts non-zero probability on choosing $w$ under 

\subsection{Relating the Maximin Value with Behavioral Commitments}
\label{sec: properties for behavioral case}

We extend results on properties for pure commitments in \Cref{sec: properties for pure case}  to behavioral commitments.

\begin{proposition}\label{lemma: min value in behavioral commit}
If  strategy profile 
$(\delta_L,\pi_F)$ is an SSE with behavioral commitment of game $(T_{root}, U_L,U_F)$, then $U_L(\delta_L,\pi_F)\geq M_L(root)$ and $U_F(\delta_L,\pi_F)\geq M_F(root)$. 
\end{proposition}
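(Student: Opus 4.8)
The plan is to establish the two inequalities separately, since the leader's bound follows almost immediately from the pure-commitment theory already developed, whereas the follower's bound genuinely requires the tree recursion.

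For the leader, I would argue by monotonicity of the commitment set. Because every pure strategy is also a behavioral strategy, we have $\Pi_L\subseteq\Delta_L$, so the behavioral SSE value $\max_{\delta_L'\in\Delta_L}\max_{\pi_F'\in\BR(\delta_L')}U_L(\delta_L',\pi_F')$ is at least the pure SSE value $\max_{\pi_L'\in\Pi_L}\max_{\pi_F'\in\BR(\pi_L')}U_L(\pi_L',\pi_F')$. A pure-commitment SSE exists and leads to some leaf, and by \Cref{lemma: min utility in pure commit} the leader's utility there is at least $M_L(root)$; hence so is the behavioral SSE value, which is exactly $U_L(\delta_L,\pi_F)$ for the given profile. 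This yields $U_L(\delta_L,\pi_F)\ge M_L(root)$ with no new work.

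For the follower, I would prove the stronger statement that against \emph{any} behavioral commitment the follower can guarantee $M_F(root)$, namely $\max_{\pi_F\in\Pi_F}U_F(\delta_L,\pi_F)\ge M_F(root)$ for every $\delta_L\in\Delta_L$, by structural induction on $T_{root}$. The base case $root\in Z$ is immediate, as both sides equal $U_F(root)$. For the inductive step the key observation, valid precisely because the game has perfect information and the subtrees rooted at the children of $root$ are disjoint, is that the follower may optimize independently in each subtree, so his best-response value decomposes: when $\mathcal{P}(root)=F$ it equals $\max_{w\in\child{root}}\max_{\pi_F|_{w}}U_F(\delta_L|_{w},\pi_F|_{w})$, and when $\mathcal{P}(root)=L$ it equals $\sum_{w\in\child{root}}q_w\max_{\pi_F|_{w}}U_F(\delta_L|_{w},\pi_F|_{w})$, where $q_w$ is the probability $\delta_L$ places on action $w$ at $root$. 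Applying the inductive hypothesis in each subtree bounds every inner maximum below by $M_F(w)$; taking the maximum in the follower case gives $\max_w M_F(w)=M_F(root)$, while forming the convex combination in the leader case gives $\sum_w q_w M_F(w)\ge\min_w M_F(w)=M_F(root)$, where both final equalities are exactly the recursive clauses of \Cref{def: max-min value}. Since the SSE follower strategy satisfies $\pi_F\in\BR(\delta_L)$, we conclude $U_F(\delta_L,\pi_F)=\max_{\pi_F'}U_F(\delta_L,\pi_F')\ge M_F(root)$.

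I expect the main obstacle to be the clean justification of the decomposition of the best-response value over subtrees in the leader-node case: one must argue that a globally optimal pure best response to the behavioral commitment $\delta_L$ can be assembled from independently optimal responses on each child subtree, which is exactly where perfect information is essential and where a careless argument could smuggle in a spurious dependence across subtrees. Once that decomposition is in place, the maximization and convex-combination steps are routine, and matching them to the $\max$/$\min$ clauses of the definition of $M_F$ closes the induction.
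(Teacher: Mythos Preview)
Your proposal is correct and aligns with the paper's own treatment, which simply notes that the proof of \Cref{lemma: min utility in pure commit} carries over because the follower's optimal response (whether maximizing his own utility or minimizing the leader's) has the same recursive form regardless of whether the leader's commitment is pure or behavioral. Your leader bound via the inclusion $\Pi_L\subseteq\Delta_L$ and a black-box appeal to \Cref{lemma: min utility in pure commit} is a slightly cleaner packaging of the same monotonicity, and your explicit structural induction for the follower makes precise exactly the subtree decomposition the paper leaves implicit.
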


The proof of  \Cref{lemma: min value in behavioral commit} is similar to that of \Cref{lemma: min utility in pure commit}. Just notice that the forms of the follower's strategies minimizing the leader's utility or maximizing his own utility, given the leader's strategy, do not depend on whether the leader's strategy is pure or behavioral: given the utilities players can get in the subgames, choosing the node with the leader's lowest utility (the follower's highest utility, respectively) is always the correct action. 

Using similar proof to the one of \Cref{lemma: zero-sum in pure commit}, we have the following \Cref{lemma: zero-sum in behavioral commit}, which shows that the leader gains no advantage in committing to behavioral strategies in zero-sum games.

\begin{proposition}\label{lemma: zero-sum in behavioral commit}
Given a zero-sum extensive-form game $(T_{root},U_L,U_F=-U_L)$, we have 
\begin{equation*}
    \mathop{\max}_{\delta_L\in \Delta_L,\pi_F\in \BR(\delta_L)}U_L(\delta_L,\pi_F)=M_L(root).
\end{equation*}
\end{proposition}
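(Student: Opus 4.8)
The plan is to prove the equality through matching inequalities, using the fact that in a zero-sum game the follower's best response is nothing but minimizing the leader's payoff. Since $U_F=-U_L$, we have $\BR(\delta_L)=\argmax_{\pi_F}U_F(\delta_L,\pi_F)=\argmin_{\pi_F}U_L(\delta_L,\pi_F)$, and every best response yields the same leader payoff (equal $U_F$ forces equal $U_L$, so there is no genuine tie-breaking). Hence the left-hand side equals the behavioral maximin value $\max_{\delta_L\in\Delta_L}\min_{\pi_F\in\Pi_F}U_L(\delta_L,\pi_F)$. For the lower bound $\geq M_L(root)$ I would simply note that pure strategies are a special case of behavioral strategies, so restricting the outer maximization to $\Pi_L$ can only decrease the value; by \Cref{lemma: zero-sum in pure commit} pure commitment already attains $M_L(root)$. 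Equivalently, one can invoke \Cref{lemma: min value in behavioral commit} on the resulting SSE.

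The substantive direction is the upper bound $\leq M_L(root)$: for every behavioral commitment $\delta_L$, the follower can respond so as to keep $U_L(\delta_L,\pi_F)\leq M_L(root)$. I would prove by structural induction on $T_v$ the statement that, for every leader behavioral strategy $\delta_L|_v$ on $T_v$, the follower has a pure response with $U_L(\delta_L,\pi_F)|_v\leq M_L(v)$. The base case $v\in Z$ is immediate, both sides being $U_L(v)$. For $v\in\mathcal{P}^{-1}(L)$, the leader mixes over $\child{v}$ and $U_L(\delta_L,\pi_F)|_v=\sum_{w\in\child{v}}\delta_L(v)(w)\,U_L(\delta_L,\pi_F)|_w$; combining the responses obtained from the induction hypothesis in each subtree $T_w$, each term is at most $M_L(w)\leq\max_{w'}M_L(w')=M_L(v)$, so the convex combination is at most $M_L(v)$. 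For $v\in\mathcal{P}^{-1}(F)$, the follower chooses the child $w^\star\in\argmin_{w}M_L(w)$ and then responds inside $T_{w^\star}$ by the induction hypothesis, giving value at most $M_L(w^\star)=M_L(v)$. Taking $v=root$ yields the bound.

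The main obstacle, and the only real departure from the pure-commitment argument of \Cref{lemma: zero-sum in pure commit}, is the leader-node case where the leader may randomize. The point to get right is that the follower's minimization decomposes across the subtrees rooted at the children of $v$ --- a consequence of perfect information together with $\BR$ being pure minimization in the zero-sum case --- so the follower can attain the subtree bounds $M_L(w)$ simultaneously. Then the convexity of $U_L(\delta_L,\pi_F)|_v$ in the leader's mixing weights, bounded by $\max_w M_L(w)=M_L(v)$, shows that randomizing never lets the leader exceed what pure play achieves. This is precisely the classical fact that mixing gives no advantage in a perfect-information zero-sum game, here recovered through the recursion defining $M_L$.
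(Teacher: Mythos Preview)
Your proposal is correct but takes a different route for the upper bound than the paper does. The paper's argument (literally the same as for \Cref{lemma: zero-sum in pure commit}, just with $\delta_L$ in place of $\pi_L$) gets \emph{both} inequalities in one stroke from \Cref{lemma: min value in behavioral commit}: at an SSE $(\delta_L^*,\pi_F^*)$ we have $U_L(\delta_L^*,\pi_F^*)\geq M_L(root)$ and $U_F(\delta_L^*,\pi_F^*)\geq M_F(root)$; since the game is zero-sum, $M_F(root)=-M_L(root)$ and $U_F=-U_L$, so the second inequality rewrites as $U_L(\delta_L^*,\pi_F^*)\leq M_L(root)$. No separate induction is needed. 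Your structural induction is a perfectly valid alternative and has the mild advantage of being self-contained (it directly exhibits the follower response achieving the bound, rather than relying on the black-box \Cref{lemma: min value in behavioral commit}), but the paper's symmetry trick is shorter and is the intended proof here.
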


We also note the following simple fact:

\begin{restatable}{fact}{lemmacharacterizefeasibledistributions}
\label{lemma: characterize feasible distributions}
% Distribution 
$p\in \Delta(Z)$ is realizable if and only if $|\operatorname{Supp}(p,v)|=1$, for every $ v \in \mathcal{P}^{-1}(F)$ with $p(v)>0$. 
\end{restatable}

\subsection{Characterizing all Inducible Distributions by Subtree Recursion}
\label{sec: cheat in behavioral case}

The characterization of all inducible distributions under behavioral commitment is more complicated than that under pure commitment. 
For $root \in Z$, the game tree only has one terminal node and the only distribution $p$ that $p(root)=1$ is inducible. For general extensive-form games, we have the following theorem.

\begin{restatable}{theorem}{thmcheatinbehavioralcommit}
\label{thm: cheat in behavioral commit}
A realizable distribution on leaf nodes $p\in \Delta(Z)$ is inducible if and only if 
\begin{enumerate}
    \item if $\mathcal{P}(root)=L$, all of the following conditions are met:
    \begin{enumerate}
        \item $U_L(p)|_{v}$s are the same at $T_v$ for all $v\in \operatorname{Supp}(p,root)
        $;
        \item $U_L(p)\geq M_L(root)$;
        \item $p|_{v}$ is inducible at 
        $T_v$, 
        % subgame $(T_{v},U_L|_{v},U_F|_{v})$, 
        for all $v \in \operatorname{Supp}(p,root)$. 
    \end{enumerate}
    \item if $\mathcal{P}(root)=F$, 
     let $\operatorname{Supp}(p,root)=\{v\}$, then at least one of following two conditions are met: 
    \begin{enumerate}
        \item $U_L(p)\geq \min_{v^\prime \in \child{root}, v^\prime\ne v} M_L(v^\prime)$;
        \item $p|_{v}$ is inducible at 
        $T_v$.
        % subgame $(T_{v},U_L|_{v},U_F|_{v})$. 
    \end{enumerate}
\end{enumerate}
\end{restatable}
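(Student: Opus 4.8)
The plan is to prove both directions by structural induction on $T_{root}$, splitting the inductive step according to whether $\mathcal{P}(root)=L$ or $\mathcal{P}(root)=F$; the occurrences of ``$p|_v$ inducible at $T_v$'' in the conditions are handled through the inductive hypothesis applied to the proper subtrees. The base case $root\in Z$ is trivial. The recurring tools are \Cref{lemma: min value in behavioral commit} (every SSE gives the leader at least $M_L(root)$), \Cref{lemma: zero-sum in behavioral commit} (when I report $\tilde U_F=-U_L$ on a subtree, the leader's best value there is exactly $M_L$, which lets me build credible threats and bound off-path utilities), and \Cref{lemma: characterize feasible distributions} (so that $\lvert\operatorname{Supp}(p,root)\rvert=1$ whenever $\mathcal{P}(root)=F$). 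I will also use repeatedly that a follower best response decomposes across the root's children and that, when the leader owns the root, her utility is the $p(v)$-weighted average of the subtree utilities $U_L(p)|_v$.

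For $\mathcal{P}(root)=L$, necessity is direct: (b) is \Cref{lemma: min value in behavioral commit}; (a) holds because if the $U_L(p)|_v$ were unequal the leader could move all probability onto the best child without altering the follower's response inside that subtree, strictly beating $U_L(p)$ and contradicting optimality of the SSE; and (c) follows by restricting the SSE to $T_v$ and observing that, since (a) makes every subtree value equal to $U_L(p)$, the ``concentrate all weight on $v$'' deviation forces the subtree's SSE value to equal $U_L(p)$, so the restriction is itself a leader-optimal, leader-favorable profile realizing $p|_v$. For sufficiency I would assemble $\tilde U_F$ from subtree witnesses given by (c) on the support children and from $\tilde U_F=-U_L$ on every off-support child; the leader who mixes by $p(v)$ and copies the inducing commitments realizes $p$ at value $U_L(p)$, any move onto an off-support child earns at most $M_L(v')\le M_L(root)\le U_L(p)$ by \Cref{lemma: zero-sum in behavioral commit} and (b), and inside a support subtree she cannot exceed the SSE value $U_L(p)|_v=U_L(p)$, so the profile is an SSE.

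For $\mathcal{P}(root)=F$ write $\operatorname{Supp}(p,root)=\{v\}$. For necessity I would fix a witness $\tilde U_F$ and compare $U_L(p)$ with the SSE value $W$ of $(T_v,U_L|_v,\tilde U_F|_v)$; since the global SSE already attains $U_L(p)$ inside $T_v$ we have $W\ge U_L(p)$. If $W=U_L(p)$, the restriction of the SSE to $T_v$ is leader-optimal and leader-favorable, hence an SSE realizing $p|_v$, giving (b). If $W>U_L(p)$ and (a) failed, i.e.\ $U_L(p)<M_L(v')$ for every $v'\ne v$, the leader would deviate to a $W$-achieving commitment on $T_v$ and to zero-sum-optimal play elsewhere; whichever child the follower then chooses the leader secures strictly more than $U_L(p)$ (either $W$ at $v$ or at least $M_L(v')$), contradicting the SSE. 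Thus (a) or (b) holds. Sufficiency under (b) is easy: depress the follower's payoff on every off-support child below his minimax value in $T_v$ so that he is pinned to $v$ for every leader commitment, then plant a subtree witness of $p|_v$ inside $T_v$.

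The main obstacle is sufficiency under (a) when $p|_v$ is not inducible at $T_v$, where the threat must do all the work. Let $v''\in\argmin_{v'\ne v}M_L(v')$, and report $\tilde U_F=-U_L$ (shifted by a constant) on $T_{v''}$, so that fleeing to $v''$ pins the leader to $M_L(v'')\le U_L(p)$ while fixing the follower's flight value at a threshold $\theta$. I would then calibrate $\tilde U_F$ on $T_v$ so that realizing $p|_v$ makes the follower exactly indifferent (value $\theta$ at $v$) and leader-favorable tie-breaking keeps him at $v$, while every leader deviation on $T_v$ that would raise her utility above $U_L(p)$ strictly lowers the follower's value at $v$ below $\theta$ and so triggers the flight; this requires the reported payoff to be anti-aligned with $U_L$ along the support of $p|_v$ and depressed off it, attractive at the chosen child of each follower node and balanced-against-the-threat at each leader node, defined recursively down $T_v$. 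The delicate point, and the reason the pure-commitment proof of \Cref{thm: cheat in pure commit} does not transfer, is that the set of realizable distributions on $T_v$ is non-convex, so one cannot certify $p|_v$ as the constrained leader-optimum by any separating-hyperplane argument; instead the calibration must be checked directly against every leader deviation, using the subgame structure to turn each deviation's utility gain into a strictly larger follower loss propagated to the root of $T_v$. I expect proving that this single global threat simultaneously blocks all such deviations to be the technical heart of the argument.
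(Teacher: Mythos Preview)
Your overall architecture matches the paper's proof: structural induction, with the $\mathcal{P}(root)=L$ case handled exactly as you describe (necessity via \Cref{lemma: min value in behavioral commit} plus concentration/restriction deviations, sufficiency by gluing subtree witnesses to zero-sum reports off-support), and the $\mathcal{P}(root)=F$ necessity handled by the $W$ versus $U_L(p)$ dichotomy you sketch, which is equivalent to the paper's argument. Sufficiency under~(b) is also as you say.

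The one place you overshoot is sufficiency under~(a) with $\mathcal{P}(root)=F$. You have correctly identified the three ingredients (a shifted zero-sum threat on some $v_0\in\argmin_{v'\ne v}M_L(v')$, anti-alignment with $U_L$ on $\operatorname{Supp}(p)$, and depression elsewhere), but you then anticipate a recursive calibration ``balanced-against-the-threat at each leader node'' and a delicate deviation-by-deviation verification. None of that is needed. The paper simply sets
\[
\tilde U_F(z)=\begin{cases}
-U_L(z), & z\in\operatorname{Supp}(p),\\
-U_L(z)+m_0-U_L(p), & z\in Z_{v_0},\\
-2M, & \text{otherwise},
\end{cases}
\]
with $m_0=M_L(v_0)$ and $M>\max_z U_L(z)$, and then checks the SSE property in three lines: the follower's flight value is exactly $-U_L(p)$, so every feasible profile gives the follower at least $-U_L(p)$; any profile supported inside $\operatorname{Supp}(p)$ is zero-sum there, hence gives the leader at most $U_L(p)$; and any profile that puts weight $\alpha>0$ outside $\operatorname{Supp}(p)$ earns the leader less than $\alpha M+(1-\alpha)U\le 2\alpha M+(1-\alpha)U\le U_L(p)$ by the follower's lower bound. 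The non-convexity you flag never bites, because the exact zero-sum structure on the support converts every leader gain into an equal follower loss globally, without any node-by-node balancing. So your plan is sound, but the ``technical heart'' you expect is in fact a two-line accounting identity once the right $\tilde U_F$ is written down.
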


 We note that condition (1)(a) in \Cref{thm: cheat in behavioral commit} deals with the case that the leader may mix at $root$ among several children nodes that yield same utilities for her. 
 Recall that a strategy profile $(\delta_L,\pi_F)|_{v}$ is feasible 
%  at subgame $(T_v,U_L|_{v},U_F|_{v})$
 with respect to $(T_v,U_F|_{v})$
 if $\pi_F\in \BR(\delta_L)|_{v}$ and we simply say it is feasible at $T_v$ when there is no confusion. 
  Before proving \Cref{thm: cheat in behavioral commit}, we present a necessary lemma on the properties of feasible strategy profiles. 

\begin{restatable}{lemma}{lemmafeasible}
\label{lemma: feasible}
Given strategy profile $(\delta_L,\pi_F)|_{v}$ defined on subgame $(T_v,U_L|_{v},U_F|_{v})$ for $v\in H$:

\noindent 1.  If $(\delta_L,\pi_F)|_{v}$ is feasible at $T_v$, $(\delta_L,\pi_F)|_{w}$ is also feasible at $T_w$, for any $w\in \operatorname{Supp}((\delta_L,\pi_F)|_{v},v)$.

\noindent 2. If $(\delta_L,\pi_F)|_{w}$ is feasible at $T_w$, for any $w\in \operatorname{Supp}((\delta_L,\pi_F)|_{v},v)$, then $(\delta_L,\pi_F)|_{v}$ is feasible at $T_v$.

\noindent 3. If $(\delta_L,\pi_F)|_{v}$ is feasible and  $U_L(\delta_L,\pi_F)|_{v} = \max_{\pi^\prime_F\in\BR(\delta_L)|_{v}}U_L(\delta_L,\pi^\prime_F)|_{v}$, then $U_L(\delta_L,\pi_F)|_{w}=\max_{\pi^\prime_F\in\BR(\delta_L)|_{w}}U_L(\delta_L,\pi^\prime_F)|_{w}$, for any $w\in \operatorname{Supp}((\delta_L,\pi_F)|_{v},v)$.
\end{restatable}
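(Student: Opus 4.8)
The plan is to reduce all three parts to a single recursive identity. For any node $v\in H$ and any profile $(\delta_L,\pi_F)$, the follower's subtree utility splits as a convex combination over the children,
\[
U_F(\delta_L,\pi_F)|_{v}=\sum_{w\in\child{v}}q_w\,U_F(\delta_L,\pi_F)|_{w},
\]
where $q_w$ is the one-step transition probability from $v$ to $w$ — equal to $\delta_L(v)(w)$ when $\mathcal{P}(v)=L$ and to $\mathbbm{1}[\pi_F(v)=w]$ when $\mathcal{P}(v)=F$ — so that $q_w>0$ holds exactly for $w\in\operatorname{Supp}((\delta_L,\pi_F)|_{v},v)$. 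Because the subtrees $T_w$ for distinct children are vertex-disjoint, the follower can optimize on each independently; this yields $\max_{\pi_F'}U_F(\delta_L,\pi_F')|_{v}=\sum_w q_w\max_{\pi_F'}U_F(\delta_L,\pi_F')|_{w}$ at a leader node and $\max_{w}\max_{\pi_F'}U_F(\delta_L,\pi_F')|_{w}$ at a follower node. The identical statement with $U_L$ in place of $U_F$ is what I would use for Part~3.

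For Part~1 I would use an exchange (improvement) argument. Assuming $(\delta_L,\pi_F)|_{v}$ is feasible, suppose some supported child $w$ had $U_F(\delta_L,\pi_F)|_{w}<\max_{\pi_F'}U_F(\delta_L,\pi_F')|_{w}$. Splicing a strictly better subtree response into $T_w$ leaves the action at $v$ and every other subtree untouched (disjointness is what makes this legal, and at a follower node the edit never changes $\pi_F(v)$); by the identity the total utility at $v$ strictly increases because $q_w>0$, contradicting feasibility. Part~3 is the same argument one level up: by Part~1 each restriction on $T_w$ is already a best response, so if at some supported $w$ the leader's utility were not maximal over $\BR(\delta_L)|_{w}$, splicing a leader-preferred follower best response into $T_w$ (which stays a global best response at $T_v$, since the other children are unchanged best responses and the chosen action is unchanged and optimal) would strictly raise $\max_{\pi_F'\in\BR(\delta_L)|_{v}}U_L(\delta_L,\pi_F')|_{v}$, a contradiction.

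Part~2 is the combination direction, and it is where I expect the real care. At a leader node it is immediate from the identity: each supported $T_w$ attains its subtree optimum by hypothesis, the unsupported children carry weight $q_w=0$, so $\pi_F|_{v}$ attains $\max_{\pi_F'}U_F(\delta_L,\pi_F')|_{v}$ and is feasible. The hard part will be the follower node, where $\operatorname{Supp}((\delta_L,\pi_F)|_{v},v)=\{w\}$ is a singleton: the hypothesis only guarantees that the \emph{chosen} child attains its own subtree optimum, whereas feasibility at $T_v$ additionally needs $w$ to be a maximizer of $\max_{\pi_F'}U_F(\delta_L,\pi_F')|_{w'}$ over all $w'\in\child{v}$, and the singleton support does not supply this automatically. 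I would therefore establish the follower-node case by also invoking optimality of the chosen action — available whenever the profile is known feasible at $T_v$ (as in the Part~3 application) or is built from a reported payoff engineered to make the chosen action optimal, as in the construction behind \Cref{thm: cheat in behavioral commit}. Pinning down and phrasing this optimality-of-choice at follower nodes cleanly is the main obstacle of the proof.
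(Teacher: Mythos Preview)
Your splice/exchange arguments for Parts~1 and~3, and your convex-combination argument for Part~2 at a leader node, are exactly what the paper does.

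Your worry about Part~2 at a follower node is not a defect of your write-up but of the lemma as stated: the claim is false there. Take $v\in\mathcal{P}^{-1}(F)$ with two leaf children $w_1,w_2$, $U_F(w_1)<U_F(w_2)$, and $\pi_F(v)=w_1$; then $\operatorname{Supp}((\delta_L,\pi_F)|_v,v)=\{w_1\}$, the restriction to $T_{w_1}$ is vacuously feasible, yet $\pi_F|_v\notin\BR(\delta_L)|_v$. The paper's proof of Part~2 writes only the convex-combination inequality over $\operatorname{Supp}((\delta_L,\pi_F)|_v,v)$ and never separates out the follower-node case, so it shares this gap. Downstream nothing breaks: the paper invokes Part~2 only when $\mathcal{P}(v)=L$ (in the proof of \Cref{thm: cheat in behavioral commit}), and the internal use inside Part~3 is fine because the spliced profile keeps $\pi_F(v)$ unchanged and leaves the best-response value at every child unchanged, so optimality of the chosen action is inherited from the original feasible profile---precisely the extra ingredient you proposed. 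So your diagnosis is correct; either restrict Part~2 to $v\in\mathcal{P}^{-1}(L)$, or add the hypothesis $\pi_F(v)\in\arg\max_{w'\in\child{v}}\max_{\pi_F'}U_F(\delta_L,\pi_F')|_{w'}$ when $\mathcal{P}(v)=F$.
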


Now we give the detailed proof of \Cref{thm: cheat in behavioral commit}. Though the conditions in \Cref{thm: cheat in behavioral commit} are more complicated, the 
% proofs and 
constructions of payoff functions 
% for the inducible probability distributions 
yield similar intuitive idea. That is,
\begin{itemize}
    \item show stronger conflicts of interests where the leader can gain more utilities;
    \item use constant-sum subgames and games with constant worst utilities to restrict the feasible strategy profiles for the leader to consider to commit to;
\end{itemize}

\begin{proof}[Proof of \Cref{thm: cheat in behavioral commit}]
We first prove the case \textbf{$\mathcal{P}(root) = L$}. 

 \textbf{The necessity:} Let $p$ be inducible and $(\delta_L,\pi_F)$  the 
%  corresponding 
 strategy profile of $p$ that is an SSE of game $(T_{root},U_L,\tilde{U}_F)$ for some payoff function $\tilde{U}_F$, then condition (b) holds by \Cref{lemma: min value in behavioral commit}. 

For condition (a): If  $U_L(\delta_L,\pi_F)|_{{v_1}}>U_L(\delta_L,\pi_F)|_{{v_2}}$ for some $v_1$, $v_2 \in \operatorname{Supp}((\delta_L,\pi_F),root)$, then for any follower's payoff function $\tilde{U}^\prime_F$ that makes $(\delta_L,\pi_F)|_{v}$ feasible at $T_v$, $\forall v \in \operatorname{Supp}((\delta_L,\pi_F),root)$, the leader can always increase the probability of choosing $v_1$ and decrease the probability of choosing $v_2$ at $root$ to gain higher utility than $U_L(\delta_L,\pi_F)$, which means $(\delta_L,\pi_F)$ is not an SSE. 

For condition (c): If $p|_{{v_0}}$ is not inducible at $T_{v_0}$, for some $v_0 \in \operatorname{Supp}((\delta_L,\pi_F),root)$, then for any 
% follower's 
% payoff function
$\tilde{U}^\prime_F$ that makes $(\delta_L,\pi_F)|_{{v_0}}$ feasible at $T_{v_0}$, there always exists a feasible strategy profile $({\delta_L'},{\pi_F'})|_{{v_0}}$ such that $U_L({\delta_L'},{\pi_F'})|_{{v_0}} > U_L(\delta_L,\pi_F)|_{{v_0}} = U_L(\delta_L,\pi_F)$, where the second equality holds by condition (a).  
Extend $({\delta_L'},{\pi_F'})|_{{v_0}}$ to $T_{root}$ by letting $\delta_L'(root)=v_0$. 
$({\delta_L'},{\pi_F'})$ is feasible at $T_{root}$ by \Cref{lemma: feasible}, and,
% we have 
\begin{equation*}
    U_L({\delta_L'},{\pi_F'})=U_L(\delta^\prime_L,\pi^\prime_F)|_{{v_0}} > U_L(\delta_L,\pi_F)|_{{v_0}} = U_L(\delta_L,\pi_F). 
\end{equation*}
 Thus $(\delta_L,\pi_F)$ is not  SSE of game $(T_{root}, U_L,\tilde{U}_F)$ for any $\tilde{U}_F$, leading to a contradiction. 

\textbf{The sufficiency:} Suppose $p$ satisfies the three conditions. Then $p|_{v}$ is inducible at $T_v$, $\forall v\in \operatorname{Supp}(p,root)$: there exists a 
% corresponding 
strategy profile $(\delta_L,\pi_F)|_{v}$ of $p|_{v}$ and a payoff function 
% of the follower
$\tilde{U}_F|_{v}$ at $T_v$, such that  $(\delta_L,\pi_F)|_{v}$ is an SSE of subgame $(T_v,U_L|_{v},\tilde{U}_F|_{v})$. 

 Extend $\tilde{U}_F|_{v}$s ($v\in \operatorname{Supp}(p,root)$) to $T_{root}$ as follows:
\begin{eqnarray}
\label{calculate U_F1}
\tilde{U}_F(z)=\left\{ \begin{array}{ll}
\tilde{U}_F|_{v}(z), & {z\in Z_v \wedge v \in \operatorname{Supp}(p,root)};\\
-U_L(z), & z\in Z_v\wedge \left(v\in \child{root}\setminus\operatorname{Supp}(p,root)\right).
\end{array}
\right.
\end{eqnarray}
Extend 
$(\delta_L,\pi_F)|_{v}$s to $T_{root}$ by letting $(\delta_L,\pi_F)(root)=\sum_{v\in \child{root}}p(v)v$, and $(\delta_L,\pi_F)|_{v}$ be any feasible strategy profile in 
% subgame $(T_v, \tilde{U}_F|_v)$ 
$(T_v,U_L|_v,\tilde{U}_F|_v)$ 
for $v\in \child{root}\setminus\operatorname{Supp}(p,root)$. Then $(\delta_L,\pi_F)$ corresponds to $p$ and is feasible in game  $(T_{root},U_L, \tilde{U}_F)$ by \Cref{lemma: feasible}. At $T_v$s for $v\in \child{root}\setminus\operatorname{Supp}(p,root)$, by \Cref{lemma: zero-sum in behavioral commit}, the leader can get at most $M_L(v)\leq M_L(root)\leq U_L(p)=U_L(\delta_L,\pi_F)$  via commitment, which is also the best she can get at $T_v$s where $v\in \operatorname{Supp}(p,root)$. Thus the leader's optimal commitment is to put all non-zero probabilities only on nodes $v\in \operatorname{Supp}(p,root)$ and that $M_L(v) = U_L(\delta_L,\pi_F)$, and get $U_L(\delta_L,\pi_F)$. Since $(\delta_L,\pi_F)$ is feasible at $T_{root}$, $(\delta_L,\pi_F)$ is an SSE with behavioral commitment of game $(T_{root}, U_L,\tilde{U}_F)$, and thus $p$ is inducible.

When $\mathcal{P}(root)=F$, first we prove the necessity part. 

\textbf{The necessity: }Suppose $U_L(p) < \min_{v^\prime \in \child{root}, v^\prime\ne v} M_L(v^\prime)$. If $p|_{v}$ is not inducible at 
% subgame 
% $(T_{v},U_L|_{v},U_F|_{v})$, 
$T_v$, 
then for any $p$'s 
strategy profile $(\delta_L,\pi_F)$ and  
function $\tilde{U}_F|_{v}$ that makes $(\delta_L,\pi_F)|_{v}$ feasible at $T_v$, there exists a feasible 
$({\delta_L'},{\pi_F'})|_{v}$ that  
$U_L({\delta_L'},{\pi_F'})|_{v} > U_L(\delta_L,\pi_F)|_{v}$. 

Then for any 
% payoff function of the follower at $T_{root}$, 
$\tilde{U}_F$ that makes $(\delta_L,\pi_F)$ feasible at $T_{root}$, let $\delta^\prime_L|_{v}$ be the leader's strategy in the aforementioned strategy profile with respect to $\tilde{U}_F|_{v}$, and extend it to $\delta^\prime_L$ at $T_{root}$ by letting it
% define the strategy of the leader ${\delta_L'}$ at $T_{root}$ to 
equal the leader's strategy in an SSE of subgame $(T_{v^\prime},U_L|_{{v^\prime}},\tilde{U}_F|_{{v^\prime}})$ at $T_{v^\prime}$, for $v^\prime \in \child{root}$, $v^\prime \ne v$; and equal ${\delta^\prime_L}|_{v}$ at $T_v$. Then for ${\pi^\prime_F}\in\BR({\delta^\prime_L})$ that maximizes $U_L({\delta_L'},{\pi_F'})$ 
% is maximal
among the follower's strategies in $\BR({\delta_L'})$, if ${\pi_F'}(root) = v$, 
% since ${\pi_F'}|_{h} \in \BR(\delta_L|_{h})|_{h}$,
$U_L({\delta_L'},{\pi_F'})\geq U_L(\delta^\prime_L,\pi^\prime_F)|_{v} > U_L(\delta_L,\pi_F)|_{v}=U_L(\delta_L,\pi_F)$. If ${\pi_F'}(root) = v^\prime\ne v$, the leader can get at least $M_L(v^\prime) > U_L(\delta_L,\pi_F)$, contradicting to the assumption that $p$ is inducible. 

\textbf{The sufficiency:} 
\textbf{Case 1: $U_L(p)\geq \min_{v^\prime \in \child{root}, v^\prime\ne v} M_L(v^\prime)$. } Consider a 
% corresponding 
strategy profile 
$(\delta_L,\pi_F)$ of $p$, which equals the leader's minimax strategy and the follower's maximin strategy on nodes $h\in H$ that $ p(h)=0$. Let 
% First assume $U_L(\delta_L,\pi_F)\geq \min_{h^\prime \in \child{root}, h^\prime\ne h} M_L(h^\prime)$, let 
\begin{equation*}
    \begin{aligned}
    v_0 &\in {\arg \min}_{v^\prime \in \child{root}, v^\prime\ne v} M_L(v^\prime),\\
    m_0 &= \min_{v^\prime \in \child{root}, v^\prime\ne v} M_L(v^\prime),\\
    M &= \max\{\max_{z\in Z} U_L(z)+1,1\}\\
    A(\delta_L,\pi_F)&\coloneqq \{z \in Z: p_{(\delta_L,\pi_F)}(z)>0\}=\operatorname{Supp}(p).\\
    \end{aligned}
\end{equation*}
 We define payoff function, $\tilde{U}_F$, as follows:
 
\begin{equation}
\label{calculate U_F2}
\tilde{U}_F(z)=\left\{ \begin{array}{ll}
-U_L(z),&{z\in A(\delta_L,\pi_F)};\\
-U_L(z)+m_0-U_L(\delta_L,\pi_F), &z\in Z_{v_0};\\
-2M, &z\not\in Z_{v_0} \wedge z\not\in A(\delta_L,\pi_F).\\
\end{array}
\right.
\end{equation}
Then $(\delta_L,\pi_F)|_{v_0}$ is an SSE of subgame $(T_{v_0},U_L|_{{v_0}},\tilde{U}_F|_{{v_0}})$, and the leader's (follower's) utility is $m_0$ ($-U_L(\delta_L,\pi_F)$, respectively). 
% the minimal utility the follower can get is $-U_L(\delta_L,\pi_F)$. 

First, $\pi_F\in\BR(\delta_L)$ in game $(T_{root}, U_L,\tilde{U}_F)$: At any node $h\in \mathcal{P}^{-1}(F)$, if the follower chooses another child $v^\prime$ of $h$ instead of $(\delta_L,\pi_F)(h)$, since all the leaf nodes in $T_{v^\prime}$ does not belong to $A(\delta_L,\pi_F)$, the follower will get at most $\max\{-U_L(\delta_L,\pi_F), -2M\} = -U_L(\delta_L,\pi_F)$. 

Now we prove $U_L(\delta_L,\pi_F)$ is the best the leader can get via commitment, and thus $(\delta_L,\pi_F)$ is an SSE of the game $(T_{root}, U_L,\tilde{U}_F)$. $p$ is inducible. 

First for all feasible 
% strategy profiles 
$({\delta_L'},{\pi_F'}\in\BR({\delta_L'}))$s, 
we have 
\begin{equation}
    \tilde{U}_F({\delta_L'},{\pi_F'})\geq \tilde{U}_F(\delta_L,\pi_F) = -U_L(\delta_L,\pi_F),
\end{equation} since otherwise the follower can always get higher utility by choosing $v_0$ at $root$, yielding a leader's utility of $ m_0 \leq U_L(\delta_L,\pi_F)$. 

If $A({\delta_L'},{\pi_F'})\subseteq A(\delta_L,\pi_F)$, then the leader's utility is the negation of the follower's utility, and thus is at most $U_L(\delta_L,\pi_F)$. 

If $A({\delta_L'},{\pi_F'})\ne A(\delta_L,\pi_F)$, suppose with probability $\alpha \in (0,1)$, the game ends in $Z\setminus A(\delta_L,\pi_F)$, then
\begin{equation}
    \tilde{U}_F({\delta_L'},{\pi_F'}) = -2\alpha M -(1-\alpha) U \geq \tilde{U}_F(\delta_L,\pi_F),
\end{equation}
for some $U\in \mathbb{R}$ representing the leader's expected utility if the game ends in $A(\delta_L,\pi_F)$, then \begin{equation}
    U_L({\delta_L'},{\pi_F'}) < \alpha M + (1-\alpha)U \leq 2\alpha M + (1-\alpha) U \leq - \tilde{U}_F(\delta_L,\pi_F) = U_L(\delta_L,\pi_F).
\end{equation}

\textbf{Case 2:} $p|_{v}$ is inducible at $T_v$. 
% subgame $(T_v,U_L|_{v}, U_F|_{v})$. 
Then there exists a 
% corresponding 
strategy profile
$(\delta_L,\pi_F)|_{v}$ of $p|_{v}$ and a 
% follower's 
% payoff function
$\tilde{U}_F|_{v}$ at $T_v$, such that $(\delta_L,\pi_F)|_{v}$ is an SSE of subgame $(T_{v},U_L|_{v},\tilde{U}_F|_{v})$. 
Define $\tilde{m}_F = \min_{z\in Z_v}\tilde{U}_F(z)$. Construct payoff function, $\tilde{U}_F$, as follows:
\begin{equation}
\label{calculate U_F3}
\tilde{U}_F(z)=\left\{ \begin{array}{ll}
\tilde{U}_F|_{v}(z) & {z\in Z_v};\\
\tilde{m}_F-1 & z\not\in Z_v.
\end{array}
\right.
\end{equation}
 Then whatever strategy the leader commits to, the follower will always choose $v$ at $root$. Extend $(\delta_L,\pi_F)|_{v}$ to $T_{root}$ by letting $(\delta_L,\pi_F)(root)=v$. Since $U_L(\delta_L,\pi_F)$ is the highest value the leader can get by commitment, $(\delta_L,\pi_F)$ is an SSE of game $(T_{root}, U_L,\tilde{U}_F)$. 

\end{proof}

The proof of \Cref{thm: cheat in behavioral commit} actually gives a polynomial-time algorithm to construct a
% corresponding
payoff function for each inducible distribution, as shown in \Cref{coro: construct U_F in behavioral}. 
\begin{restatable}{corollary}{coroconstructuf}
\label{coro: construct U_F in behavioral}
For any inducible 
% probability 
distribution over leaf nodes $p\in \Delta(Z)$, we can construct a 
% follower's corresponding 
payoff function $\tilde{U}_F$ 
that induces $p$
% in \Cref{thm: cheat in behavioral commit}, 
in $O(|H|\cdot|Z|)$ time. 
\end{restatable}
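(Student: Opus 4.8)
The plan is to observe that the sufficiency direction of \Cref{thm: cheat in behavioral commit} is already algorithmic: its case analysis tells us, at every node, exactly which of the formulas \eqref{calculate U_F1}, \eqref{calculate U_F2}, or \eqref{calculate U_F3} to apply, and each formula is an explicit expression in quantities that are cheap to precompute. Accordingly, I would first run a single bottom-up pass over the game tree to compute the maximin values $M_L(v)$ for all $v \in H \cup Z$ (directly from \Cref{def: max-min value}) and the reaching probabilities $p(v)$ for all $v$ (from the recursive extension of $p$), both in $O(|H| + |Z|)$ time; these let me read off $\operatorname{Supp}(p, v)$ at every node and evaluate the Case~1 test $U_L(p) \ge \min_{v' \ne v} M_L(v')$ in constant time per node. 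I would also precompute $U_L(p) = \sum_{z} U_L(z) p(z)$ in $O(|Z|)$, since all three formulas reference it.

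The construction itself is a recursion $\textsc{Build}(v)$ that returns $\tilde{U}_F|_v$ for the inducible distribution $p|_v$, mirroring the proof. If $\mathcal{P}(v) = L$, I recurse on each child $w \in \operatorname{Supp}(p, v)$ and then fill every leaf lying under an off-support child with $-U_L$, as in \eqref{calculate U_F1}. If $\mathcal{P}(v) = F$ with $\operatorname{Supp}(p, v) = \{w\}$, I first test Case~1; if it holds I fill all of $Z_v$ directly from \eqref{calculate U_F2} (using the precomputed $v_0$, $m_0$, and a single global constant $M$), and this branch terminates without further recursion. Otherwise Case~2 must hold because $p$ is inducible, so I recurse on $w$, compute $\tilde{m}_F = \min_{z \in Z_w} \tilde{U}_F|_w(z)$, and fill the remaining leaves of $Z_v$ with $\tilde{m}_F - 1$ as in \eqref{calculate U_F3}. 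Correctness is inherited verbatim from the sufficiency proof of \Cref{thm: cheat in behavioral commit}, so the only thing left to argue is the running time.

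For the complexity, observe that the recursion descends only into support children, so it visits each node at most once and only touches nodes of the support structure; there are thus $O(|H| + |Z|)$ calls in total, of which at most $O(|H|)$ occur at internal nodes. The non-recursive work at an internal node $v$ is $O(|Z_v|) = O(|Z|)$: the Case~1 test and the search for $v_0$ are $O(1)$ given the precomputed $M_L$; every off-path leaf is assigned its final value exactly once; and computing $\tilde{m}_F$ scans $Z_w$ once. Summing the per-node cost over the internal calls yields $O(|H| \cdot |Z|)$, and the extra factor of $|H|$ over the pure-commitment bound of \Cref{coro:construct U_F in pure} is exactly the cost of rescanning subtrees for the minima $\tilde{m}_F$ at nested follower nodes.

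The main obstacle, I expect, is the complexity accounting rather than correctness, and specifically avoiding a naive re-verification of inducibility. A tempting but wasteful implementation would, at each follower node, decide how to build by first recursively testing whether $p|_w$ is inducible, which could revisit subtrees many times and break the bound. The key observation that keeps the cost at $O(|H| \cdot |Z|)$ is that we never need this test: we are given that $p$ is inducible, \Cref{thm: cheat in behavioral commit} guarantees that at least one case applies at every follower node, Case~1 is an $O(1)$ check, and its failure forces Case~2, so the recursion commits to a branch immediately. I would also double-check that each formula's ingredients ($\operatorname{Supp}(p)$, $M_L(v_0)$, $U_L(p)$, and $\tilde{m}_F$) are all available either from the precomputation or from the single returning recursive call, so that no node's work exceeds a linear scan of its subtree.
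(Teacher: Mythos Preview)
Your proposal is correct and follows essentially the same approach as the paper: recursively apply the three constructive formulas \eqref{calculate U_F1}, \eqref{calculate U_F2}, \eqref{calculate U_F3} from the sufficiency proof of \Cref{thm: cheat in behavioral commit}, bounding the per-internal-node work by $O(|Z|)$ and summing over $|H|$ nodes. The paper's own argument is a two-sentence sketch of exactly this, so your write-up is a faithful and more detailed elaboration; the only minor slip is calling the search for $v_0$ an $O(1)$ operation---it is $O(|\child{v}|)$ unless you additionally precompute, for each node, the minimum and second minimum of its children's $M_L$ values---but this is dominated by the $O(|Z_v|)$ leaf-filling work anyway and does not affect the bound.
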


\subsection{Optimal Inducible Distributions with Succinct Structures}
\label{sec: simple inducible strategy-behavioral}
Although inducible distributions are much more complicated, in this section, 
we show that an optimal inducible distribution can have succinct structure, as we call ``Y-shape''.

\begin{definition}[Y-shape]
\label{def:y-shape inducible}
A  distribution $p\in\Delta(Z)$ is called ``Y-shape'' if $|\operatorname{Supp}(p)|\leq 2$.
\end{definition}

Intuitively, a Y-shape realizable distribution corresponds to strategy profiles, under which the nodes with positive probabilities of being reached from $root$ node are in a ``Y'' shape. This means that the leader will mix two actions at at most one decision node, so the players' utilities is a mixture of utilities at at most two leaf nodes. A demonstration of 
% strategy profiles corresponding to
Y-shape distributions is in Figure \ref{example: Y-shape}.

\begin{figure}[!htbp]
\captionsetup[subfigure]{}
% \centering
\subcaptionbox{A non-Y-shape distribution and one of its 
% corresponding 
strategy profiles }[.49\textwidth]{%
\centering
\begin{tikzpicture}[scale=0.95, every node/.style={transform shape}]
\centering

\node[above left] at (2.75,2.4) {$F_1$};
\node[above left] at (1.35,1.6) {$L_1$};
\node[above right] at (4.35,1.6) {$L_2$};
\node[above left] at (3.20,0.8) {$L_3$};
\node[above right] at (5.5,0.8) {$F_2$};
\node[above left] at (0.84,0.8) {$z_1$};
\node[above right] at (1.77,0.8) {$z_2$};
\node[above left] at (2.68,0) {$z_3$};
\node[above right] at (3.71,0) {$z_4$};
\node[above left] at (4.98,0) {$z_5$};
\node[above right] at (6.01,0) {$z_6$};
\node at (0.35,0.4) {$p$};
\node at (0.8,0.4) {$0$};
\node at (1.9,0.4) {$0$};
\node at (2.65,-0.4) {$\frac{1}{6}$};
\node at (3.75,-0.4) {$\frac{1}{3}$};
\node at (4.95,-0.4) {$\frac{1}{2}$};
\node at (6.05,-0.4) {$0$};
% \node at (3.75,-0.5) {$2$};
% \node at (5.25,-0.5) {$1$};
% \node at (0,-1) {$U_F$};
% \node at (0.75,-1) {$3$};
% \node at (2.25,-1) {$-2$};
% \node at (3.75,-1) {$1$};
% \node at (5.25,-1) {$1$};
% \node[below] at (1,0) {$8$};
% \node[below] at (0,0) {$0$};
\filldraw (2.75,2.4) circle (.08)
          (1.35,1.6) circle (.08)
          (4.35,1.6) circle (.08)
          (0.8,0.8) circle (.08)
          (1.9,0.8) circle (.08)
          (3.2,0.8) circle (.08)
          (5.5,0.8) circle (.08)
          (2.65,0) circle (.08)
          (3.75,0) circle (.08)
          (4.95,0) circle (.08)
          (6.05,0) circle (.08);
\draw[thick] (2.75,2.4)--(1.35,1.6)
      (2.75,2.4)--(4.35,1.6)
      (1.35,1.6)--(0.8,0.8)
      (1.35,1.6)--(1.9,0.8)
      (4.35,1.6)--(3.2,0.8)
      (4.35,1.6)--(5.5,0.8)
      (3.20,0.8)--(2.65,0)
      (3.20,0.8)--(3.75,0)
      (5.5,0.8)--(4.95,0)
      (5.5,0.8)--(6.05,0);
\draw[->, thick, red] (3.01,2.375)--(4.26,1.75);
\draw[->, thick, densely dashed, red] (4.1, 1.52)--(3.28,0.95);
\draw[->, thick, densely dashed, red] (4.58, 1.52)--(5.42,0.95);
\draw[->, thick, densely dashed, red] (3.04,0.7)--(2.66, 0.15);
\draw[->, thick, densely dashed, red] (3.36,0.7)--(3.74, 0.15);
\draw[->, thick, red] (5.34,0.7)--(4.96, 0.15);
\node[left,red] at (3.69, 1.31) {$\frac{1}{2}$};
\node[right, red] at (5.0,1.31) {$\frac{1}{2}$};
\node[left, red] at (2.9,0.44) {$\frac{1}{3}$};
\node[right, red] at (3.5, 0.44) {$\frac{2}{3}$};
\end{tikzpicture}}
\hfill
\subcaptionbox{A Y-shape distribution and one of its 
% corresponding 
strategy profiles}[.49\textwidth]{
\begin{tikzpicture}[scale=0.95, every node/.style={transform shape}]
\centering

\node[above left] at (2.75,2.4) {$F_1$};
\node[above left] at (1.35,1.6) {$L_1$};
\node[above right] at (4.35,1.6) {$L_2$};
\node[above left] at (3.20,0.8) {$L_3$};
\node[above right] at (5.5,0.8) {$F_2$};
\node[above left] at (0.84,0.8) {$z_1$};
\node[above right] at (1.77,0.8) {$z_2$};
\node[above left] at (2.68,0) {$z_3$};
\node[above right] at (3.71,0) {$z_4$};
\node[above left] at (4.98,0) {$z_5$};
\node[above right] at (6.01,0) {$z_6$};
\node at (0.35,0.4) {$p$};
\node at (0.8,0.4) {$0$};
\node at (1.9,0.4) {$0$};
\node at (2.65,-0.4) {$\frac{1}{2}$};
\node at (3.75,-0.4) {$0$};
\node at (4.95,-0.4) {$\frac{1}{2}$};
\node at (6.05,-0.4) {$0$};
% \node at (3.75,-0.5) {$2$};
% \node at (5.25,-0.5) {$1$};
% \node at (0,-1) {$U_F$};
% \node at (0.75,-1) {$3$};
% \node at (2.25,-1) {$-2$};
% \node at (3.75,-1) {$1$};
% \node at (5.25,-1) {$1$};
% \node[below] at (1,0) {$8$};
% \node[below] at (0,0) {$0$};
\filldraw (2.75,2.4) circle (.08)
          (1.35,1.6) circle (.08)
          (4.35,1.6) circle (.08)
          (0.8,0.8) circle (.08)
          (1.9,0.8) circle (.08)
          (3.2,0.8) circle (.08)
          (5.5,0.8) circle (.08)
          (2.65,0) circle (.08)
          (3.75,0) circle (.08)
          (4.95,0) circle (.08)
          (6.05,0) circle (.08);
\draw[thick] (2.75,2.4)--(1.35,1.6)
      (2.75,2.4)--(4.35,1.6)
      (1.35,1.6)--(0.8,0.8)
      (1.35,1.6)--(1.9,0.8)
      (4.35,1.6)--(3.2,0.8)
      (4.35,1.6)--(5.5,0.8)
      (3.20,0.8)--(2.65,0)
      (3.20,0.8)--(3.75,0)
      (5.5,0.8)--(4.95,0)
      (5.5,0.8)--(6.05,0);
\draw[->, thick, red] (3.01,2.375)--(4.26,1.75);
\draw[->, thick, densely dashed, red] (4.1, 1.52)--(3.28,0.95);
\draw[->, thick, densely dashed, red] (4.58, 1.52)--(5.42,0.95);
\draw[->, thick, red] (3.04,0.7)--(2.66, 0.15);
% \draw[->, thick, densely dashed, red] (3.36,0.7)--(3.74, 0.15);
\draw[->, thick, red] (5.34,0.7)--(4.96, 0.15);
\node[left,red] at (3.69, 1.31) {$\frac{1}{2}$};
\node[right, red] at (5.0,1.31) {$\frac{1}{2}$};
\end{tikzpicture}}

\caption{Examples to illustrate Y-shape distributions.}
\label{example: Y-shape}
\end{figure}
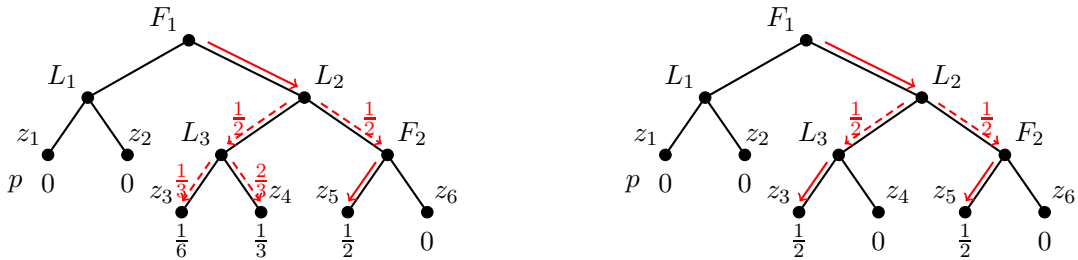

Now we utilitize \Cref{thm: cheat in behavioral commit} to show that for any inducible distribution $p$, there is always a dominant ``Y-shape'' inducible distribution, in the sense that the follower can always get no less utility on it. Thus there exists an optimal ``Y-shape'' inducible distribution. \Cref{lemma: simple inducible sp} helps us give a concise algorithm for optimally reporting. 

\begin{restatable}{corollary}{lemmasimpleinduciblesp}
\label{lemma: simple inducible sp}
For any extensive-form game $(T_{root},U_L,U_F)$, there exists a ``Y-shape'' inducible distribution $p^*\in \Delta(Z)$, such that $U_F(p^*)\geq U_F(p)$ for any inducible distribution $p$.
\end{restatable}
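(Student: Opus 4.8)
The plan is to prove a threshold-parametrized strengthening of the statement by structural induction over the game tree, of which \Cref{lemma: simple inducible sp} is the root instance. For a node $v\in H\cup Z$ and a threshold $t\in\mathbb{R}\cup\{-\infty\}$, let $\mathrm{OPT}_{\ge t}(v)$ denote the supremum of $U_F(q)|_{v}$ over all inducible distributions $q$ at $T_v$ with $U_L(q)|_{v}\ge t$. I will show that, whenever this feasible set is nonempty, the supremum is in fact a maximum attained by a Y-shape inducible distribution. Setting $v=root$ and $t=-\infty$ then gives the lemma, since the inducible set at the root is nonempty (truthful reporting is always inducible) and each feasibility constraint is closed, so the maxima are genuinely attained. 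Throughout, I invoke \Cref{thm: cheat in behavioral commit} to rewrite $\mathrm{OPT}_{\ge t}(v)$ in terms of the children, and \Cref{lemma: characterize feasible distributions} to reason about realizability.

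The base case $v\in Z$ is immediate: the only distribution is the point mass at $v$, which is Y-shape and inducible, and is feasible exactly when $U_L(v)\ge t$. For a leader node $\mathcal{P}(v)=L$, I first observe that $U_F(q)|_{v}$ is a convex combination of the per-child values $U_F(q|_{v'})|_{v'}$, while conditions (1)(a)--(1)(c) force a common leader value $c=U_L(q)|_{v}\ge M_L(v)$ on every support child. Hence moving all mass to the single support child $v'$ with the largest $U_F(q|_{v'})|_{v'}$ never decreases the follower's utility and keeps the distribution inducible (condition (1)(a) becomes vacuous, (1)(b) still holds as $c\ge M_L(v)$, and (1)(c) is inherited). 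This yields $\mathrm{OPT}_{\ge t}(v)=\max_{v'\in\child{v}}\mathrm{OPT}_{\ge \max(t,\,M_L(v))}(v')$, and a maximizing Y-shape distribution on the winning child (from the induction hypothesis) extends to $T_v$ by having the leader move to $v'$ deterministically, preserving both the support size and inducibility.

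For a follower node $\mathcal{P}(v)=F$, an inducible $q$ sends all mass to a single child $w$, so $U_L(q)|_{v}=U_L(q|_{w})|_{w}$ and $U_F(q)|_{v}=U_F(q|_{w})|_{w}$, and by \Cref{thm: cheat in behavioral commit} either (2)(a) or (2)(b) holds. The (2)(b) branch is handled directly by the induction hypothesis on $T_w$ with threshold $t$, giving $\mathrm{OPT}_{\ge t}(w)$. For the (2)(a) branch I will use convex geometry: after fixing the follower's pure best response on $T_w$, only leader nodes branch, so the leader's behavioral strategies realize the entire simplex over the reachable leaves, and the attainable pairs $(U_L,U_F)$ form the convex hull of finitely many leaf points. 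Maximizing the linear objective $U_F$ over this polygon subject to the single closed constraint $U_L\ge\max(t,\min_{v'\ne w}M_L(v'))$ is attained at a vertex of the feasible region, i.e.\ at a single leaf or a mixture of two leaves that share the fixed follower strategy --- a Y-shape realizable distribution, which is inducible via condition (2)(a). Taking the larger of the two branches over all children $w$ completes the step.

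The main obstacle is the bookkeeping of the leader-utility threshold forced by condition (1)(b): the naive claim ``put all follower utility on the child with the largest unconstrained $\mathrm{OPT}$'' fails, because that child's follower-optimal distribution may not reach the leader value $M_L(v)$ demanded at $v$, so the induction must carry the parametrized quantity $\mathrm{OPT}_{\ge t}$ rather than a single optimum, and the thresholds must be propagated correctly through both node types. The second delicate point is the geometric step for condition (2)(a): one must justify that a single player controlling all branching nodes of a perfect-information subtree can realize every leaf distribution --- so that the attainable $(U_L,U_F)$ region is exactly a convex hull of leaf points --- and then argue that a linear objective under one linear inequality is optimized with support at most two. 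I expect the verification that the supremum is attained (not merely approached) to require a short compactness remark, but no new ideas beyond the recursion above.
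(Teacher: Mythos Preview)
Your proposal is correct and follows the same structural-induction skeleton as the paper, with one presentational difference worth noting. Instead of your threshold parameter $\mathrm{OPT}_{\ge t}(v)$, the paper strengthens the inductive statement to a \emph{pointwise double dominance}: for every inducible $p$ there is a Y-shape inducible $p^*$ with $U_F(p^*)\ge U_F(p)$ \emph{and} $U_L(p^*)\ge U_L(p)$. That second inequality plays exactly the role of your threshold: at a leader node, after restricting to the best-$U_F$ support child $v'$ and applying the hypothesis there, the Y-shape obtained automatically satisfies $U_L(p^*)\ge U_L(p)\ge M_L(v)$, so condition (1)(b) of \Cref{thm: cheat in behavioral commit} comes for free and no parameter needs to be propagated. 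Your parametrization is equivalent but slightly heavier on bookkeeping; the paper's version also dispenses with the separate compactness remark you anticipate, since dominance of a concrete $p$ by a concrete $p^*$ is established directly. At follower nodes under condition (2)(a), the paper makes the same two-dimensional convexity observation you do, phrased as ``since there are only two players, some mixture of two support leaves dominates $p$ in both coordinates'' rather than as an LP over a polygon intersected with a half-space; these are the same argument. One minor wording point: ``fixing the follower's pure best response on $T_w$'' should read ``fixing a follower pure strategy on $T_w$'' (no payoff function is in play at that stage, so there is no best response), and you are implicitly ranging over all such strategies when computing the branch-(2)(a) value.
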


To simplify the following proofs, 
% proofs of the following results,
we present a non-recursive characterization of Y-shape inducible distributions here. 

\begin{restatable}{corollary}{corollarynonrecursivecharacterizationofyshapeinducibility}
\label{corollary: non-recursive characterization of y-shape inducibility}
A Y-shape realizable distribution $p$ is inducible if and only if 
\begin{enumerate}
    \item if $|\operatorname{Supp}(p)|=2$, and $U_L(z_1)\ne U_L(z_2)$, for $z_1,z_2 \in \operatorname{Supp}(p)$, then:
    \begin{enumerate}
        \item $z_1$ and $z_2$ have common ancestors $v\in \mathcal{P}^{-1}(F)$ and $w\in \child{v}$ such that 
        \begin{equation*}
            U_L(p)\geq  \min_{w'\in\child{v},w'\neq w}M_L(w');
        \end{equation*}
        \item $U_L(p) \geq M_L(u)$, $\forall u\in \mathcal{P}^{-1}(L)$ on the path from $root$ to $v$;
    \end{enumerate}
    \item if $|\operatorname{Supp}(p)|=1$ or $U_L(z_1) = U_L(z_2)$ for $z_1, z_2 \in \operatorname{Supp}(p)$, let $v$ be the least common ancestor of $z_1$ and $z_2$, then
    \begin{equation*}
    U_L(p) \geq M_L(u),~~\forall u\in \mathcal{P}^{-1}(L)~\text{on the path from}~root~\text{to}~v;
    \end{equation*}
\end{enumerate}
\end{restatable}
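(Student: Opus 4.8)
The plan is to obtain the flat conditions by unrolling the recursive characterization of \Cref{thm: cheat in behavioral commit} along the support of the Y-shape distribution. First I would fix the geometry. Since $|\operatorname{Supp}(p)| \le 2$, the set $\operatorname{Supp}(p,u)$ is a singleton at every node $u$ with $p(u)>0$ except, when $|\operatorname{Supp}(p)|=2$, at the least common ancestor $v^*$ of the two support leaves; and by realizability (\Cref{lemma: characterize feasible distributions}) this $v^*$ must lie in $\mathcal{P}^{-1}(L)$, since a follower node cannot carry two children in the support. Writing the path from the root as $root=u_0,u_1,\dots,u_k=v^*$, all probability mass flows through it, so $p(u_j)=1$ and hence $U_L(p|_{u_j})=U_L(p)$ for every $j\le k$. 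I would record this invariance first, because it lets me replace each subgame value $U_L(p|_{u_j})$ by the single number $U_L(p)$ throughout the recursion.

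Next I would unroll \Cref{thm: cheat in behavioral commit} node by node along this path. At a leader node $u_j$ with $j<k$, the singleton $\operatorname{Supp}(p,u_j)=\{u_{j+1}\}$ makes condition (1)(a) vacuous, so the theorem reduces to ``$U_L(p)\ge M_L(u_j)$ and $p|_{u_{j+1}}$ inducible at $T_{u_{j+1}}$'', a forced bound plus a forced descent. At a follower node $u_j$ with $j<k$, the theorem offers a genuine choice: either the threat inequality $U_L(p)\ge \min_{w'\in\child{u_j},\,w'\ne u_{j+1}} M_L(w')$ holds and inducibility is certified with no further descent, or we must descend by requiring $p|_{u_{j+1}}$ inducible. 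At the split node $v^*=u_k$ condition (1)(a) demands $U_L(p|_{c_1})=U_L(p|_{c_2})$ for the two in-support children $c_1,c_2$, which, since $p|_{c_i}$ is the point mass routing to $z_i$, is precisely $U_L(z_1)=U_L(z_2)$; meanwhile (1)(b) demands $U_L(p)\ge M_L(v^*)$ and (1)(c) the inducibility of the single leaves below.

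I would then split on $U_L(z_1)$ versus $U_L(z_2)$. If $U_L(z_1)\ne U_L(z_2)$ (corollary case 1), the check at $v^*$ fails, so the recursion can never be certified by descending all the way; the only route is to terminate earlier at a follower node $u_j$ through its threat inequality, which forces every leader node strictly above it to meet its bound. Telescoping this yields exactly the existence of a follower ancestor $v=u_j$ with on-path child $w=u_{j+1}$ satisfying $U_L(p)\ge\min_{w'\ne w}M_L(w')$, together with $U_L(p)\ge M_L(u)$ for every $u\in\mathcal{P}^{-1}(L)$ between $root$ and $v$ --- conditions (1)(a) and (1)(b). If instead $U_L(z_1)=U_L(z_2)$ or $|\operatorname{Supp}(p)|=1$ (corollary case 2, with $v^*$ the leaf itself in the degenerate case), the check at $v^*$ passes, so the recursion may descend to $v^*$ and certify inducibility once every leader node on the path from $root$ to $v^*$ meets its bound; here I would invoke \Cref{lemma: zero-sum in behavioral commit} to argue that the subtree-inducibility obligations below $v^*$ are already absorbed into the maximin values $M_L(\cdot)$ in the bounds, so they impose no new inequalities. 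In both directions, necessity and sufficiency descend directly from the matching direction of \Cref{thm: cheat in behavioral commit} applied inductively, and \Cref{lemma: feasible} is used to glue the on-path restrictions into one feasible profile.

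The main obstacle I expect is the bookkeeping of \emph{where} the recursion terminates and the proof that the telescoped, path-indexed conditions are genuinely equivalent to the flat statement. In particular, the delicate point is showing that a follower-node threat strictly above a leader node ``shields'' that leader node, so the binding constraints are exactly the leader bounds accumulated down to the first termination point, while the obligations below that point collapse into already-present maximin values rather than new inequalities. I would treat the follower-threat termination and the descent-to-$v^*$ termination on the same footing (the former being the only option in case 1, and an available option alongside descent in case 2), and verify carefully that merging them reproduces the stated case split; handling $|\operatorname{Supp}(p)|=1$ uniformly with the equal-utility support-two case is a secondary detail I would check separately.
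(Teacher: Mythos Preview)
Your approach is essentially the same as the paper's: both unroll the recursion of \Cref{thm: cheat in behavioral commit} along the root-to-LCA path, using the leader/follower case split and terminating at a follower threat (Case~1) or at the LCA (Case~2). One small correction: the subtree obligations below $v^*$ in Case~2 are not dispatched by \Cref{lemma: zero-sum in behavioral commit} but by the observation that $U_L(z_i)=U_L(p)\ge M_L(v^*)\ge M_L(c_i)$, after which a short induction (the same one underlying \Cref{proposition: higher utility}) shows any point mass with $U_L(z)\ge M_L(\text{subroot})$ is inducible; and \Cref{lemma: feasible} is unnecessary here, since the whole argument stays at the level of distributions via \Cref{thm: cheat in behavioral commit}.
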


\subsection{Algorithm for Optimally Reporting}
\label{sec: algorithms under behavioral}

Now we give the algorithm for optimally reporting under behavioral commitment in Algorithm \ref{alg:OptimalCheatingBehavioral}. 

% \thmalgorithmforbehavioralcase*
\begin{restatable}{theorem}{thmalgorithmforbehavioralcase}
\label{thm: algorithm for behavioral case}
Computing an inducible distribution $p^*\in\Delta(Z)$ and a 
% follower's
payoff function $\tilde{U}_F^*$ such that 
\begin{itemize}
\item a 
% corresponding 
strategy profile $(\delta_L,\pi_F)$ of $p^*$ is an SSE with behavioral commitment of game $(T_{root},U_L,\tilde{U}_F^*)$;
\item $U_F(p^*)\geq U_F(p)$ for all inducible distribution $p$;
\end{itemize}
can be solved in polynomial time.
\end{restatable}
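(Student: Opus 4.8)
The plan is to turn the optimization over all inducible distributions into a finite search over distributions of support size at most two, and to solve each resulting subproblem as a one-dimensional linear program. First I would invoke \Cref{lemma: simple inducible sp}: there is always a ``Y-shape'' inducible distribution $p^*$ with $U_F(p^*)\geq U_F(p)$ for every inducible $p$. Hence $\max_p U_F(p)$ over inducible distributions is attained and equals the maximum taken only over inducible distributions with $|\operatorname{Supp}(p)|\leq 2$, which legitimizes restricting the search to succinct distributions. As preprocessing I would compute $M_L(v)$ for every $v\in H\cup Z$ by a single bottom-up pass, immediate from the recursion in \Cref{def: max-min value} and costing $O(|H|+|Z|)$; I would also precompute, for each node, depth and ancestor pointers (so least common ancestors and root-to-node paths are queried quickly), the running maxima of $M_L$ over the leader nodes on each root-to-node path, and the threshold $\min_{w'\in\child{v},\,w'\neq w}M_L(w')$ at each follower node $v$ for each child $w$.

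Next I would enumerate candidate supports and, for each, read off the inducibility constraints from \Cref{corollary: non-recursive characterization of y-shape inducibility}. For a singleton $\{z\}$ there are $O(|Z|)$ candidates, and inducibility is the single path condition $U_L(z)\geq M_L(u)$ for every leader node $u$ on the root-to-$z$ path, checkable in constant time after preprocessing, with value $U_F(z)$. For a pair $\{z_1,z_2\}$ (there are $O(|Z|^2)$ of them) let $\lambda=p(z_1)$ be the mixing weight, so that $U_F(p)=\lambda U_F(z_1)+(1-\lambda)U_F(z_2)$ and $U_L(p)=\lambda U_L(z_1)+(1-\lambda)U_L(z_2)$ are both linear in $\lambda$. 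The key observation is that every clause in \Cref{corollary: non-recursive characterization of y-shape inducibility} has the form $U_L(p)\geq(\text{constant threshold})$, where the thresholds are the precomputed $M_L(u)$ and $\min_{w'\neq w}M_L(w')$; whether the structural prerequisites hold (the least common ancestor being a leader node, as forced by \Cref{lemma: characterize feasible distributions}, and the existence of a suitable follower ancestor $v$) depends only on the support, not on $\lambda$, and the branch $U_L(z_1)=U_L(z_2)$ versus $U_L(z_1)\neq U_L(z_2)$ is likewise support-determined. Consequently the feasible set of $\lambda$ is an intersection of half-lines, hence a closed interval, and maximizing $U_F(p)$ over it is a trivial one-dimensional linear program whose optimum is attained at an endpoint. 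I would take the best value over all singletons and all pairs.

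Finally, correctness follows because all inequalities are non-strict, so each per-support feasible set is closed and the linear objective attains its maximum on it. By \Cref{lemma: simple inducible sp} the global optimum is attained by a Y-shape distribution, whose support is one of the enumerated candidates and whose mixing weight is feasible for that support; hence the best recorded value equals $\max_p U_F(p)$. The one point demanding care is the degenerate boundary $\lambda\in\{0,1\}$, where a pair collapses to a singleton: such a collapsed distribution must meet the strictly stronger single-support condition, so these endpoints are credited to the singleton enumeration rather than to the pair, while genuinely mixed optima (in particular pairs whose two leaves give the follower equal utility but whose singletons are not inducible) are retained from the pair subproblem. With $p^*$ identified, I would produce a witnessing $\tilde{U}_F^*$ via \Cref{coro: construct U_F in behavioral}.

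Summing $O(|H|+|Z|)$ preprocessing, $O(|Z|^2)$ constant-time subproblems, and the $O(|H|\cdot|Z|)$ construction yields a polynomial running time, establishing \Cref{thm: algorithm for behavioral case}. I expect the boundary bookkeeping just described—ensuring that every reported $\lambda$ corresponds to an actually inducible distribution rather than merely to the relaxed closed interval—to be the main obstacle, with \Cref{lemma: simple inducible sp} guaranteeing that no genuine optimum is lost by separating the endpoint cases from the interior ones.
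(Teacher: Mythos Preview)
Your overall strategy---reduce to Y-shape via \Cref{lemma: simple inducible sp}, enumerate supports of size at most two, and solve a one-dimensional LP per pair using the thresholds of \Cref{corollary: non-recursive characterization of y-shape inducibility}---is exactly the paper's approach (see Algorithm~\ref{alg:OptimalCheatingBehavioral} and the proof of \Cref{thm: algorithm for behavioral case}). The difference lies in how you handle the LP endpoints $\lambda\in\{0,1\}$, and there your proposal has a genuine gap.

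You write that a collapsed endpoint ``must meet the strictly stronger single-support condition'' and therefore credit it only to the singleton enumeration, which uses the path test $U_L(z)\geq M_L(u)$ for every leader $u$ on the root-to-$z$ path. But that path test is only \emph{sufficient}, not necessary, for a singleton to be inducible. Consider the tree with follower root having children $A$ (a leaf with $U_L(A)=0$) and $B$ (a leader node), where $B$ has leaf children $z_1$ ($U_L=3$, $U_F=100$) and $D$ ($U_L=10$, $U_F=0$). Here $M_L(B)=10>3=U_L(z_1)$, so your singleton test rejects $\delta_{z_1}$. Yet by \Cref{thm: cheat in behavioral commit}(2)(a) at the root, $U_L(z_1)=3\geq M_L(A)=0$, so $\delta_{z_1}$ \emph{is} inducible---and it is the unique optimum. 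Your pair LP for $(z_1,D)$ at the root-follower gives threshold $0$ and feasible interval $[0,1]$, with the maximum of $U_F$ at $\lambda=1$; you then discard this endpoint and fall back to the failed singleton check, outputting $U_F=0$ instead of $100$.

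The paper's algorithm avoids this by simply \emph{keeping} the endpoint $\alpha\in\{0,1\}$ produced by the pair LP (line~\ref{update answer} of Algorithm~\ref{alg:OptimalCheatingBehavioral}). This is sound: if the pair threshold at some follower $v$ is met at $\alpha=1$, then condition~(2)(a) of \Cref{thm: cheat in behavioral commit} holds at $v$ for $\delta_{z_1}$, and the same $Lminval$ constraints that validated the pair also push inducibility of $\delta_{z_1}$ up to the root. So the fix is not to defer endpoints to a separate (too-restrictive) singleton test, but to accept them directly from the pair enumeration. A secondary remark: your claimed $O(|Z|^2)$ constant-time-per-pair bound is optimistic without an extra ``best follower-ancestor threshold'' precomputation you did not describe; without it each pair costs $O(|H|)$, matching the paper's $O(|H|\cdot|Z|^2)$---still polynomial, so this does not affect the theorem.
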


\begin{proof}[Proof of \Cref{thm: algorithm for behavioral case}]
We prove that Algorithm \ref{alg:OptimalCheatingBehavioral} can find the optimal inducible distribution $p^*$ and the follower's payoff function $\tilde{U}_F^*$ in \Cref{thm: algorithm for behavioral case} in $O(|H|\cdot|Z|^2)$ time.

\textbf{Correctness.} The key observation of \Cref{lemma: simple inducible sp} shows that it suffices for us to find an optimal  Y-shape inducible distribution $p^*$, which dramatically simplifies our algorithm design.

Algorithm \ref{alg:OptimalCheatingBehavioral} enumerates all subtrees $T_v$ where $v\in H\cup Z$ and find all the inducible distributions that may be optimal, for all pairs $(z_1,z_2)\in E(v)$. The correctness then follows from \Cref{corollary: non-recursive characterization of y-shape inducibility}. 

\textbf{Complexity.} 
Firstly, for each $v\in H\cup Z$, we have $|E(v)|\leq |Z|^2$. So the complexity of line \ref{run mixture} would be $O(|H|\cdot|Z|^2)$.
We then analyze the complexity of the procedure $\mathtt{OptimalReport}$. For each $v\in H\cup Z$, the loop in line \ref{enumerate mixture} will run at most $O(|Z|^2)$ times, so the total complexity of the procedure $\mathtt{OptimalReport}$ would be $O(|H|\cdot|Z|^2)$.
Finally, after finding the optimal distribution $p^*$, by \Cref{coro: construct U_F in behavioral}, we can construct a payoff function that induces $p^*$,  $\tilde{U}_F^*$, in $O(|H|\cdot |Z|)$ time.

This completes the proof.
\end{proof}

\begin{algorithm}[!ht]
\caption{Optimally Reporting with Behavioral Commitment}
\label{alg:OptimalCheatingBehavioral}
\KwIn{
    An extensive-form game $(T_{root}, N, \mathcal{P},\{U_i\}_{i\in N})$.
}
\KwOut{
    A Y-shape inducible distribution $p^*$ and a 
    % follower's 
    payoff 
    function $\tilde{U}_F^*$ such that (1) $U_F(p^*)\geq U_F(p)$ for all inducible distributions $p$; (2)an SSE with behavioral commitment of game $(T_{root},U_L,\tilde{U}_F^*)$ correspond to $p^*$.
}

\BlankLine

\SetKwProg{procedure}{Procedure}{}{}
\procedure{$\mathtt{Main}()$}{
	For each $v\in H\cup Z$, let $E(v)$ be the set of all pairs of leaf nodes $(z_1,z_2)$ with $z_1\neq z_2$ in subtree $T_v$, such that the least common ancestor of $z_1$ and $z_2$ belongs to $\mathcal{P}^{-1}(L)$.\label{run mixture}\\
	
	\tcp{This step will run in $O(|H|\cdot|Z|^2)$ time.}
	
	Initialize $p^*$ to be a 
% 	probability 
	distribution 
% 	over leaf nodes 
	such that it corresponds to an SSE on game $(T_{root},U_L,U_F)$.
	
	Run $\mathtt{OptimalReport}(root,-\infty)$.\\
	\tcp{By $-\infty$ we mean that we initiate the second parameter to be sufficiently small. 
% 	The second parameter $-\infty$ is to make things easier for readers. 
% Formally, we can set it to be $\min_{z\in Z}U_L(z)-1$.
}
	Construct a follower's payoff function $\tilde{U}_F^*$ by \Cref{coro: construct U_F in behavioral}.\\
	\Return $\tilde{U}_F^*$ and $p^*$.
}
\procedure{$\mathtt{OptimalReport}(v,Lminval)$}{
    \If{$v$ is a leaf node}{
        If $U_L(v)\geq Lminval$, let $p(v)=1$.
            
        If $U_L(p)\geq U_L(p^*)$, then let $p^*=p$.
        
        \textbf{return}
    }
    
\For{$w\in\child{v}$}{
   % Let $Threshold\leftarrow \max(Lminval, \max_{w\neq w'\in\child{v}}M_L(w'))$.\\
   $\mathtt{OptimalReport}(w,\max(Lminval,M_L(v)))$.\label{reduce}\\
   
\If{$\mathcal{P}(v)=F$}{
   % Find $w^*\in \arg\min_{w'\in\child{v}}M_L(v)$.\\
    Let $Threshold \leftarrow \max(\min_{w'\in\child{v},w'\neq w}M_L(w'),Lminval)$.\\
        \For{$(z_1,z_2)\in E(w)$}{\label{enumerate mixture}
            \textbf{if} $U_L(z_1)<Threshold$ and $U_L(z_2)<Threshold$ \textbf{continue}.
            
            Find $[a,b]\subseteq[0,1]$ s.t. $\forall \alpha\in[a,b],\alpha U_L(z_1)+(1-\alpha)U_L(z_2)\geq Threshold$.\\
            Find $\alpha\in\arg\max_{\alpha'\in[a,b]}(\alpha'(U_F(z_1),U_L(z_1))+(1-\alpha')(U_F(z_2),U_L(z_2))$.\\
            
            \tcp{$(a_1,b_1)< (a_2,b_2)$ if and only if \textbf{either} $a_1<a_2$ \textbf{or} $a_1=a_2$ and $b_1<b_2$.}
            
            Let $p(z_1)=\alpha$ and $p(z_2)=1-\alpha$.
            
            If $U_L(p)\geq U_L(p^*)$, then let $p^*=p$. \label{update answer}
        }
    }
    }
}
\end{algorithm}

\subsection{Follower Can Get No Less Actual Utility under Behavioral Commitment}
\label{sec: higher utility than pure}

Though the characterization conditions of inducible distributions become much more complicated, however, as we show in \Cref{proposition: higher utility}, via optimally inducing an inducible distribution, the follower can get no less actual utility under behavioral commitment than that under pure commitment. 

\begin{restatable}{proposition}{propositionhigherutility}
\label{proposition: higher utility}
Let $P(\Pi_L\times \Pi_F)$ ($P(\Delta_L\times \Pi_F)$) be the set of all inducible leaf nodes (inducible 
% probability 
distributions, respectively). Then
\begin{equation*}
    \max_{p\in P(\Delta_L\times \Pi_F)} U_F(p) \geq \max_{z\in P(\Pi_L\times \Pi_F)} U_F(z). 
\end{equation*}
\end{restatable}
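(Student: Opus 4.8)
The plan is to reduce the proposition to a single structural fact: every leaf node $z$ that is inducible with pure commitment is also inducible with behavioral commitment, when $z$ is identified (as the paper does) with the distribution $p_z$ concentrated on it. Granting this, each $z\in P(\Pi_L\times\Pi_F)$ yields a distribution $p_z\in P(\Delta_L\times\Pi_F)$ with $U_F(p_z)=U_F(z)$, so $\{U_F(z):z\in P(\Pi_L\times\Pi_F)\}\subseteq\{U_F(p):p\in P(\Delta_L\times\Pi_F)\}$ and the asserted inequality between the two maxima is immediate.

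To establish the structural fact I would first invoke \Cref{thm: cheat in pure commit} to restate ``$z$ purely inducible'' as $U_L(z)\geq M_L(root)$, and then verify that $p_z$ satisfies the characterization of \Cref{thm: cheat in behavioral commit} by structural induction on the game tree, paralleling the inductive proof of \Cref{thm: cheat in pure commit}. The base case ($root\in Z$) is trivial. For the inductive step, let $v\in\child{root}$ be the child on the path to $z$, so that $\operatorname{Supp}(p_z,root)=\{v\}$. When $\mathcal{P}(root)=L$, I check conditions (1)(a)--(c): condition (a) is vacuous because the support at the root is the single node $v$; condition (b) is exactly the hypothesis $U_L(z)\geq M_L(root)$; and condition (c) follows by induction, since $M_L(root)=\max_{w\in\child{root}}M_L(w)\geq M_L(v)$ gives $U_L(z)\geq M_L(v)$, so $p_z$ is inducible at $T_v$. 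When $\mathcal{P}(root)=F$, I argue a dichotomy from $U_L(z)\geq M_L(root)=\min_{w\in\child{root}}M_L(w)$: if $U_L(z)\geq\min_{v'\in\child{root},v'\neq v}M_L(v')$ then condition (2)(a) holds outright; otherwise the minimum defining $M_L(root)$ cannot be attained at a sibling, hence $M_L(root)=M_L(v)$ and $U_L(z)\geq M_L(v)$, so $p_z$ is inducible at $T_v$ by induction and condition (2)(b) holds.

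The main obstacle to keep in view is that an SSE with pure commitment need not survive as an SSE once the leader may mix, because behavioral commitment strictly enlarges the leader's deviation space: a priori the leader could abandon the target leaf by committing to a genuinely mixed strategy and do better. This is precisely what the verification above is designed to forbid, and the engine behind it is \Cref{lemma: zero-sum in behavioral commit}: on each off-path subtree one makes the reported game locally constant-sum, and there the leader gains nothing from behavioral over pure commitment, so her best deviation is capped by $M_L(v')\leq U_L(z)$. Consequently the behavioral deviations contribute no new threat beyond the pure ones already controlled in \Cref{thm: cheat in pure commit}.

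Accordingly I would present the argument in one of two interchangeable forms. Either run the induction directly against \Cref{thm: cheat in behavioral commit} as sketched above, or, equivalently, reuse verbatim the payoff functions constructed in the proof of \Cref{thm: cheat in pure commit} and re-certify optimality of the leader's commitment using \Cref{lemma: zero-sum in behavioral commit} in place of its pure analogue and \Cref{lemma: min value in behavioral commit} in place of \Cref{lemma: min utility in pure commit}. In both routes the combinatorial bookkeeping is identical to the pure case; the only genuinely new content is the behavioral-deviation check, which is exactly where I expect the subtlety to lie and where the constant-sum gadget does the work.
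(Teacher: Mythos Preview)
Your proposal is correct and follows essentially the same approach as the paper: prove $P(\Pi_L\times\Pi_F)\subseteq P(\Delta_L\times\Pi_F)$ by structural induction, verifying at each node that a purely inducible leaf (equivalently, one with $U_L(z)\geq M_L(root)$) satisfies the conditions of \Cref{thm: cheat in behavioral commit}. The only cosmetic difference is the order of the case split when $\mathcal{P}(root)=F$ --- the paper first tests whether $v$ attains the minimum of $M_L$ over children, you first test whether condition (2)(a) holds directly --- but the logic is identical.
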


\section{Raise the Bar: Induce the Unique Outcome}
\label{sec: strong inducibility}

So far, we have characterized all the inducible 
% probability 
distributions on leaf nodes, i.e., the possible outcomes that can be led to by an
induced SSE. 
However, while all SSEs yield the same utilities for the leader, there is still the equilibrium selection problem for the follower. An example is shown in \Cref{example: different utility and non-existence of optimal strong inducibility} that different SSEs yield dramatically different utilities for the follower. 

Thus in this section, we proceed to a more restricted condition for strategic reporting: When can a 
% probability 
distribution be the only outcome that all SSEs of an induced game lead to? 

Formally, we define 
% the notion of 
strong inducibility,  which is orginally proposed in \citet{birmpas2021optimally}.

\begin{definition}[Strong Inducibility]
A distribution $p\in \Delta(Z)$ is \textbf{strongly inducible with pure (behavioral) commitment} in game $(T_{root}, U_L,U_F)$ if there exists a payoff function $\tilde{U}_F$, such that all SSEs with pure (behavioral, respectively) commitments of game $(T_{root},U_L,\tilde{U}_F)$ correspond to $p$. 
\end{definition}

We characterize all the strongly inducible outcomes and give related algorithmic results with different commitments in \Cref{sec: characteriation of strong inducibility}. 
We then compare the optimal utilities the follower can get in \textbf{one} game under inducible and strongly inducible distributions, respectively. We give a characterization of games where these two values are equal
in \Cref{sec: efficiency of strong inducibility}. 

\subsection{Characterization and Near Optimality}
\label{sec: characteriation of strong inducibility}

We first give the characterization of strong inducibility with pure commitment.

\begin{restatable}{theorem}{thmstronginducibilityunderpure}
\label{thm: strong inducibility under pure}
Given leaf node $z\in Z$, let $v\in\child{root}$ be
% be the child node of $root$ on the path from $root$ to $z$, 
the ancestor of $z$, 
then $z$ is strongly inducible with pure commitment if and only if
\begin{enumerate}
    \item if $\mathcal{P}(root)=L$, then all of the following conditions are met:
    \begin{enumerate}
        \item $U_L(z)> \max_{v^\prime \in \child{root}, v^\prime\ne v} M_L(v^\prime)$;
        \item $z$ is strongly inducible at $T_v$.
        % subgame $(T_{v},U_L|_{v},U_F|_{v})$.
    \end{enumerate}
    \item if $\mathcal{P}(root)=F$, then at least one of following two conditions are met: 
    \begin{enumerate}
        \item $U_L(z)> \min_{v^\prime \in \child{root}, v^\prime\ne v} M_L(v^\prime)$;
        \item $z$ is strongly inducible at $T_v$. 
        % subgame $(T_{v},U_L|_{v},U_F|_{v})$. 
    \end{enumerate}
\end{enumerate}
\end{restatable}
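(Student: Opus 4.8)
The plan is to establish the biconditional directly at the root by structural recursion, paralleling the proof of \Cref{thm: cheat in pure commit}: the base case $root\in Z$ is immediate, since the empty strategy space forces the unique outcome $root=z$, and in the inductive step I split on $\mathcal{P}(root)$ and treat necessity and sufficiency separately. The recursion is carried entirely by the clause ``$z$ is strongly inducible at $T_v$'', so it suffices to prove the one-level equivalence, invoking on subtrees only the definition of strong inducibility together with the already-established machinery (\Cref{lemma: zero-sum in pure commit}, \Cref{lemma: feasible}, \Cref{thm: cheat in pure commit}). The single most useful fact throughout is the maximin guarantee implicit in \Cref{lemma: zero-sum in pure commit}: for any reported $\tilde U_F$ and any child $v'$, by committing to her maximin strategy on $T_{v'}$ the leader secures at least $M_L(v')$ there, and if $\tilde U_F|_{v'}=-U_L|_{v'}$ she secures \emph{exactly} $M_L(v')$.

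For sufficiency I give explicit constructions. When $\mathcal{P}(root)=L$, I take $\tilde U_F|_v$ to be a function that strongly induces $z$ on $T_v$ (supplied by condition (b)) and set $\tilde U_F|_{v'}=-U_L|_{v'}$ on every other child; condition (a), $U_L(z)>\max_{v'\ne v}M_L(v')$, then makes every off-path child worth strictly less than $U_L(z)$ to the leader, so she must commit to $v$, after which strong inducibility on $T_v$ pins the outcome to $z$. When $\mathcal{P}(root)=F$, condition (b) is handled by making every leaf outside $T_v$ strictly worse for the follower than the worst leaf of $T_v$, so the follower always enters $v$ and the $T_v$-inducer finishes the job. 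Condition (a) reuses the ``reward/punish'' payoff from the $\mathcal{P}(root)=F$ branch of \Cref{thm: cheat in pure commit}: letting $m_0=\min_{v'\ne v}M_L(v')<U_L(z)$, I reward $z$ above every competitor and pin each off-$v$ option to leader value exactly $m_0$ via a zero-sum block. The follower's control of the root then disciplines the leader, whose only way to beat $m_0$ is to permit the follower's favorite leaf $z$, yielding the unique outcome $z$ regardless of whether $U_L(z)\ge M_L(v)$.

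For necessity I argue by contradiction, manufacturing a second SSE outcome whenever a condition fails. The recurring device is to extend a candidate subtree strategy to the whole tree by playing maximin on all other children, guaranteeing the leader $M_L(v')$ there; combined with the SSE tie-break in favor of the leader, any commitment that merely \emph{matches} the equilibrium value $U_L(z)$ through a different child produces a competing SSE whose outcome lies outside $z$'s subtree, contradicting uniqueness. This forces the strict inequality in the $\mathcal{P}(root)=L$ branch (equality $U_L(z)=M_L(v')$ already breaks uniqueness, while $U_L(z)<M_L(v')$ lets the leader secure strictly more than $U_L(z)$ off $v$ and so contradicts that $z$ is an equilibrium outcome at all), and, once the $T_v$-equilibrium value $\mu_v$ is shown to equal $U_L(z)$, forces strong inducibility on $T_v$ in both branches by lifting any alternative $T_v$-outcome back to a whole-game SSE.

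The main obstacle is the necessity direction when $\mathcal{P}(root)=F$, because the follower owns the root move, so a whole-game SSE need not restrict to a $T_v$-SSE and the subtrees are coupled through the follower's root choice. The crux is to prove $\mu_v=U_L(z)$: the inequality $\mu_v\ge U_L(z)$ is clear from the feasible profile reaching $z$ (\Cref{lemma: feasible}), while $\mu_v\le U_L(z)$ is exactly where the failure of (a), i.e. $M_L(v')\ge U_L(z)$ for every $v'\ne v$, is needed — committing to a $T_v$-optimal strategy plus maximin elsewhere would otherwise let the leader realize $\mu_v>U_L(z)$ (if the follower stays at $v$) or strictly beat $U_L(z)$ off $v$, either contradicting that $U_L(z)$ is the equilibrium value. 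Once $\mu_v=U_L(z)$, lifting any alternative $T_v$-equilibrium outcome $z''\ne z$ to a whole-game SSE by the same maximin extension contradicts uniqueness, so the reported $\tilde U_F|_v$ must already strongly induce $z$ on $T_v$, establishing (b).
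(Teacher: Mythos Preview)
Your plan is correct and follows essentially the same architecture as the paper's proof: structural induction with the same base case, the same case split on $\mathcal{P}(root)$, and the same explicit constructions for sufficiency (zero-sum padding when $\mathcal{P}(root)=L$, the reward/punish payoff from \Cref{thm: cheat in pure commit} for condition (2)(a), and the ``make $T_v$ dominate'' extension for condition (2)(b)).

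The one place where you diverge from the paper is the necessity argument when $\mathcal{P}(root)=F$. The paper asserts the stronger intermediate claim that any $\tilde U_F$ strongly inducing $z$ must force $\pi_F(root)=v$ in \emph{every} feasible profile, and then reduces to $T_v$. Your route via $\mu_v=U_L(z)$ and lifting is the more robust one: the paper's claim, as literally stated, is not true for every strongly-inducing $\tilde U_F$ (one can have ties at the root that create feasible off-$v$ profiles without creating off-$v$ SSEs), and its one-line justification (``the leader can get at least $M_L(v')$ by committing to $\pi'_L$'') does not hold for an arbitrary $\pi'_L$. Your argument sidesteps this by working directly with the subgame SSE value.

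Two small points of care in your write-up. First, ``strictly beat $U_L(z)$ off $v$'' is slightly too strong: when (a) fails you only have $M_L(v')\ge U_L(z)$, so off-$v$ the leader merely \emph{matches} $U_L(z)$; but matching with a different outcome already yields a second SSE, which is the contradiction you need. Second, when you lift an alternative $T_v$-SSE $(\pi_L'',\pi_F'')|_v$ back to the whole game via the maximin extension, you should explicitly split on whether the follower's global best response stays at $v$ or not: if it leaves $v$, the maximin padding plus $M_L(v')\ge U_L(z)$ gives a non-$z$ SSE directly; if it stays at $v$, every $T_v$-best response to $\pi_L''|_v$ lifts to a global best response, so the leader-favoring one realizes a $T_v$-SSE outcome (value $\mu_v=U_L(z)$), and at least one such outcome is $z''\ne z$. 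Either branch contradicts uniqueness.
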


As for the behavioral commitment case, we here give a sufficient and necessary condition for all  Y-shape strongly inducible distributions. Actually, this suffices for us to give algorithms and find the characterization in the next section.

\begin{restatable}{theorem}{thmstronginducibilityunderbehavioral}
\label{thm: strong inducibility under behavioral}
A ``Y-shape'' realizable distribution on leaf nodes $p\in \Delta(Z)$ is strongly inducible if and only if 
\begin{enumerate}
    \item if $\mathcal{P}(root)=L$, then $|\operatorname{Supp}(p,root)
        |=1$. Let $\operatorname{Supp}(p,root)
        =\{v\}$, then all of the following conditions are met:
    \begin{enumerate}
        \item $U_L(p)> \max_{v^\prime \in \child{root}, v^\prime\ne v} M_L(v^\prime)$;
        \item $p|_{v}$ is strongly inducible at 
        $T_v$. 
        % subgame $(T_{v},U_L|_{v},U_F|_{v})$.  
    \end{enumerate}
    \item if $\mathcal{P}(root)=F$,
     let $\operatorname{Supp}(p,root)
        =\{v\}$, then $U_L(z_1)\ne U_L(z_2)$
        if $\operatorname{Supp}(p)= \{z_1,z_2\}$; and at least one of following two conditions are met: 
    \begin{enumerate}
        \item $U_L(p)> \min_{v^\prime \in \child{root}, v^\prime\ne v} M_L(v^\prime)$;
        \item $p|_{v}$ is strongly inducible at $T_v$. 
        % subgame $(T_{v},U_L|_{v},U_F|_{v})$. 
    \end{enumerate}
\end{enumerate}
\end{restatable}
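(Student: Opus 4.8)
The plan is to prove both directions by structural induction on the game tree, mirroring the proof of \Cref{thm: cheat in behavioral commit} but strengthening every step to enforce \emph{uniqueness} of the induced outcome. The base case $root\in Z$ is immediate, since the tree has a single leaf and the SSE is forced. Throughout I will use that for a realizable Y-shape $p$ with $\operatorname{Supp}(p)=\{z_1,z_2\}$, the least common ancestor of $z_1,z_2$ is a leader node (by \Cref{lemma: characterize feasible distributions} every follower node carries single-child support), so the only genuine mixing is performed by the leader at that node. The inductive step splits on $\mathcal{P}(root)$.

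For $\mathcal{P}(root)=L$, I first argue necessity. If $|\operatorname{Supp}(p,root)|\geq 2$, then since the leader controls $root$ with no threat above her, an SSE mixing at $root$ forces the supported children to give her equal utility; but then committing purely to either child is an equally good commitment reaching a different distribution, so uniqueness fails---hence $|\operatorname{Supp}(p,root)|=1$, say $\{v\}$. Condition (1)(a) must hold \emph{strictly}: if $U_L(p)\le M_L(v')$ for some $v'\neq v$, then by \Cref{lemma: zero-sum in behavioral commit} the leader can guarantee $M_L(v')\ge U_L(p)$ in $T_{v'}$ regardless of $\tilde{U}_F$, producing an SSE whose outcome lies in $T_{v'}\neq T_v$. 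Condition (1)(b) follows by restricting any whole-game SSE to $T_v$ via \Cref{lemma: feasible} and noting that uniqueness on the whole tree descends to $T_v$. For sufficiency I would set $\tilde{U}_F=-U_L$ on each $T_{v'}$ with $v'\neq v$ (making those subgames zero-sum, so the leader earns at most $M_L(v')<U_L(p)$ there) and place the inductively guaranteed strongly-inducing function on $T_v$; the strict gap forces a pure commitment to $v$, and induction pins the unique outcome $p|_v$.

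For $\mathcal{P}(root)=F$, write $\operatorname{Supp}(p,root)=\{v\}$. When $\operatorname{Supp}(p)=\{z_1,z_2\}$ I first show $U_L(z_1)\ne U_L(z_2)$ is necessary: if these were equal, the leader's utility is constant in the mixing weight $\alpha$ at the common-ancestor leader node, so every $\alpha$ compatible with the follower's (linear) incentive constraint at $v$ gives an SSE, and this compatible set is an interval---never a single interior point---yielding distinct outcomes. Given $U_L(z_1)\neq U_L(z_2)$, sufficiency splits as in \Cref{thm: cheat in behavioral commit}. In Case (b), when $p|_v$ is strongly inducible at $T_v$, I place the inducing function on $T_v$ and push the follower's payoff strictly below everywhere outside $T_v$, forcing the follower to $v$ while induction supplies the unique subgame outcome. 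In Case (a), with $\min_{v'\neq v}M_L(v')=:m_0$ attained at $v_0$ and $U_L(p)>m_0$, I reuse the Case-1 construction of \Cref{thm: cheat in behavioral commit}: $\tilde{U}_F=-U_L$ on $\operatorname{Supp}(p)$, $\tilde{U}_F=-U_L+m_0-U_L(p)$ on $T_{v_0}$, and a very negative constant elsewhere. Then the leader can hold a deviating follower to exactly $-U_L(p)$ at $v_0$, so the leader raises the weight on the higher-$U_L$ leaf until the follower's payoff $-U_L(\text{mix})$ at $v$ meets that threshold; because $U_L(z_1)\ne U_L(z_2)$ makes $U_L(\text{mix})$ strictly monotone in $\alpha$, this boundary is the \emph{single} point giving exactly $p$, and the strict gap $U_L(p)>m_0$ lets the follower break the resulting tie toward $v$ while excluding any better leader commitment.

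The main obstacle is the necessity of ``(a) or (b)'' in the follower case, which I would prove by contrapositive: assume both fail, so $U_L(p)\le m_0$ and $p|_v$ is not strongly inducible at $T_v$. Fixing any candidate $\tilde{U}_F$ making $p$ an SSE outcome, its restriction to $T_v$ is a subgame SSE realizing $p|_v$ by \Cref{lemma: feasible}; since $p|_v$ is not strongly inducible, $\tilde{U}_F|_v$ admits a second subgame SSE with outcome $q\neq p|_v$ and $U_L(q)=U_L(p)$. The delicate part is to promote one of these competing alternatives to a genuine whole-game SSE with outcome $\neq p$ under the \emph{same} $\tilde{U}_F$: either lift $q$ (checking the follower still weakly prefers $v$ after the leader minimaxes the off-path children, which is exactly where $U_L(p)\le m_0$, i.e.\ every $M_L(v')\ge U_L(p)$, prevents the leader from being over-rewarded elsewhere), or else exploit a child $v_0$ with $M_L(v_0)\ge U_L(p)$ to build an SSE whose outcome lies in $T_{v_0}$. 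Balancing these two sources of non-uniqueness---subgame multiplicity from the failure of (b) and the follower's safe root-level alternative from the failure of (a)---against fixed reported payoffs is the crux of the argument.
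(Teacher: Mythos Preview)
Your proposal follows the same structural-induction scheme as the paper, and most steps (the base case, both directions for $\mathcal{P}(root)=L$, and the sufficiency for $\mathcal{P}(root)=F$) match the paper's proof essentially verbatim. Two places deserve comment.

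For the necessity of $U_L(z_1)\neq U_L(z_2)$ when $\mathcal{P}(root)=F$, you argue via the ``interval of admissible mixing weights.'' The paper instead invokes \Cref{lemma: characterzie feasible under behavioral}: if $U_L(z_1)=U_L(z_2)$ and, say, $\tilde{U}_F(z_1)\le\tilde{U}_F(z_2)$, then the pure outcome $z_2$ satisfies the feasibility condition at every follower node on its path (since $\tilde{U}_F(z_2)\ge\tilde{U}_F(p)|_w$ there) and has $U_L(z_2)=U_L(p)$, so $z_2$ alone is a second SSE outcome. Your interval argument can be made rigorous but needs the extra observation that all the follower constraints cut in the same direction, so the feasible set of $\alpha$ is a half-interval, never a singleton in $(0,1)$; the paper's route via the auxiliary lemma is more direct.

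For the necessity of ``(a) or (b)'' when $\mathcal{P}(root)=F$, there is a genuine gap in your sketch. You write that ``its restriction to $T_v$ is a subgame SSE realizing $p|_v$ by \Cref{lemma: feasible},'' but that lemma only propagates \emph{feasibility} and tie-breaking downward, not leader-optimality of the commitment: a priori the leader might commit sub-optimally inside $T_v$ in order to shape the follower's choice at $root$, so $(\delta_L,\pi_F)|_v$ need not be a subgame SSE, and your second subgame SSE $q$ need not exist. The paper sidesteps this by arguing in the other direction: assuming $U_L(p)\le m_0$ and $p$ strongly inducible via $\tilde{U}_F$, it shows that \emph{every} feasible profile under $\tilde{U}_F$ must have $\pi'_F(root)=v$ (a feasible profile entering some $v'\neq v$ would let the leader secure $\ge M_L(v')\ge m_0\ge U_L(p)$ elsewhere, yielding a second SSE outcome). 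Once all feasible profiles are confined to $T_v$, the whole-game SSE problem collapses to the subgame, and strong inducibility of $p|_v$ at $T_v$ follows immediately. Recasting your contrapositive through this ``every feasible profile goes through $v$'' claim is what closes the argument.
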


The strict inequalities in \Cref{thm: strong inducibility under behavioral} show that different from the case of  inducibility, strongly inducible distributions may not even exist: e.g., games that the leader has a constant payoff function. Furthermore, if strongly inducible distributions do exist, there may not be an optimal one (see an example in \Cref{example: different utility and non-existence of optimal strong inducibility}); even if an optimal strongly inducible distribution exists, the follower may get far less utility than he does under optimal inducible distributions (see an example in \Cref{example: different utility between optimal inducibility and optimal strong inducibility}).

We show it is polynomial-time tractable to decide under which aforementioned case a game is, and find an (near-)optimal solution if it exists. 

\begin{restatable}{theorem}{thmalgorithmforstronglyinducibleunderpure}
\label{thm: algorithm for strongly inducible under pure}
It is polynomial-time tractable to

(1) decide if a strongly inducible leaf node exists;

(2) if so, find an optimal strongly inducible leaf node and construct a 
% corresponding 
payoff function that induces it. 

\end{restatable}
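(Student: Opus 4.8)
The plan is to turn the recursive characterization of \Cref{thm: strong inducibility under pure} into a verifier and then optimize over all leaves. First I would compute $M_L(v)$ for every node $v\in H\cup Z$ by a single postorder traversal following \Cref{def: max-min value}, in $O(|H|+|Z|)$ time. During the same traversal, for each internal node $v$ I would record, together with the values $\{M_L(w):w\in\child{v}\}$, the largest and second-largest (and smallest and second-smallest) among them, so that for any on-path child $w$ the quantities $\max_{w'\ne w}M_L(w')$ and $\min_{w'\ne w}M_L(w')$ can be read off in $O(1)$ (using the second extremum exactly when $w$ itself realizes the extremum).

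Next I would decide, for each leaf $z\in Z$, whether $z$ is strongly inducible by evaluating the conditions of \Cref{thm: strong inducibility under pure} along its path $root=u_0,u_1,\dots,u_k=z$. Reading the recursion from the leaf upward, I maintain a boolean that starts \textbf{true} at $u_k$ and, at each internal ancestor $u_i$ with on-path child $u_{i+1}$, is combined by $\wedge\,[\,U_L(z)>\max_{w'\ne u_{i+1}}M_L(w')\,]$ when $\mathcal{P}(u_i)=L$ and by $\vee\,[\,U_L(z)>\min_{w'\ne u_{i+1}}M_L(w')\,]$ when $\mathcal{P}(u_i)=F$. This costs $O(\mathrm{depth}(z))$ per leaf, hence $O(|H|\cdot|Z|)$ over all leaves, which is polynomial. (In fact, since every atomic clause has the form $U_L(z)>c$, strong inducibility at $T_{u_i}$ holds iff $U_L(z)>\theta_i$ where $\theta_i=\max(\cdot,\theta_{i+1})$ at an $L$-node and $\theta_i=\min(\cdot,\theta_{i+1})$ at an $F$-node; composing these clamp operations top-down lets one compute all thresholds in a single $O(|H|+|Z|)$ traversal, but the coarser bound already suffices.) A strongly inducible leaf exists iff at least one leaf passes this test, settling part (1); among those that pass I return $z^{*}\in\arg\max U_F(z)$, the optimal strongly inducible leaf, settling part (2).

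Finally, to produce a witnessing payoff function $\tilde{U}_F^{*}$ that makes $z^{*}$ the \textbf{unique} SSE outcome, I would instantiate the construction underlying the sufficiency direction of \Cref{thm: strong inducibility under pure}, paralleling the constructive arguments of \Cref{thm: cheat in pure commit} and \Cref{coro:construct U_F in pure}. Walking the path to $z^{*}$, on every off-path subtree $T_{v'}$ I set $\tilde{U}_F^{*}=-U_L$, so that by \Cref{lemma: zero-sum in pure commit} the leader's best attainable value there is exactly $M_L(v')$; I then place strict reward gaps (offsets of the form $\pm 1$ around the relevant $M_L$ value) on the on-path leaves so the leader \emph{strictly} prefers the continuation toward $z^{*}$ to every alternative. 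Because every inequality in \Cref{thm: strong inducibility under pure} is strict, these gaps are realizable, and strictness is precisely what forces $z^{*}$ to be the unique SSE outcome. This touches each node a constant number of times and runs in $O(|H|+|Z|)$.

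The step I expect to be the main obstacle is the last one. For mere inducibility (\Cref{coro:construct U_F in pure}) it suffices that \emph{some} SSE reaches the target and ties among equally good leader options are harmless, whereas here I must certify that \emph{no} other SSE exists. The delicate point is verifying that the strict separations survive the recursion at $F$-nodes, where the characterization is disjunctive: whichever branch of the disjunction holds, I must simultaneously cap every off-path subtree strictly below $U_L(z^{*})$ via the zero-sum gadget and give the follower a strict incentive to stay on the path, so that the leader has a unique optimal commitment. Checking that all these strict inequalities can be met together along the whole path, and that the resulting $\tilde{U}_F^{*}$ eliminates all spurious SSEs, is the part requiring the most care.
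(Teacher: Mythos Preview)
Your proposal is correct and follows essentially the same approach as the paper: enumerate all leaves, test each against the recursive characterization of \Cref{thm: strong inducibility under pure}, select the one maximizing $U_F$, and then instantiate the constructive sufficiency argument of that theorem to build $\tilde U_F^*$. The paper's own proof is a two-line sketch with bound $O((|H|+|Z|)^2)$; your version is more explicit about implementation (precomputing second extrema, the per-leaf path check), and your parenthetical observation that the nested $\wedge/\vee$ of predicates $U_L(z)>c$ collapses to a single threshold computable by a top-down pass of clamp functions is a genuine refinement not in the paper, though unnecessary for the polynomial-time claim. The ``obstacle'' you flag at $F$-nodes is already resolved by the case split in the sufficiency proof of \Cref{thm: strong inducibility under pure}: whichever disjunct holds, the corresponding construction there (zero-sum off-path plus strict $\pm1$ gaps, or recursion into $T_v$ with a constant penalty elsewhere) makes $z^*$ the unique SSE outcome, so no additional argument is needed.
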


Denote the set of all strongly inducible distributions as $ SP(\Delta_L\times \Pi_F)$, we have, 
\begin{restatable}{theorem}{thmalgorithmforstronglyinducibleunderbehavioral}
\label{thm: algorithm for strongly inducible under behavioral}
It is polynomial-time tractable to 

(1) decide if a strongly inducible 
% probability 
distribution exists;

(2) if so, decide if an optimal strongly inducible 
% probability 
distribution exists;

(3) if so, find an optimal strongly inducible 
% probability
distribution; if not, for any $\epsilon>0$, find an $\epsilon$-optimal strongly inducible 
% probability 
distribution $p^*$ such that
\begin{equation*}
   U_F(p^*)\geq \sup_{p\in SP(\Delta_L\times \Pi_F)}U_F(p)-\epsilon. 
\end{equation*}
In both cases, construct a payoff function that induces the distribution in polynomial time. 
\end{restatable}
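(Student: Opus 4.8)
The plan is to follow the template of the proof of \Cref{thm: algorithm for behavioral case}, while carefully tracking the \emph{strict} inequalities of \Cref{thm: strong inducibility under behavioral}, since those are exactly what can make an optimum fail to exist and thus force the three-part statement. First I would argue, just as \Cref{lemma: simple inducible sp} does for ordinary inducibility, that it suffices to search over ``Y-shape'' strongly inducible distributions: for any strongly inducible $p$ one finds a Y-shape strongly inducible $p'$ with $U_F(p')\ge U_F(p)$, so the supremum of $U_F$ over $SP(\Delta_L\times\Pi_F)$ is unchanged by restricting to this subclass. By \Cref{lemma: characterize feasible distributions} a realizable Y-shape distribution is specified by a singleton leaf, or by an unordered pair of leaves together with a single mixing weight $\alpha\in[0,1]$ placed at their least common ancestor, which is necessarily a leader node.

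Next I would convert the recursive test of \Cref{thm: strong inducibility under behavioral} into a non-recursive, path-based form, the strong-inducibility analogue of \Cref{corollary: non-recursive characterization of y-shape inducibility}. Fix a candidate support, either $\{z\}$ or a pair $\{z_1,z_2\}$ with $U_L(z_1)\ne U_L(z_2)$ and least common ancestor $v^*$. Walking the root-to-$v^*$ path and unrolling the recursion, a pair candidate is admissible only if $v^*$ has a follower-node ancestor that can serve as a ``pivot'' forcing the follower toward the two leaves; the strict inequalities of \Cref{thm: strong inducibility under behavioral} then contribute finitely many threshold conditions, each of the strict form $U_L(p)>M_L(\cdot)$, where at each follower node one takes the cheaper of its two options so as to relax the requirement. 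Since $U_L(p)=\alpha U_L(z_1)+(1-\alpha)U_L(z_2)$ is affine and strictly monotone in $\alpha$, all of these collapse to a single strict inequality $U_L(p)>C$ with a constant threshold $C$ equal to a maximum of precomputed $M_L$ values; precomputing every $M_L(\cdot)$ by \Cref{def: max-min value} costs $O(|H|+|Z|)$, after which $C$ is read off in one path scan. A singleton is the degenerate case where $U_L(p)=U_L(z)$ is fixed and the test is a pure yes/no comparison.

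Each of the $O(|Z|^2)$ candidates then reduces to maximizing the affine objective $U_F(p)=\alpha U_F(z_1)+(1-\alpha)U_F(z_2)$ over the set $\{\alpha\in[0,1]:\alpha U_L(z_1)+(1-\alpha)U_L(z_2)>C\}$, which is an interval carrying at most one open endpoint, the one created by the strict constraint, its other endpoint being the closed box endpoint $0$ or $1$. For each candidate I record its supremum value $s$ and a flag for whether $s$ is attained: it is attained when the objective is constant or is optimized at the closed endpoint, and only approached when it is optimized toward the open endpoint. Part (1) then reports existence iff some candidate has nonempty feasible set; part (2) sets the global supremum to the maximum of the $s$ over candidates and declares an optimum to exist iff some candidate realizing this value has its attainment flag set; and part (3) outputs the attaining distribution if one exists, and otherwise takes a candidate realizing the global supremum and selects an interior $\alpha$ whose objective is within $\epsilon$ of $s$, which is possible by continuity since $U_F(p)$ approaches $s$ along the open endpoint. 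In every case the accompanying payoff function is produced by the constructive direction of \Cref{thm: strong inducibility under behavioral} (the strong-inducibility counterpart of \Cref{coro: construct U_F in behavioral}), so the total cost is the $O(|Z|^2)$ enumeration times an $O(|H|)$ path scan plus the construction, hence polynomial.

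The main obstacle I expect is precisely the open/closed bookkeeping forced by the strict inequalities. Unlike the inducibility algorithm, where every feasible $\alpha$-interval is closed and an optimum always exists, here I must decide (i) whether an open region is nonempty, a strict-feasibility question that is pointwise easy but must be settled uniformly after merging the path constraints into the single threshold $C$, and, more delicately, (ii) whether the supremum is \emph{attained}, which is exactly what separates parts (2) and (3) and depends on whether the objective's optimizing direction points at the open or the closed endpoint of each candidate interval. A secondary point needing care is verifying that the Y-shape reduction preserves \emph{strong} (unique-outcome) inducibility rather than mere inducibility, and that the $\epsilon$-perturbed distribution selected in part (3) still satisfies every strict inequality, so that the constructed $\tilde{U}_F^*$ genuinely makes it the unique SSE outcome.
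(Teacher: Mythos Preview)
Your approach is essentially the same as the paper's: reduce to Y-shape strongly inducible distributions (the paper does this via \Cref{coro: simple strongly inducible distribution}), pass to the non-recursive path characterization (the paper's \Cref{corollary: non-recursive characterization of y-shape strong inducibility}), enumerate the $O(|Z|^2)$ supports, and analyze each as a one-dimensional linear program with a strict threshold. The paper packages the resulting image as a finite union of open intervals (from pairs) together with a finite set of points (from singletons), then reads off whether the supremum is attained.

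There is one technical slip worth flagging. For a pair candidate $\{z_1,z_2\}$ you take the feasible set to be $\{\alpha\in[0,1]:\alpha U_L(z_1)+(1-\alpha)U_L(z_2)>C\}$ and assert it has ``at most one open endpoint,'' the other being the closed box endpoint $0$ or $1$. But a pair candidate requires $|\operatorname{Supp}(p)|=2$, hence $\alpha\in(0,1)$; at $\alpha\in\{0,1\}$ the distribution degenerates to a singleton, whose strong-inducibility test (case~2 of \Cref{corollary: non-recursive characterization of y-shape strong inducibility}) is genuinely different from the pair test (case~1) and can fail even when the pair threshold $C$ is satisfied in the limit. So the correct feasible interval for a pair is open on \emph{both} sides, and $U_F$ over it is a fully open interval---exactly what the paper records. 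With your half-open description, a candidate whose $U_F$-optimum lies at $\alpha\to 1$ could be flagged ``attained'' while the corresponding singleton $z_1$ fails its own test, causing part~(2) to wrongly declare an optimum. The fix is immediate: restrict pair candidates to $\alpha\in(0,1)$ and let the separately-enumerated singletons decide attainment at the boundary. With that correction your outline matches the paper's proof.
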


Similarly, we can show that the follower can get no less utility by optimally strategically reporting under behavioral commitment case, than under pure commitment case. Denote the set of all strongly inducible leaf nodes as $SP(\Pi_L\times\Pi_F)$, and we have, 

\begin{restatable}{proposition}{propositionhigherutilityunderstronglyinducibility}
\label{proposition: higher utility under strongly inducibility}
$\sup_{p\in SP(\Delta_L\times \Pi_F)} U_F(p) \geq \sup_{z\in SP(\Pi_L\times \Pi_F)} U_F(z)$.
\end{restatable}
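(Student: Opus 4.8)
The plan is to show that the set of follower utilities attainable at strongly inducible leaf nodes under pure commitment is contained in the set attainable at strongly inducible distributions under behavioral commitment; the inequality on suprema then follows immediately from the elementary fact that $A\subseteq B \Rightarrow \sup A\le \sup B$. Concretely, to every strongly inducible leaf node $z\in SP(\Pi_L\times\Pi_F)$ I would associate the point-mass distribution $\delta_z\in\Delta(Z)$ with $\operatorname{Supp}(\delta_z)=\{z\}$, which satisfies $U_F(\delta_z)=U_F(z)$. Since $|\operatorname{Supp}(\delta_z)|=1$ this distribution is ``Y-shape'', and since along the path from $root$ to $z$ each follower node selects a single child, it is realizable by \Cref{lemma: characterize feasible distributions}. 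It therefore suffices to prove that $\delta_z\in SP(\Delta_L\times\Pi_F)$ whenever $z\in SP(\Pi_L\times\Pi_F)$.

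I would establish this implication by structural induction on the game tree, comparing the recursive characterization of pure strong inducibility (\Cref{thm: strong inducibility under pure}) with that of behavioral strong inducibility for Y-shape distributions (\Cref{thm: strong inducibility under behavioral}). The base case $root\in Z$ is trivial, as $z=root$ is the unique leaf and the strategy set is empty. For the inductive step, fix the ancestor $v\in\child{root}$ of $z$ and note that for the point mass $\delta_z$ one has $U_L(\delta_z)=U_L(z)$, $\operatorname{Supp}(\delta_z,root)=\{v\}$, and $\delta_z|_{v}$ equal to the point mass on $z$ within $T_v$. Reading off the two theorems, the quantitative conditions coincide term by term: when $\mathcal{P}(root)=L$, condition (a) is $U_L(z)>\max_{v'\ne v}M_L(v')$ in both; when $\mathcal{P}(root)=F$, condition (a) is $U_L(z)>\min_{v'\ne v}M_L(v')$ in both. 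The extra requirement $U_L(z_1)\ne U_L(z_2)$ appearing in the behavioral case for $\mathcal{P}(root)=F$ is vacuous here because $|\operatorname{Supp}(\delta_z)|=1$. Finally, the recursive condition (b), ``$z$ is strongly inducible at $T_v$'' in the pure case, transports to ``$\delta_z|_{v}$ is strongly inducible at $T_v$'' in the behavioral case by the inductive hypothesis applied to the subgame $T_v$. Hence each alternative guaranteeing $z\in SP(\Pi_L\times\Pi_F)$ yields the corresponding alternative guaranteeing $\delta_z\in SP(\Delta_L\times\Pi_F)$, completing the induction.

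The main obstacle is conceptual rather than computational: one is tempted to argue directly that a payoff function $\tilde{U}_F$ witnessing pure strong inducibility of $z$ also witnesses behavioral strong inducibility of $\delta_z$, but this fails, since enlarging the leader's commitment set from $\Pi_L$ to $\Delta_L$ grants her strictly more deviations that could create additional SSEs with outcomes other than $z$, destroying uniqueness. Routing the argument through the two characterization theorems circumvents this difficulty: the theorems already encode, via the recursion together with the strict inequalities, precisely which leaf nodes survive as the \emph{unique} outcome once behavioral deviations are permitted, and the point-mass reduction shows these conditions are identical to the pure-commitment ones. This parallels the relationship established in \Cref{proposition: higher utility} for ordinary inducibility, with the degenerate (support-size-one) distribution playing the role of the leaf node throughout.
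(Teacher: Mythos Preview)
Your proposal is correct and follows essentially the same approach as the paper: both argue that $SP(\Pi_L\times\Pi_F)\subseteq SP(\Delta_L\times\Pi_F)$ by identifying a strongly inducible leaf node $z$ with its point-mass distribution and then verifying, by structural induction on the game tree, that the recursive conditions of \Cref{thm: strong inducibility under pure} imply those of \Cref{thm: strong inducibility under behavioral} term by term. Your added remark on why one cannot simply reuse the witnessing $\tilde{U}_F$ is a nice conceptual gloss but not needed for the formal argument.
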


\begin{figure}[!htbp]
\captionsetup[subfigure]{}
\subcaptionbox{A game that 1. any payoff function that induces a desired distribution always yields another SSE with far less follower's utility; 2. optimal strongly inducible distributions do not exist. \label{example: different utility and non-existence of optimal strong inducibility}}[.49\textwidth]{%
\centering
\begin{tikzpicture}[scale=0.95, every node/.style={transform shape}]
\node[above left] at (3,1.6) {$F$};
\node[above left] at (1.5,0.8) {$L_1$};
\node[above right] at (4.5,0.8) {$L_2$};
\node[above left] at (0.75,0) {$z_1$};
\node[above right] at (2.25,0) {$z_2$};
\node[above left] at (3.75,0) {$z_3$};
\node[above right] at (5.25,0) {$z_4$};
\node at (0,-0.3) {$U_L$};
\node at (0,-0.7) {$U_F$};
\node at (0.75,-0.3) {$1$};
\node at (0.75,-0.7) {$0$};
\node at (2.25,-0.3) {$2$};
\node at (2.25,-0.7) {$\alpha$};
\node at (3.75,-0.3) {$0$};
\node at (3.75,-0.7) {$2$};
\node at (5.25,-0.3) {$4$};
\node at (5.25,-0.7) {$1$};
% \node[below] at (1,0) {$8$};
% \node[below] at (0,0) {$0$};
\filldraw (3,1.6) circle (.08)
          (1.5,0.8) circle (.08)
          (4.5,0.8) circle (.08)
          (0.75,0) circle (.08)
          (2.25,0) circle (.08)
          (3.75,0) circle (.08)
          (5.25,0) circle (.08);
% \filldraw[blue] (2.25,0) circle (.1);
\draw[thick] (3,1.6)--(1.5,0.8)
      (3,1.6)--(4.5,0.8)
      (1.5,0.8)--(0.75,0)
      (1.5,0.8)--(2.25,0)
      (4.5,0.8)--(3.75,0)
      (4.5,0.8)--(5.25,0);
\end{tikzpicture}
}
\hfill
\subcaptionbox{A game that optimal strongly inducible distributions yield far less follower's utility optimal inducible distributions do.\label{example: different utility between optimal inducibility and optimal strong inducibility}}[.49\textwidth]{
\begin{tikzpicture}[scale=0.95, every node/.style={transform shape}]
\centering
\node[above left] at (3,1.6) {$F$};
\node[above left] at (1,0.8) {$L_1$};
\node[above right] at (3,0.8) {$L_2$};
\node[above right] at (5,0.8) {$L_3$};
\node[above left] at (0.6,0) {$z_1$};
\node[above right] at (1.4,0) {$z_2$};
\node[above left] at (2.6,0) {$z_3$};
\node[above right] at (3.4,0) {$z_4$};
\node[above left] at (4.6,0) {$z_5$};
\node[above right] at (5.4,0) {$z_6$};
\node at (-0.25,-0.3) {$U_L$};
\node at (-0.25,-0.7) {$\tilde{U}_F$};
\node at (0.5,-0.3) {$1$};
\node at (0.5,-0.7) {$2\alpha$};
\node at (1.5,-0.3) {$2$};
\node at (1.5,-0.7) {$\alpha$};
\node at (2.5,-0.3) {$2$};
\node at (2.5,-0.7) {$2$};
\node at (3.5,-0.3) {$2$};
\node at (3.5,-0.7) {$1$};
\node at (4.5,-0.3) {$3$};
\node at (4.5,-0.7) {$\alpha$};
\node at (5.5,-0.3) {$3$};
\node at (5.5,-0.7) {$2\alpha$};
\filldraw (3,1.6) circle (.08)
          (1,0.8) circle (.08)
          (3,0.8) circle (.08)
          (5,0.8) circle (.08)
          (0.5,0) circle (.08)
          (1.5,0) circle (.08)
          (2.5,0) circle (.08)
          (3.5,0) circle (.08)
          (4.5,0) circle (.08)
          (5.5,0) circle (.08);
% \filldraw[blue] (3.75,0) circle (.1);
\draw[thick] (3,1.6)--(1,0.8)
      (3,1.6)--(3,0.8)
      (3,1.6)--(5,0.8)
      (1,0.8)--(0.5,0)
      (1,0.8)--(1.5,0)
      (3,0.8)--(2.5,0)
      (3,0.8)--(3.5,0)
      (5,0.8)--(4.5,0)
      (5,0.8)--(5.5,0);
\end{tikzpicture}}
\caption{Examples for \Cref{sec: strong inducibility}. $\alpha\in (-\infty,-1]$ Fig. \ref{example: different utility and non-existence of optimal strong inducibility}: The optimal inducible distribution of this game $p$ satisfies  $p(z_3)=p(z_4)=\frac{1}{2}$, yielding a utility $2=M_{L}(L_1)$ for the leader. The leader can always choose to play maximin strategy on $T_F$ to get the same utility, whichever payoff function the follower reports, while the follower get utility $\alpha \ll U_F(p)=1.5$. Also $p$ is not strongly inducible, the follower can get utility arbitrarily close to $1.5$ under strongly inducible distributions, but cannot get exactly $1.5$. Fig.~\ref{example: different utility between optimal inducibility and optimal strong inducibility}: The optimal inducible distribution of this game is $p(z_3)=p(z_4)=\frac{1}{2}$, yielding a utility $2=M_{L}(L_1)$ for the leader and $1.5$ for the follower. However, the follower can only get $\alpha$ under optimal strongly inducible distributions, which is far less than $1.5$.  }
\label{example: different strong inducibility}
\end{figure}

\subsection{Characterization of When Follower Gains Arbitrarily The Same Optimally in the Two Senses of Inducibility}
\label{sec: efficiency of strong inducibility}

Note that the follower's optimal utility through strategic reporting can be dramatically different between the case of inducibility  and of strong inducibility. 
We are interested that under what kind of games can the follower's optimal values in these two cases are arbitrarily close. 
That is, 
denote an optimal inducible distribution as $p^*\in \argmax_{p\in P(\Delta_L\times \Pi_F)}U_F(p)$. 
We wish to find a sufficient and necessary condition for a game to satisfy the following ``\textbf{Utility Supremum Equivalence}''(USE) property ($P$) with behavioral commitment:
\begin{equation}
\tag{$P$}
\label{eqn: near optimality}
    \sup_{p\in SP(\Delta_L\times \Pi_F)}\frac{U_F(p)}{U_F(p^*)} = 1
\end{equation}

 We note that, 
 due to the finiteness of all (strongly) inducible leaf nodes with pure commitment, a game satisfies USE with pure commitment if and only if one of the optimal inducible leaf nodes is also strongly inducible. 

As for the behaiovral commitment case, we first provide two conditions. \Cref{thm: full characterization of near optimality} shows that these two conditions fully characterize property \ref{eqn: near optimality}. 

\begin{condition}
\label{condition 1}
There exists an optimal ``Y-shape'' inducible distribution $p^*$, such that, $|\operatorname{Supp}(p^*)|=2$ and $U_L(z_1) \ne U_L(z_2)$ where $\operatorname{Supp}(p^*)=\{z_1,z_2\}$. 
\end{condition}

\begin{condition}
\label{condition 2}
One of the following conditions are met for some optimal ``Y-shape'' inducible distribution $p^*$:  
\begin{enumerate}
    \item $p^*$ is strongly inducible;
    \item For some $z^* \in Z$ be the leaf node that $p^*(z^*)>0$, there exists a node $v$ on the path from $root$ to $z^*$ such that
        \begin{enumerate}
            \item $\mathcal{P}(v)=F$, and $U_L(z^*)\geq \min_{w\in \child{v}, w\ne w(v,z^*)}M_L(w)$, where $w(v,z^*) \in \child{v}$ is $z$'s ancestor; 
            % is $v$'s child node on the path from $v$ to $z^*$;
            \item $U_L(z_1) > U_L(z^*)$ for some $(z_1,z^*) \in E(v)$, i.e., $z_1\in Z_v$ is a leaf node that its least common ancestor with $z^*$ belongs to $\mathcal{P}^{-1}(L)$.
        \end{enumerate}
\end{enumerate}
\end{condition}

\begin{restatable}{theorem}{thmfullcharacterizationofnearoptimality}
\label{thm: full characterization of near optimality}
Game $(T_{root},U_L,U_F)$ satisfies property \ref{eqn: near optimality} if and only if it satisfies \Cref{condition 1} or \Cref{condition 2}. 
\end{restatable}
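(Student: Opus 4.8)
The plan is to read property~\ref{eqn: near optimality} as the equality $\sup_{p\in SP(\Delta_L\times\Pi_F)}U_F(p)=\max_{p\in P(\Delta_L\times\Pi_F)}U_F(p)=:V$ (the ratio only rescales this, and the supremum need not be attained). Since $SP(\Delta_L\times\Pi_F)\subseteq P(\Delta_L\times\Pi_F)$, one inequality is automatic, so USE asks exactly whether strongly inducible distributions can push the follower's utility arbitrarily close to $V$. By \Cref{lemma: simple inducible sp} the value $V$ is attained at some Y-shape inducible $p^*$, and the same averaging argument shows the strongly-inducible supremum is approached by Y-shape distributions; so throughout I would restrict to Y-shape distributions and compare \Cref{corollary: non-recursive characterization of y-shape inducibility} with \Cref{thm: strong inducibility under behavioral}. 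The single structural fact driving everything is that these two characterizations are identical \emph{except} that every ``$\geq$'' becomes a strict ``$>$''. Hence USE is really the question: can an optimal Y-shape inducible $p^*$ be perturbed so that all thresholds it meets with equality become strict, without losing follower utility in the limit? The mechanism for making a threshold strict is always to raise $U_L(p)$, so the whole argument is organized around \emph{which} perturbation raises $U_L(p)$ while keeping $U_F(p)\to U_F(p^*)$.

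For the direction (\Cref{condition 1} or \Cref{condition 2}) $\Rightarrow$ USE I would argue case by case. If \Cref{condition 2}(i) holds, $p^*$ is itself strongly inducible and the supremum is attained. If \Cref{condition 1} holds, write $\operatorname{Supp}(p^*)=\{z_1,z_2\}$ with $U_L(z_1)>U_L(z_2)$ and let $p_\lambda$ be obtained by shifting mass $\lambda$ from $z_2$ to $z_1$ at their (leader) least common ancestor; then $U_L(p_\lambda)=U_L(p^*)+\lambda\bigl(U_L(z_1)-U_L(z_2)\bigr)>U_L(p^*)$, so every ``$\geq$'' threshold of \Cref{corollary: non-recursive characterization of y-shape inducibility} is met strictly by $p_\lambda$, which is therefore strongly inducible by \Cref{thm: strong inducibility under behavioral} (same short-circuit follower node and same leader ancestors), while $U_F(p_\lambda)\to U_F(p^*)=V$ as $\lambda\to0^+$. \Cref{condition 2}(ii) is handled by the same perturbation but toward a \emph{different} leaf: mixing weight $\lambda$ of the higher-utility leaf $z_1\in Z_v$ (whose least common ancestor with $z^*$ is a leader node) into $z^*$ again strictly raises $U_L(p)$ above $U_L(z^*)$, and condition (a), $U_L(z^*)\geq\min_{w\neq w(v,z^*)}M_L(w)$, guarantees that the one surviving follower threshold at $v$ becomes strict.

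For the converse USE $\Rightarrow$ (\Cref{condition 1} or \Cref{condition 2}) I would take Y-shape strongly inducible $p_n$ with $U_F(p_n)\to V$. Since the tree is finite there are only finitely many combinatorial types (the at most two support leaves, the short-circuit follower node $v$ from the \Cref{thm: strong inducibility under behavioral} witness, and the ancestor chain), so a subsequence shares one type; along it the mixing ratios converge, $p_n\to p^*$, and $p^*$ is inducible (strict inequalities pass to ``$\geq$'') and optimal ($U_F(p^*)=V$). If the limit is a genuine two-leaf distribution with $U_L(z_1)\neq U_L(z_2)$, that is \Cref{condition 1}. Otherwise $p^*$ concentrates on one leaf $z^*$, and I examine the vanishing leaf $z_1\in Z_v$ (its least common ancestor with $z^*$ a leader node): if $U_L(z_1)>U_L(z^*)$ the limit of $U_L(p_n)>\min_{w\neq w(v,z^*)}M_L(w)$ yields $U_L(z^*)\geq\min$, giving \Cref{condition 2}(ii); if $U_L(z_1)<U_L(z^*)$, then $U_L(p_n)<U_L(z^*)$ forces $U_L(z^*)>U_L(p_n)>\text{threshold}$ at $v$ and at every leader ancestor, so the singleton $z^*$ clears all thresholds \emph{strictly} and is itself strongly inducible by \Cref{thm: strong inducibility under behavioral}, giving \Cref{condition 2}(i). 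The case $U_L(z_1)=U_L(z^*)$ cannot occur because \Cref{thm: strong inducibility under behavioral} forbids equal-utility leaves in a two-leaf strongly inducible support, and a purely singleton sequence lands directly in \Cref{condition 2}(i).

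I expect the converse, specifically the degenerate limit, to be the main obstacle: one must freeze the combinatorial type along a subsequence, verify the limiting non-strict inequalities are exactly those of \Cref{corollary: non-recursive characterization of y-shape inducibility}, and — the delicate point — show that the sign of $U_L(z_1)-U_L(z^*)$ cleanly separates \Cref{condition 2}(ii) from \Cref{condition 2}(i), using that $p_n$ strictly clears its leader thresholds even as its mass concentrates on $z^*$. A secondary nuisance is establishing the Y-shape reduction on the strongly inducible side (the analogue of \Cref{lemma: simple inducible sp}) and checking that the ratio in property~\ref{eqn: near optimality} is well behaved when $U_F(p^*)\le 0$, which I would dispose of by working with the additive gap $V-\sup_{SP}U_F$ rather than the ratio.
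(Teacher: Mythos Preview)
Your proposal is correct and takes essentially the same route as the paper: both reduce to Y-shape distributions (the paper via \Cref{lemma: equivalence of two near optimality} and the strongly-inducible analogue of \Cref{lemma: simple inducible sp}), establish sufficiency by shifting mass toward a higher-$U_L$ leaf so that every threshold inequality becomes strict, and for necessity analyze the structure of near-optimal Y-shape strongly inducible distributions. The only cosmetic difference is that the paper organizes the converse by first assuming not-\Cref{condition 1} and not-\Cref{condition 2}(i) and then using discrete gap constants ($\epsilon_0,\epsilon_1$) to force one support leaf to be optimally inducible, whereas you run a compactness/subsequence argument with an explicit trichotomy on $\sgn\bigl(U_L(z_1)-U_L(z^*)\bigr)$; your case $U_L(z_1)<U_L(z^*)$---which the paper's ``WLOG'' silently absorbs---correctly yields \Cref{condition 2}(i), because $\delta_{z^*}$ inherits from $p_n$ the strict short-circuit inequality at the same follower node $v$ and the strict leader thresholds above $v$, which is exactly what \Cref{thm: strong inducibility under behavioral} requires.
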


 One important observation for proving \Cref{thm: full characterization of near optimality} is that: recalling that we can  find an optimal inducible distribution that is Y-shape in polynomial time, 
%  which is actually ``Y-shape'', 
 and characterize all the Y-shape strongly inducible distributions, denote the set of all Y-shape strongly inducible distributions as $SYP(\Delta_L \times \Pi_F)$, and we consider another property ($P^\prime$): 
 \begin{equation}
\tag{$P^\prime$}
\label{eqn: y-shape near optimality}
    \sup_{p\in SYP(\Delta_L\times \Pi_F)}\frac{U_F(p)}{U_F(p^*)} = 1
\end{equation}

\Cref{lemma: equivalence of two near optimality} shows that these two properties are equivalent. 
 
\begin{restatable}{lemma}{lemmaequivalenceoftwonearoptimality}
\label{lemma: equivalence of two near optimality}
Game $(T_{root},U_L,U_F)$ satisfies property \ref{eqn: near optimality} if and only if it satisfies property \ref{eqn: y-shape near optimality}. 

\end{restatable}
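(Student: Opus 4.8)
The plan is to strip away the normalization and prove the single equality
\[
\sup_{p\in SYP(\Delta_L\times\Pi_F)}U_F(p)=\sup_{p\in SP(\Delta_L\times\Pi_F)}U_F(p),
\]
after which \ref{eqn: near optimality} and \ref{eqn: y-shape near optimality} become literally the same assertion, since each divides the corresponding supremum by the common constant $U_F(p^*)$. One inclusion is immediate: a Y-shape strongly inducible distribution is strongly inducible, so $SYP(\Delta_L\times\Pi_F)\subseteq SP(\Delta_L\times\Pi_F)$ and therefore $\sup_{SYP}U_F\le\sup_{SP}U_F$; this already gives \ref{eqn: y-shape near optimality} $\Rightarrow$ \ref{eqn: near optimality}, together with the bound $\sup_{SP}U_F\le U_F(p^*)$ coming from the fact that every strongly inducible distribution is inducible. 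All the content is in the reverse inequality $\sup_{SYP}U_F\ge\sup_{SP}U_F$.

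To establish it I would prove a support-collapsing reduction: for every $p\in SP(\Delta_L\times\Pi_F)$ and every $\epsilon>0$ there is a Y-shape $q\in SYP(\Delta_L\times\Pi_F)$ with $U_F(q)\ge U_F(p)-\epsilon$; taking the supremum over $p$ and letting $\epsilon\to 0$ yields the claim. The reduction runs in two stages. First, I would collapse $p$ to a two-point distribution exactly as in the proof of \Cref{lemma: simple inducible sp}: along the recursion every condition of \Cref{thm: strong inducibility under behavioral} is a comparison of the conditional leader value $U_L(q|_{u})$ against a tree-intrinsic maximin threshold, so for a fixed branch the feasible distributions are governed by a single linear constraint on the realized leader value, and maximizing the linear functional $U_F$ moves all mass onto at most two leaves whose least common ancestor is an $L$-node (any two support leaves branch at an $L$-node by realizability), without decreasing $U_F$. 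This produces a Y-shape distribution $q_0$ with $U_F(q_0)\ge U_F(p)$ that meets the thresholds, but possibly only with equality, i.e. it is inducible but perhaps not strongly inducible.

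Second, I would restore strictness by perturbing the mixing weight. Because $p$ itself is strongly inducible, the value $U_L(p)$ lies strictly above each relevant threshold, so there is genuine slack to spend; pairing the follower's best support leaf $z^{+}$ with a leaf of strictly larger $U_L$ and nudging the weight $\alpha$ slightly off the boundary makes every inequality of \Cref{thm: strong inducibility under behavioral} strict (and keeps $U_L(z_1)\ne U_L(z_2)$), so the perturbed $q$ lies in $SYP(\Delta_L\times\Pi_F)$. Continuity of $U_F(q)$ in $\alpha$ bounds the cost of this nudge by $\epsilon$, which is precisely why the reduction yields $U_F(p)-\epsilon$ rather than $U_F(p)$; this $\epsilon$-loss is unavoidable and mirrors the non-existence of optimal strongly inducible distributions already flagged for \Cref{thm: algorithm for strongly inducible under behavioral}.

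The step I expect to be the main obstacle is keeping the recursive, per-node conditions consistent through the collapse. Unlike the global leader value, the conditional values $U_L(q|_{u})$ at the nodes $u$ on the active path change when mass is redistributed, so the single-constraint extreme-point argument must be carried out compatibly across levels rather than once and for all; concretely I would run it as a structural induction down the tree, carrying the invariant that the current conditional distribution is strongly inducible in the current subtree with follower value within $\epsilon$ of the target, and treating separately the degenerate branches where no strictly-higher-$U_L$ sibling exists (there the singleton on $z^{+}$ already satisfies the strict conditions, so the point mass on $z^{+}$ works). Assembling these pieces and invoking the first-paragraph reduction then gives \ref{eqn: near optimality} $\Leftrightarrow$ \ref{eqn: y-shape near optimality}.
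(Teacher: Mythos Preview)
Your high-level plan---reduce both properties to the single equality $\sup_{SYP}U_F=\sup_{SP}U_F$---is correct and matches the paper. The easy direction via $SYP\subseteq SP$ is also identical. For the reverse inequality, however, you are working harder than necessary, and the $\epsilon$-loss you build in is avoidable.

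The paper dispatches the hard direction in one line by invoking \Cref{coro: simple strongly inducible distribution}, established earlier in the appendix: for every strongly inducible $p$ there exists a Y-shape strongly inducible $q$ with $U_F(q)\ge U_F(p)$ and $U_L(q)\ge U_L(p)$. This is the strong-inducibility analogue of \Cref{lemma: simple inducible sp}, and its proof is essentially the structural induction you sketch in Stage~1, with one observation you overlook: because $p$ is \emph{strongly} inducible, the relevant thresholds are met \emph{strictly} by $U_L(p)$, and the collapse only weakly increases the leader's conditional value at every node along the way, so those strict inequalities survive the collapse intact. Hence your $q_0$ is already strongly inducible (with the minor caveat that if the two surviving leaves happen to have equal $U_L$ one passes to the better singleton, exactly as in the corollary's proof). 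Your Stage~2 perturbation and the attendant $\epsilon$-bookkeeping are therefore unnecessary. What your route would buy is self-containment---you would not need the corollary as a separate lemma; what the paper's route buys is a two-line proof once that corollary is in hand.
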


\section{Conclusion and Future Works}
\label{sec: conclusion}

% In this paper, we
We study how the follower strategically reports his payoff function to gain optimal utility, when facing a learning leader who constantly interact with the follower and collect his utility data. We characterize all the game outcomes that can be successfully induced under different settings, and give efficient algorithms to find the optimal way of strategically reporting. We completely resolve this problem in extensive-form games with perfect information. 
One of the future work is to study such follower's strategic behavior in the computation of other solution concepts or in other settings. 
Though in some game settings, e.g., the computation of 
% conceptual-leader 
Stackelberg
equilibria in extensive-form games with imperfect information is NP-hard ~\citep{letchford2010computing},
% and bayesian normal-form games~\citep{conitzer2006computing}, 
and the approximation of Nash equilibria in normal-form games is PPAD-complete~\citep{daskalakis2009the, chen2009settling}, it is still interesting to study such follower's strategic behavior in such settings. 
% And our work starts such studies on games with sequential interactions.

Another future work would be to consider how to counteract such manipulation. Actually, when the follower may misreport to exercise his right of privacy, we assume he is in a higher cognitive hierarchy than the leader. 
% Unlike the leader's much effort in eliciting the follower's information, the follower only needs to best-respond to the leader's commitment. He can behave optimally without knowing the leader's payoff function.
Since the follower only needs to best-respond to the leader's commitment, it is not necessary for the leader to hide her own private payoff information, if he is not aware of the follower's strategic behavior.  
How about these two players being in the same hierarchy, or the leader's hierarchy is higher than the follower's? It is worth mentioning that if we also allow the leader to report a fake payoff function, i.e., think about a new game built from the original 
% conceptual-leader 
Stackelberg
game where both players' strategies are which payoff functions to announce. Then it is an NE that the leader reports her true function and the follower optimally strategically reports according to our results. A complete equilibrium analysis might help us find a better way to come up with a countermeasure.

\clearpage

% Appendix
\appendix

\section{Omitted Proofs from Section \ref{sec: properties for pure case}}
% \section{Omitted proofs from Section \ref{sec: properties for pure case}}
\label{app1}

\subsection{Proof of \Cref{lemma: min utility in pure commit}}
\label{subsec: lemmaminutilityinpurecommit}
\lemmaminutilityinpurecommit*

Let $U_L(\pi_L,\pi_F)|_{z} = U_L(z)$ for $z\in Z$ to simplify the notation. To make the proof clear, we first prove two lemmas. 

\begin{lemma}
\label{lemma: strategy form of max and min}
\begin{enumerate}
    \item 
    \begin{enumerate}
        \item Given the leader's pure strategy $\pi_L$, the following follower's pure strategies minimize the leader's utility,
        \begin{equation}
\label{eqn: follower's min strategy}
    \pi_F(v)\in {\arg\min}_{w\in \child{v}} U_L(\pi_L, \pi_F)|_{w},~~~~~~~\forall v\in \mathcal{P}^{-1}(F);
\end{equation}
    \item Assuming that the follower always minimizes the leader's utility, using strategies defined by \Cref{eqn: follower's min strategy}, the following leader's pure strategies  maximize her own utility,
        \begin{equation}
\label{eqn: leader's max strategy}
    \pi_L(v)\in {\arg\max}_{w\in \child{v}} U_L(\pi_L, \pi_F)|_{w}~~~~~~~\forall v\in \mathcal{P}^{-1}(L).
\end{equation}
% where $\pi_F$ is defined recursively by \Cref{eqn: follower's min strategy}. 
    \end{enumerate}
    \item 
    \begin{enumerate}
        \item Given the leader's pure strategy $\pi_L$, the following follower's pure strategies maximize his own utility, 
        \begin{equation}
    \label{eqn: leader's min strategy}
    \pi_F(v) \in 
{\arg\max}_{w\in \child{v}} U_F(\pi_L, \pi_F)|_{w}, ~~~~~~~\forall v\in \mathcal{P}^{-1}(F);
\end{equation}
    \item Assuming that the follower always maximizes his own utility, using strategies defined by \Cref{eqn: leader's min strategy}, the following leader's pure strategies  minimize the follower's utility,
    % are of the following form,
        \begin{equation}
\label{eqn: follower's max strategy}
    \pi_L(v)\in {\arg\min}_{w\in \child{v}} U_F(\pi_L, \pi_F)|_{w},~~~~~~~\forall v\in \mathcal{P}^{-1}(L).
\end{equation}
% where $\pi_F$ is defined recursively by \Cref{eqn: leader's min strategy}. 
    \end{enumerate}
\end{enumerate}
\end{lemma}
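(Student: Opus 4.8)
The plan is to prove all four parts by a single backward (structural) induction on the game tree $T_{root}$, reading the four displayed conditions not as fixed-point equations but as a bottom-up recipe: at each node the relevant player's action is chosen from the optimal subtree values already fixed at its children, after which I verify that this greedy local choice is globally optimal. Since only pure strategies occur and each internal node is controlled by exactly one player, the key simplification is that $U_i(\pi_L,\pi_F)|_v = U_i(\pi_L,\pi_F)|_{c(v)}$, where $c(v)\in\child{v}$ is the single child selected at $v$ (by $\pi_L$ when $v\in\mathcal{P}^{-1}(L)$, by $\pi_F$ when $v\in\mathcal{P}^{-1}(F)$); there is no averaging over children, so the analysis is purely combinatorial.

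For part 1 I would fix the leader's pure strategy $\pi_L$ and first establish (a). The base case is a leaf $z$, where $U_L(\pi_L,\pi_F)|_z = U_L(z)$ is forced. For the inductive step at an internal node $v$: if $v\in\mathcal{P}^{-1}(L)$ the follower has no choice at $v$, so $U_L(\pi_L,\pi_F)|_v = U_L(\pi_L,\pi_F)|_{\pi_L(v)}$ and minimality reduces to the inductive hypothesis on $T_{\pi_L(v)}$; if $v\in\mathcal{P}^{-1}(F)$, the smallest value achievable by any follower strategy on $T_v$ is $\min_{w\in\child{v}}\min_{\pi_F|_w} U_L(\pi_L,\pi_F)|_w$, the inner minima being attained by the inductively constructed subtree strategies, so choosing $\pi_F(v)\in\arg\min_{w} U_L(\pi_L,\pi_F)|_w$ attains the overall minimum, which is exactly \Cref{eqn: follower's min strategy}. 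For (b) I would then commit the follower to the minimizing response of (a) (it is defined by the same backward recursion at every node, hence remains a minimizing response within every subgame, which is precisely what permits the leader to reoptimize locally after any deviation) and run the symmetric induction over leader nodes: at $v\in\mathcal{P}^{-1}(L)$ the leader's utility is $U_L(\pi_L,\pi_F)|_{\pi_L(v)}$, so maximizing forces $\pi_L(v)\in\arg\max_{w} U_L(\pi_L,\pi_F)|_w$, giving \Cref{eqn: leader's max strategy}.

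Part 2 is obtained by the identical induction with $U_L$ replaced by $U_F$ and the roles of $\max$ and $\min$ interchanged: in (a) the follower, being the mover at $v\in\mathcal{P}^{-1}(F)$, maximizes his own utility by taking an $\arg\max$ child, matching \Cref{eqn: leader's min strategy}; in (b), assuming this follower response, the leader minimizes the follower's utility by taking an $\arg\min$ child at each $v\in\mathcal{P}^{-1}(L)$, matching \Cref{eqn: follower's max strategy}. Since no new idea is needed, I would dispatch it as ``by a symmetric argument.'' Composing parts 1 and 2 recovers exactly the recursive definition of $M_L$ and $M_F$ in \Cref{def: max-min value}, which is what the surrounding proof of \Cref{lemma: min utility in pure commit} then invokes.

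The only genuinely delicate point is the self-referential appearance of $\pi_F$ (resp.\ $\pi_L$) on both sides of the displayed conditions; the induction resolves this by ordering the choices from the leaves upward, so that when the action at $v$ is fixed, every quantity $U_{\bullet}(\pi_L,\pi_F)|_w$ for $w\in\child{v}$ has already been determined. I also need the optimality claimed at $v$ to be a statement about the full game rather than merely about $T_v$: this is immediate because, for the player controlling $v$, the node's value depends only on the chosen subtree, so local greedy optimality at every node aggregates to global optimality, i.e.\ the standard optimality of backward induction on perfect-information trees.
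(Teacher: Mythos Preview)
Your proposal is correct and follows essentially the same approach as the paper: a structural (backward) induction over the game tree, treating the self-referential conditions as a bottom-up construction and verifying that the greedy local choice is globally optimal. The paper's write-up differs only cosmetically, phrasing the induction as an explicit comparison of the constructed strategy against an arbitrary alternative $\pi_F'$ (resp.\ $\pi_L'$) at every node, rather than as a direct computation of the optimal subtree value; and for part (1)(b) it is slightly more explicit that the follower's minimizing response changes with the leader's deviation, which you acknowledge but handle more tersely.
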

\begin{proof}
We only prove (1) here, the proof of (2) is analogous. 

\textbf{First we prove (1)(a).} Intuitively, starting at any strategy ${\pi_F'}$ of the follower, changing any action at any node $v\in \mathcal{P}^{-1}(F)$, to the one leading to the least value of the leader in the subgame $T_v$ always decreases the leader's utility. 

Given the leader's strategy $\pi_L\in \Pi_L$, we will prove that $U_L(\pi_L,\pi^\prime_F)|_{v} \geq U_L(\pi_L,\pi_F)|_{v}$ for any $\pi^\prime_F \in \Pi_F$ and any $v\in H\cup Z$ by induction. 

\textbf{Inductive Base:} When $v\in Z$, $U_L(\pi_L,\pi_F^\prime)|_{v}=U_L(v)=U_L(\pi_L,\pi_F)|_{v}$, thus $U_L(\pi_L,\pi^\prime_F)|_{v} \geq U_L(\pi_L,\pi_F)|_{v}$ holds. 

\textbf{Inductive Step:} When $v\in H$, assume the inductive hypothesis holds, i.e., $U_L(\pi_L,\pi^\prime_F)|_{{w^\prime}}\geq U_L(\pi_L,\pi_F)|_{{w^\prime}}$ for any $w^\prime\in \child{v}$. 

If $\mathcal{P}(v)=L$, suppose $\pi_L(v)=w$, then $U_L(\pi_L,\pi^\prime_F)|_{v} =U_L(\pi_L,\pi^\prime_F)|_{w}\geq U_L(\pi_L,\pi_F)|_{w} = U_L(\pi_L,\pi_F)|_{v}$. 

If $\mathcal{P}(v)=F$, suppose $\pi^\prime_F(v)=w$, then
$U_L(\pi_L,\pi^\prime_F)|_{v} = U_L(\pi_L,\pi^\prime_F)|_{w}\geq U_L(\pi_L,\pi_F)|_{w} \geq \min_{w^\prime\in\child{v}} U_L(\pi_L,\pi_F)|_{{w^\prime}} = U_L(\pi_L,\pi_F)|_{v}$. 

\textbf{Then we prove (1)(b). } Let $\pi_L \in \Pi_L$ be the leader's strategy defined by \Cref{eqn: leader's max strategy}, and $\pi_F \in \Pi_F$ the corresponding follower's strategy defined by \Cref{eqn: follower's min strategy}, we prove for any leader's strategy $\pi^\prime_L\in \Pi_L$ and the corresponding follower's strategy $\pi^\prime_F \in \Pi_F$ defined by \Cref{eqn: follower's min strategy}, $U_L(\pi_L,\pi_F)|_{v}\geq U_L(\pi^\prime_L,\pi^\prime_L)|_{v}$ for any $v\in H\cup Z$. 

\textbf{Inductive Base: } When $v\in Z$, $U_L(\pi_L,\pi_F)|_{v} = U_L(v) = U_L(\pi^\prime_L,\pi^\prime_F)|_{v}$.

\textbf{Inductive Step: }When $v\in H$, assume the inductive hypothesis holds, i.e., $U_L(\pi_L,\pi_F)|_{{w^\prime}}\geq U_L(\pi^\prime_L,\pi^\prime_F)|_{{w^\prime}}$ for any $w^\prime\in \child{v}$.  

If $\mathcal{P}(v)=L$, suppose $\pi^\prime_L(v) = w$, then 
\begin{equation*}
    U_L(\pi_L,\pi_F)|_{v} = \max_{w^\prime \in \child{v}}U_L(\pi_L,\pi_F)|_{{w^\prime}} \geq U_L(\pi_L,\pi_F)|_{w} \geq U_L(\pi^\prime_L,\pi^\prime_F)|_{w} = U_L(\pi^\prime_L,\pi^\prime_F)|_{v}.
\end{equation*}

If $\mathcal{P}(v)=F$, since $U_L(\pi_L,\pi_F)|_{{w^\prime}}\geq U_L(\pi^\prime_L,\pi^\prime_F)|_{{w^\prime}}$ for any $w^\prime\in \child{v}$, we have 
\begin{equation*}
    U_L(\pi_L,\pi_F)|_{v} = \min_{w^\prime \in \child{v}}U_L(\pi_L,\pi_F)|_{{w^\prime}}\geq \min_{w^\prime \in \child{v}}U_L(\pi^\prime_L,\pi^\prime_F)|_{{w^\prime}} = U_L(\pi^\prime_L,\pi^\prime_F)|_{v}. 
\end{equation*}
\end{proof}

\begin{lemma}
\label{lemma: max-min strategy}
1. Strategy profile $(\pi_L,\pi_F)$ defined by  \Cref{eqn: follower's min strategy} and (\ref{eqn: leader's max strategy}) has the following form:
\begin{equation}
\label{eqn:leader's max strategy-close form}
        \pi_L(v)\in{\arg\max}_{w\in \child{v}} M_L(w)
\end{equation}
\begin{equation}
\label{eqn:follower's min strategy-close form}
        \pi_F(v)\in{\arg\min}_{w\in\child{v}} M_L(w)
\end{equation}
and $U_L(\pi_L,\pi_F)=M_L(root)$.

2. Strategy profile $(\pi_L,\pi_F)$ defined by \Cref{eqn: leader's min strategy} and (\ref{eqn: follower's max strategy}) has the following form:
\begin{equation}
        \pi_L(v)\in{\arg\min}_{w\in \child{v}} M_F(w)
\end{equation}
\begin{equation}
        \pi_F(v)\in{\arg\max}_{w\in\child{v}} M_F(w)
\end{equation}
and $U_F(\pi_L,\pi_F)=M_F(root)$
\end{lemma}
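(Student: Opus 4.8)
The plan is to prove both parts by a single structural induction on the game tree, establishing the stronger statement that the restricted leader utility equals the recursively-defined maximin value \emph{at every node}, not just at the root. This stronger claim is convenient because it simultaneously pins down the closed form of the strategies and the value at $root$.

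For part~1, I would prove by induction on $v\in H\cup Z$, working from the leaves upward, that the profile $(\pi_L,\pi_F)$ defined by \Cref{eqn: follower's min strategy} and \Cref{eqn: leader's max strategy} satisfies $U_L(\pi_L,\pi_F)|_{v}=M_L(v)$. The base case $v\in Z$ is immediate, since both sides equal $U_L(v)$ by \Cref{def: max-min value}. For the inductive step at an internal node $v$, the inductive hypothesis gives $U_L(\pi_L,\pi_F)|_{w}=M_L(w)$ for every $w\in\child{v}$. If $\mathcal{P}(v)=L$, then \Cref{eqn: leader's max strategy} selects $\pi_L(v)\in{\arg\max}_{w\in\child{v}}U_L(\pi_L,\pi_F)|_{w}$, which by the hypothesis equals ${\arg\max}_{w\in\child{v}}M_L(w)$, whence $U_L(\pi_L,\pi_F)|_{v}=\max_{w\in\child{v}}M_L(w)=M_L(v)$. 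If $\mathcal{P}(v)=F$, then \Cref{eqn: follower's min strategy} selects $\pi_F(v)\in{\arg\min}_{w\in\child{v}}U_L(\pi_L,\pi_F)|_{w}={\arg\min}_{w\in\child{v}}M_L(w)$, giving $U_L(\pi_L,\pi_F)|_{v}=\min_{w\in\child{v}}M_L(w)=M_L(v)$. Reading off these $\arg\max$/$\arg\min$ identities yields exactly the closed forms \Cref{eqn:leader's max strategy-close form} and \Cref{eqn:follower's min strategy-close form}, and evaluating the established equality at $v=root$ delivers $U_L(\pi_L,\pi_F)=M_L(root)$.

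For part~2, I would run the identical induction with $U_L$ replaced by $U_F$ and $M_L$ by $M_F$, using the profile from \Cref{eqn: leader's min strategy} and \Cref{eqn: follower's max strategy}. The roles of the players are simply swapped: the follower maximizes $U_F$ (so an $\arg\max$ appears at nodes $v$ with $\mathcal{P}(v)=F$, matching the $\max$ branch of $M_F$ in \Cref{def: max-min value}), and the leader minimizes $U_F$ (so an $\arg\min$ appears at nodes with $\mathcal{P}(v)=L$, matching the $\min$ branch). Because the recursion defining $M_i$ is symmetric in precisely this way, the argument transfers verbatim to give the two closed forms and $U_F(\pi_L,\pi_F)=M_F(root)$.

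The one point requiring genuine care is that the strategies in \Cref{eqn: follower's min strategy}--\Cref{eqn: leader's max strategy} are specified through $U_L(\pi_L,\pi_F)|_{w}$, which superficially depends on the very profile being defined. I would address this by observing that the recursion is \emph{well-founded} on the tree: the restricted utility $U_L(\pi_L,\pi_F)|_{w}$ depends only on the restriction of the profile to the subtree $T_w$, so the choice prescribed at $v$ depends only on choices strictly below $v$. This is exactly what legitimizes the bottom-up induction, and it is the substantive observation behind what is otherwise a routine computation. The recurring identification is that, once the hypothesis forces $U_L(\pi_L,\pi_F)|_{w}$ and $M_L(w)$ to agree on all children of $v$, the $\arg\max$ (respectively $\arg\min$) sets with respect to the two quantities coincide, so any selection meeting the utility-based definition automatically meets the maximin-based closed form.
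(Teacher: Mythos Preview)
Your proposal is correct and follows essentially the same approach as the paper's proof: a bottom-up structural induction establishing $U_L(\pi_L,\pi_F)|_{v}=M_L(v)$ at every node, from which both the closed forms and the root value follow. Your explicit remark on the well-foundedness of the recursive definition is a nice clarification that the paper leaves implicit.
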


\begin{proof}
We only prove the first statement here, the proof of the second statement is similar. Besides the form of the strategy profiles, we will also prove that under that strategy profile $(\pi_L,\pi_F)$ defined by \Cref{eqn: follower's min strategy} and (\ref{eqn: leader's max strategy}), $U_L(\pi_L,\pi_F)|_{v} = M_L(v)$ for all $v\in H\cup Z$. Thus $U_L(\pi_L,\pi_F)= M_L(root)$.  

\textbf{Inductive Base:} When $v\in Z$, $U_L(\pi_L,\pi_F)|_{v} = U_L(v) = M_L(v)$. The action space is empty here, the statements hold. 

\textbf{Inductive Step:} When $v\in H$, assume the inductive hypothesis holds and $(\pi_L,\pi_F)$ in \Cref{eqn: follower's min strategy} and (\ref{eqn: leader's max strategy}) have been defined in its descendent nodes.  

If $\mathcal{P}(v) = L$, $\pi_L(v) \in \argmax_{w\in\child{v}} U_L(\pi_L,\pi_F)|_{w} = {\arg\max}_{w\in\child{v}} M_L(w)$ by the inductive hypothesis, and  $U_L(\pi_L,\pi_F)|_{v} = \max_{w\in\child{v}}U_L(\pi_L,\pi_F)|_{w} = \max_{w\in\child{v}}M_L(w) = M_L(v)$. 

If $\mathcal{P}(v) = F$, $\pi_F(v) \in {\arg\min}_{w\in\child{v}} U_L(\pi_L,\pi_F)|_{w} = {\arg\min}_{w\in\child{v}} M_L(w)$ by the inductive hypothesis, and $U_L(\pi_L,\pi_F)|_{v} = \min_{w\in\child{v}}U_L(\pi_L,\pi_F)|_{w} = \min_{w\in\child{v}}M_L(w) = M_L(v)$. 
\end{proof}
\begin{proof}[Proof of \Cref{lemma: min utility in pure commit}]

Since there is an SSE leading to $z$, we have
\begin{equation}
\label{eqn: max-min full}
 U_L(z) = \max_{{\pi_L'}\in\Pi_L} U_L({\pi_L'},\pi^\prime_F \in \BR({\pi_L'})) \geq \max_{{\pi_L'}\in \Pi_L}\min_{{\pi_F'} \in \Pi_F} U_L({\pi_L'},{\pi_F'}). 
\end{equation}
By \Cref{lemma: strategy form of max and min} and \Cref{lemma: max-min strategy}, the right hand side of \Cref{eqn: max-min full} equals $M_L(root)$.

\begin{equation}
\label{eqn:min-max full}
     U_F(z)=\max_{\pi^\prime_F\in \Pi_F}U_F(\pi_L,\pi^\prime_F) \geq \min_{{\pi_L'}\in \Pi_L}\max_{\pi^\prime_F\in \Pi_F}U_F({\pi_L'},\pi^\prime_F).
\end{equation}
By \Cref{lemma: strategy form of max and min} and \Cref{lemma: max-min strategy}, the right hand side of \Cref{eqn:min-max full} equals $M_F(root)$. 

\end{proof}

\subsection{Proof of \Cref{lemma: zero-sum in pure commit}}
\label{subsec: lemmazerosuminpurecommitment}

\lemmazerosuminpurecommitment*

\begin{proof}
Since $U_F =-U_L$, we have $M_F(root)=-M_L(root)$. Let $(\pi^*_L,\pi^*_F)$ be an SSE with pure commitment of 
% game 
$(T_{root},U_L,-U_L)$, we have \begin{equation*}
    \begin{aligned}
    U_L(\pi^*_L,\pi^*_F)&=\mathop{\max}_{\pi_L\in \Pi_L,\pi_F\in \BR(\pi_L)}U_L(\pi_L,\pi_F) \geq M_L(root),\\
    U_F(\pi^*_L,\pi^*_F)&=-U_L(\pi^*_L,\pi^*_F) \geq M_F(root)=-M_L(root),
    \end{aligned}
\end{equation*}
which means
\begin{equation*}
\begin{aligned}
& M_L(root) \leq \mathop{\max}_{\pi_L\in \Pi_L,\pi_F\in \BR(\pi_L)}U_L(\pi_L,\pi_F) \leq M_L(root)\\ \Rightarrow & \mathop{\max}_{\pi_L\in \Pi_L,\pi_F\in \BR(\pi_L)}U_L(\pi_L,\pi_F) = M_L(root).
\end{aligned}
\end{equation*}
\end{proof}

\subsection{Proof of \Cref{lemma: pure commit}}
\label{subsec: lemmapurecommit}
\lemmapurecommit*
\begin{proof}
For \textbf{the sufficiency}, suppose that $z\in Z$ satisfies $U_F(z)\geq M_F(v)$ for all nodes $v$ on the path from $root$ to $z$. For node $v\in H$ that is on the path from $root$ to $z$, we use $w(v,z)\in\child{v}$ to denote the child of $v$ that is on the path from $root$ to $z$. Consider the following commitment of the leader
\begin{equation*}
\pi_L(v) \left\{ \begin{array}{ll}
=w(v,z),& {v\in \mathcal{P}^{-1}(L) \textrm{ is on the path from }root\textrm{ to }z};\\
\in {\arg\min}_{w\in \child{v}}M_F(w),& \text{otherwise},
\end{array}
\right.
\end{equation*}
and the following strategy of the follower
\begin{equation*}
\pi_F(v) \left\{ \begin{array}{ll}
=w(v,z),& {v\in \mathcal{P}^{-1}(F)\textrm{ is on the path from }root\textrm{ to }z};\\
\in {\arg\max}_{w\in \child{v}}M_F(w),& \text{otherwise}.
\end{array}
\right.
\end{equation*}

Notice that $(\pi_L,\pi_F)$ will lead to the leaf node $z$, so we only need to prove $\pi_F\in\BR(\pi_L)$. This is mainly based on the following observation: for each $v\in \mathcal{P}^{-1}(F)$ that is not on the path from $root$ to $z$, we have \begin{equation}
\label{eqn: not on the path}
    \max_{\pi_F'\in\Pi_F}U_F(\pi_L,\pi_F')|_{v}=M_F(v),
\end{equation}
and for each $v\in \mathcal{P}^{-1}(F)$ that is on the path from $root$ to $z$, we have 
\begin{equation}
\label{eqn: on the path}
    \max_{\pi_F'\in\Pi_F}U_F(\pi_L,\pi_F')|_{v}=U_F(z).
\end{equation}

The proof of Equation \ref{eqn: not on the path} can be directly derived by \Cref{lemma: max-min strategy}. We then prove Equation \ref{eqn: on the path} for every $v\in \mathcal{P}^{-1}(F)$ that is on the path from $root$ to $z$. Intuitively, we can observe the Equation \ref{eqn: on the path} by a simple induction over the tree structure.
% and $w(v,z)$ is the child of $v$ such that $w(v,z)$ is on the path from $v$ to $z$. 
% Consider $U_F(\pi_L,\pi_F)|_{v}$. 
If $\pi^\prime_F(v)=w(v,z)$, then $U_F(\pi_L,\pi^\prime_F)|_{v}\leq U_F(z)$ by the inductive hypothesis. If $\pi^\prime_F(v)=w'\neq w(v,z)$, then $U_F(\pi_L,\pi^\prime_F)|_{v}=M_F(w')$. By the condition $U_F(z)\geq M_F(v)=\max_{w'\in\child{v}}M_F(w')$ and thus Equation \ref{eqn: on the path} holds on $v$. 
% , we know that $U_F(z)\geq M_F(v)
Thus
\begin{equation*}
    \max_{\pi_F'\in\Pi_F}U_F(\pi_L,\pi_F')=U_F(z) = U_F(\pi_L,\pi_F). 
\end{equation*}

For \textbf{the necessity} part, suppose that 
$(\pi_L,\pi_F\in\BR(\pi_L))$ leads to $z\in Z$. Suppose, for the sake of contradiction, that there is a node $v\in \mathcal{P}^{-1}(F)$ on the path from $root$ to $z$ such that $U_F(z)<M_F(v)$. Then the follower can construct a new strategy $\pi_F^*$, where $\pi_F^*(v)= w \in \argmax_{w^\prime\in \child{v}}M_F(w^\prime)$, $\pi_F^*\in \BR(\pi_L)|_{w}$ and $\pi_F^*(v^\prime)=\pi_F(v^\prime)$ for other undefined nodes $v'\in \mathcal{P}^{-1}(F)$. Then by \Cref{eqn:min-max full}, $U_F(\pi_L,\pi_F^*)|_{{v}}= U_F(\pi_L,\pi_F^*)|_{w}\geq M_F(w) = M_F(v)$, which means $U_F(\pi_L,\pi_F^*)|_{v}\geq M_F(v) > U_F(\pi_L,\pi_F)|_{v}$, contradicting to our assumption. 
\end{proof}

\section{Omitted Proofs from Section \ref{sec: algorithms under pure}}

\subsection{Proof of \Cref{thm: algorithm for pure case}}
\label{appendix: thm: algorithm for pure case}
\thmalgorithmforpurecase*

\begin{proof}
We prove that Algorithm \ref{alg:OptimalCheatingPure} is a linear-time algorithm to find the $z^*$ and $\tilde{U}_F^*$ in \Cref{thm: algorithm for pure case}.

\textbf{Correctness.} By \Cref{thm: cheat in pure commit}, 
% we know that 
$SF$ is the set of all inducible leaf nodes. So $z^*$ satisfies $U_F(z^*)\geq U_F(z)$ for all inducible leaf nodes $z$. By the proof of \Cref{thm: cheat in pure commit}, we know that there is an SSE with pure commitment of game $(T_{root},U_L,\tilde{U}_F^*)$ leading to $z^*$. So the correctness follows.

\textbf{Complexity.} We note that $M_L(root)$ can be computed in $O(|H|+|Z|)$ time.
Calculating the set $SF$ and picking the leaf node $z^*$ would run in $O(|Z|)$ time.
By \Cref{coro:construct U_F in pure},
% With slight modification of the proof of \Cref{thm: cheat in pure commit}, 
we can construct the follower's payoff function in $O(|H|+|Z|)$ time. So the complexity of Algorithm \ref{alg:OptimalCheatingPure} is $\Theta(|H|+|Z|)$.
% which is a linear-time algorithm.
\end{proof}

\begin{algorithm}[!ht]
	\caption{Optimally Reporting with Pure Commitment}
	\label{alg:OptimalCheatingPure}
	\KwIn{
	    An extensive-form game $(T_{root}, N=\{L,F\}, \mathcal{P},\{U_L,U_F\})$.
	}
	\KwOut{
	    An inducible leaf node $z^*$ and a 
	   % follower's
	    payoff function $\tilde{U}_F^*$ satisfying 
	    (1) 
	    a 
	   % corresponding 
	    strategy profile  of $z^*$ is an SSE with pure commitment of game $(T_{root},U_L,\tilde{U}_F^*)$,
	   % there is an SSE with pure commitment of the game leading to $z^*$
	    (2) $U_F(z^*)\geq U_F(z)$ for all inducible leaf nodes $z$.
	}
	
	\BlankLine
	Calculate a set of leaf nodes $SF=\{z\in Z|U_L(z)\geq M_L(root)\}$.\\
	Pick a leaf node $z^*\in SF$ such that $z^*\in\arg\max_{z\in SF}U_F(z)$.\\
	Construct a follower' payoff function $\tilde{U}_F^*$ via \Cref{coro:construct U_F in pure}. \\
	\Return $\tilde{U}_F^*$ and $z^*$.
\end{algorithm}

\section{Omitted Proofs from Section \ref{sec: properties for behavioral case}}

\subsection{Proof of \Cref{lemma: feasible}}
\label{subsec: lemmafeasible}
\lemmafeasible*
\begin{proof}
1. Since if not, suppose there exists ${\pi_F'}|_{w}$ such that $U_F(\delta_L,{\pi_F'})|_{w} > U_F(\delta_L,\pi_F)|_{w}$ for some $w\in\operatorname{Supp}((\delta_L,\pi_F)|_{v},v)$, then define the follower's strategy ${\pi_F'}|_{v}$ at $T_v$ to be equal to  ${\pi_F'}|_{w}$ at $T_w$ and 
% equal to 
$\pi_F|_{v}$ on other undefined nodes. Then $U_F(\delta_L,\pi^\prime_F)|_{v}>U_F(\delta_L,{\pi_F})|_{v}$,  contradicting to that $\pi_F \in \BR(\delta_L)|_{v}$.

2. Since for any follower's strategy ${\pi_F'}|_{v}$ at $T_v$, 
\begin{equation*}
    U_F(\delta_L,{\pi_F'})|_{w}\leq U_F(\delta_L,\pi_F)|_{w}, ~~~\forall w\in \operatorname{Supp}((\delta_L,\pi_F)|_{v},v),
\end{equation*}   
the convex combination of $\{U_F(\delta_L,{\pi_F'})|_{w}\}_{w\in\operatorname{Supp}((\delta_L,\pi_F)|_{v},v)}$ is at most the same combination of $\{U_F(\delta_L,\pi_F)|_{w}\}_{w\in\operatorname{Supp}((\delta_L,\pi_F)|_{v},v)}$. 

3. Suppose $(\delta_L,\pi_F)|_{v}(v) = \sum_{w^\prime\in \child{v}}\alpha_{w^\prime} w^\prime$ and, for the sake of contradiction, for some $w\in \operatorname{Supp}((\delta_L,\pi_F)|_{v},v)$, $\pi^\prime_F \in \argmax_ {\pi^{\prime\prime}_F \in \BR(\delta_L)|_{w}}U_L(\delta_L,\pi^{\prime\prime}_F)|_{w}$ satisfies $U_L(\delta_L,\pi^\prime_F)|_{w} > U_L(\delta_L,\pi_F)|_{w}$.
Extend $\pi^\prime_F|_{w}$ to $T_v$ by letting it equal $\pi_F|_{v}$ on other undefined nodes. By 1 and 2, $(\delta_L,\pi^\prime_F)|_{v}$ is feasible and $U_L(\delta_L,\pi^\prime_F)|_{v} = \sum_{w^\prime \in\child{v}}\alpha_{w^\prime} U_L(\delta_L,\pi^\prime_F)|_{{w^\prime}}> U_L(\delta_L,\pi_F)|_{v}$, leading to a contradiction. 
\end{proof}

\subsection{Proof of \Cref{coro: construct U_F in behavioral}}
\label{subsec: coroconstructuf}

\coroconstructuf*

\begin{proof}
Use backward induction to calculate $\tilde{U}_F$ by Equation \ref{calculate U_F1}, Equation \ref{calculate U_F2} and Equation \ref{calculate U_F3}. Note that for each $h\in H$, the Equation \ref{calculate U_F1} and Equation \ref{calculate U_F2} will use at most $|Z|$ steps to update $\tilde{U}_F$, so the total complexity is $O(|H|\cdot|Z|)$.
\end{proof}

\section{Omitted Proofs from Section \ref{sec: simple inducible strategy-behavioral}}

\subsection{Proof of \Cref{lemma: simple inducible sp}}

\lemmasimpleinduciblesp*

\begin{proof}

We prove by structural induction over the game tree 
% that 
% to prove 
that for any inducible distribution $p\in \Delta(Z)$ on game $(T_{root},U_L,U_F)$, there exists a Y-shape inducible distribution $p^*\in \Delta(Z)$ such that $U_F(p^*)\geq U_F(p)$ and $U_L(p^*)\geq U_L(p)$.

\textbf{Inductive Base:} When $root\in Z$, the only distribution $p\in\Delta(Z)$ satisfies $|\operatorname{Supp}(p)|=1<2$, and is inducible. 

\textbf{Inductive Step:} Suppose that $root \in H$ and $p\in \Delta(Z)$ is inducible.

\textbf{When $\mathcal{P}(root)=F$: } Since $p$ is realizable, let $v\in\child{root}$ be the only child of $root$ that $p(v)>0$. If $p|_{v}$ is inducible at $T_v$, 
% subgame $(T_{v},U_L|_{v},U_F|_{v})$,
then by the inductive hypothesis, we know that there exists a Y-shape inducible distribution $p^*|_{v}\in\Delta(Z_v)$ such that $U_F(p^*)|_{v}\geq U_F(p)|_{v}$ and $U_L(p^*)|_{v}\geq U_L(p)|_{v}$. Extend $p^*|_{v}$ to $T_{root}$ by letting $p^*(z)=0$ for each $z\in Z\setminus Z_v$,
then $p^*$ is Y-shape and inducible at
$T_{root}$, 
% $(T_{root},U_L,U_F)$, 
by \Cref{thm: cheat in behavioral commit} (2)(b). The inequalities of utilities still hold for $p^*$ and $p$. 

Now suppose $U_L(p)\geq \min_{v'\in\child{root},v'\neq v}M_L(v')$. If $p$ is not Y-shape, then we have $|\operatorname{Supp}(p)|>2$, which means $U_i(p)~(i\in N)$ are mixtures of utilities of more than two leaf nodes. Since there are only two players in the game, there exists a mixture of two leaf nodes in the set $\{z|z\in Z, p(z)>0\}$, say $p^*$, as a dominant strategy, i.e., $U_L(p^*)\geq U_L(p)$ and $U_F(p^*)\geq U_F(p)$. Since 
	\begin{equation*}
		U_L(p^*)\geq U_L(p)\geq \min_{v'\in\child{root},v'\neq v}M_L(v'),
	\end{equation*}
	by \Cref{thm: cheat in behavioral commit} (2)(a),  $p^*$ is inducible, and satisfies $U_F(p^*)\geq U_F(p)$. 
% 	So we finished the proof of our argument when $\mathcal{P}(root)=F$.
	
	\textbf{When $\mathcal{P}(root)=L$:} Pick node $v$ such that
	\begin{equation}\label{eq: pick h}
	    v\in \mathop{\arg\max}_{v'\in\mathrm{Supp}(p,root)}U_F(p)|_{{v'}}.
	\end{equation}
	By \Cref{thm: cheat in behavioral commit} (1)(c),  $p|_{v}$ is inducible at $T_v$. 
% 	subgame $(T_v,U_L|_{v},U_F|_{v})$.
	By the inductive hypothesis, there exists a Y-shape inducible distribution $p^*|_{v}\in \Delta(Z_v)$ such that $U_F(p^*)|_{v}\geq U_F(p)|_{v}$ and $U_L(p^*)|_{v}\geq U_L(p)|_{v}$. We can simply extend $p^*|_{v}$ to $T_{root}$ by letting $p^*(z)=0$ for each $z\in Z\setminus Z_v$.
	
	One can verify that $p^*\in\Delta(Z)$ satisfies all three conditions of the case $\mathcal{P}(root)=L$ of \Cref{thm: cheat in behavioral commit}, which means $p^*$ is inducible. Furthermore, we have $U_F(p^*)\geq U_F(p)$ and $U_L(p^*)\geq U_L(p)$.
	
	This completes the proof.
\end{proof}

\subsection{Proof of \Cref{corollary: non-recursive characterization of y-shape inducibility}}

In the following proofs, given $z_1,z_2\in Z$ and $\alpha \in [0,1]$, we use $p_{\alpha,z_1,z_2}$ to denote the 
% probability 
distribution $p\in \Delta(Z)$ that $p(z_1)=\alpha$ and $p(z_2)=1-\alpha$. 

\corollarynonrecursivecharacterizationofyshapeinducibility*

\begin{proof}
First notice that for a Y-shape distribution $p$ that $|\operatorname{Supp}(p)|=2$ to be realizable, let $\operatorname{Supp}(p)=\{z_1,z_2\}$, then the least common ancestor of $z_1$ and $z_2$ must belong to $\mathcal{P}^{-1}(L)$, which we denote as $LCA$. Now we first consider \textbf{Case (1)}:

\textbf{The necessity:}
Suppose for the sake of contradiction, condition (a) is not satisfied: 

If $z_1$ and $z_2$ do not have common ancestors belonging to $\mathcal{P}^{-1}(F)$, then for any $\alpha \in [0,1]$, any 
% corresponding 
strategy profile of $p_{\alpha, z_1,z_2}$ is feasible for the leader to choose, whatever the follower's payoff function is. W.L.O.G., let $U_L(z_1)<U_L(z_2)$, then the pure strategy profile leading to $z_2$ always gains the leader higher utility.  Thus $p$ will never correspond to an SSE under any follower's payoff function. 

Now suppose for any common ancestors $v\in \mathcal{P}^{-1}(F)$ and $w\in\child{v}$ of $z_1$ and $z_2$, 
\begin{equation*}
    U_L(p)=U_L(p)|_{v} < \min_{w^\prime\in\child{v}, w^\prime \ne w} M_L(w^\prime),
\end{equation*}
then by \Cref{thm: cheat in behavioral commit}, $p|_{u}$ has to be inducible at $T_u$ for  any node $u$ on the path from $root$ to $LCA$. When $u=LCA\in \mathcal{P}^{-1}(L)$, that $U_L(z_1)\ne U_L(z_2)$ contradicts to condition (1)(a) of \Cref{thm: cheat in behavioral commit}. 

Condition (b) follows from condition (1)(b) of \Cref{thm: cheat in behavioral commit}.

\textbf{The sufficiency: } By condition (2)(a) of \Cref{thm: cheat in behavioral commit}, $p|_{v}$ is inducible at $T_v$. Suppose $p|_{u}$ is inducible at $T_u$ for ancestor $u$ of $v$, we prove that for $v$'s ancestor $t$ that $u\in\child{t}$, $p|_{t}$ is inducible at $T_t$: the case when $t\in \mathcal{P}^{-1}(F)$ follows from condition (2)(b) of \Cref{thm: cheat in behavioral commit}; when $\mathcal{P}^{-1}(L)$, conditions (1)(a) are satisfied and (1)(b)(c) of \Cref{thm: cheat in behavioral commit} follows from condition (b). 

Then we consider \textbf{Case (2)}:

\textbf{The necessity} follows from condition (1)(b) of \Cref{thm: cheat in behavioral commit}. 

\textbf{The sufficiency: } since $p$ is realizable, $v$ either belongs to $\mathcal{P}^{-1}(L)$ (when $z_1\ne z_2$) or is $z_1$ (when $z_1 = z_2$, $|\operatorname{Supp}(p)|=1$). By checking conditions of \Cref{thm: cheat in behavioral commit}, $p|_{v}$ is inducible at $v$. Suppose  $p|_{u}$ is inducible at $T_u$ for ancestor $u$ of $v$, for $v$'s ancestor $t$, that $u\in\child{t}$, $p|_{t}$ is inducible at $T_t$ by \Cref{thm: cheat in behavioral commit}. 
\end{proof}

\section{Omitted Proofs from Section \ref{sec: higher utility than pure}}

\subsection{Proof of \Cref{proposition: higher utility}}

\propositionhigherutility*

\begin{proof}
Recalling that any leaf node $z$ is equivalent to a distribution $p$ on leaf nodes that puts all probability on that leaf node, it suffices to show that $P(\Pi_L\times \Pi_F) \subseteq P(\Delta_L\times \Pi_F)$. For $p \in P(\Pi_L\times \Pi_F)$, we show by induction on the game tree that $p$ also satisfies conditions in \Cref{thm: cheat in behavioral commit}. 

\textbf{Inductive Base: }When $root \in Z$, the only realizable distribution $p$ where $p(root)=1$ is both inducible with pure and behavioral commitment. 

\textbf{Inductive Step: }When $root \in H$, let $\operatorname{Supp}(p,root)=\{v\}$. We assume the inductive hypothesis holds in subtrees and consider any $p\in P(\Pi_L\times \Pi_F)$. 

\textbf{When $\mathcal{P}(root) = L$,} by the definition of $P(\Pi_L\times \Pi_F)$, condition (1)(a) and (b) of \Cref{thm: cheat in behavioral commit} are satisfied.  Then $U_L(p)|_{v} = U_L(p) \geq M_L(root) = \max_{v^\prime\in \child{root}}M_L(v^\prime)\geq M_L(v)$. Thus $p|_{v}$ is inducible at $T_v$ with behavioral commitment by the inductive hypothesis, which means condition (1)(c) is satisfied. So $p$ is inducible with behavioral commitment 
% $(T_{root},U_L)$. 
in game $(T_{root}, U_L,U_F)$. 

\textbf{When $\mathcal{P}(root)=F$, } $U_L(p) \geq M_L(root) = \min_{v^\prime \in \child{root}}M_L(v^\prime)$. 

If $v\in \argmin_{v^\prime \in \child{root}}M_L(v^\prime)$, then $U_L(p)|_{v}=U_L(p) \geq M_L(v)$, $p|_{v}$ is inducible with both pure and behavioral commitment by the inductive hypothesis in subgame $(T_v,U_L|_{v},U_F|_{v})$, satisfying condition (2)(b) of \Cref{thm: cheat in behavioral commit}. Otherwise, $U_L(p)|_{v} \geq \min_{v^\prime \in \child{root}, v^\prime \ne v}M_L(v^\prime)$, satisfying condition 1. Thus $p$ is inducible with behavioral commitment in game $(T_{root},U_L,U_F)$. 

\end{proof}

\section{Omitted Proofs from Section \ref{sec: strong inducibility}}

\subsection{Proof of \Cref{thm: strong inducibility under pure}}

\thmstronginducibilityunderpure*

\begin{proof}
The proof is based on structural induction.

\textbf{Inductive Base: }When $root \in Z$, the strategy set is empty and $root$ is strongly inducible. 

\textbf{Inductive Step: }When $root \in H$, we first prove \textbf{the necessities}.

\textbf{When $\mathcal{P}(root)=L$:} assume that there exists $v'\in\child{root}$ that $v'\neq v$ and $U_L(z)\leq M_L(v')$ for the sake of contradiction. Then the leader can always get at least $M_L(v')$ by committing to a pure strategy $\pi_L$ such that $\pi_L(root)=v'$. So $z$ cannot be the unique outcome of all SSEs of game $(T_{root}, U_L,\tilde{U}_F)$ for any $\tilde{U}_F$.

Now assume that $z$ is not strongly inducible at 
% subgame 
% $(T_v,U_L|_{v},U_F|_{v})$. 
$T_v$. 
Then for any $\tilde{U}_F$, if the leader optimally commits to some pure strategy $\pi_L$ that $\pi_L(root)=v$, then $z$ cannot be the unique outcome of all induced SSEs. On the other hand, if the leader optimally commits to some pure strategy $\pi_L$ that $\pi_L(root)\neq v$, then $z$ cannot be led to by any induced SSE, leading to a contradiction. 

\textbf{When $\mathcal{P}(root)=F$:} 
% If $U_L(z)<\min_{v^\prime \in \child{root}, v^\prime\ne v} M_L(v^\prime)$, then we have $U_L(z)<M_L(root)$, so $z$ is not inducible by \Cref{thm: cheat in pure commit}
% Since that $z$ is inducible implies $U_L(z)\geq\min_{v^\prime \in \child{root}  v^\prime\ne v} M_L(v^\prime)$ by \Cref{thm: cheat in pure commit}, 
% for the sake of contradiction,
We assume $U_L(z)\leq \min_{v^\prime \in \child{root}, v^\prime\ne v} M_L(v^\prime)$ and prove that $z$ is strongly inducible at 
% subgame
% $(T_{v},U_L|_{v},U_F|_{v})$. 
$T_v$. 

We claim that to make $z$ the unique outcome of SSEs, $\tilde{U}_F$ must satisfy that $\forall (\pi_L,\pi_F\in \BR(\pi_L))\in(\Pi_L\times\Pi_F)$,  $\pi_F(root)=v$. Otherwise, 
the leader must have a commitment $\pi^\prime_L$ and a corresponding follower's best response $\pi^\prime_F\in\BR(\pi^\prime_L)$ such that $\pi^\prime_F(root)=v'\neq v$.
% By the condition $U_L(z)=\min_{v^\prime \in \child{root}, v^\prime\ne v} M_L(v^\prime)$, we know that
Then the leader can get at least $M_L(v^\prime) \geq U_L(z)$ by committing to $\pi^\prime_L$, 
contradicting to that $z$ is the unique outcome. 
% which means $z$ is not a unique SSE.

Now we suppose $\tilde{U}_F$ satisfies that for any $(\pi_L,\pi_F)\in(\Pi_L\times\Pi_F)$ if $\pi_F\in\BR(\pi_L)$, then $\pi_F(root)=v$. This reduces the problem to the subtree $T_v$. So if $z$ is strongly inducible on $T_{root}$, it must be strongly inducible on $T_v$.

Now we prove \textbf{the sufficiencies}.
\textbf{When $\mathcal{P}(root)=L$:} Let $\tilde{U}_F|_{v}$ be a payoff function such that all SSEs of subgame $(T_{v},U_L|_{v},\tilde{U}_F|_{v})$ lead to $z$. We extend it to $T_{root}$ by setting $\tilde{U}_F(z)=-U_L(z)$ for all $z\in Z\setminus Z_v$. Then if the leader commits to a strategy $\pi_L^\prime$ that $\pi_L^\prime(root)=v^\prime \ne v$, the best he get is $M_L(v^\prime) < U_F(z)$. thus his optimal commitment is to choose $v$ at $root$. Thus all SSEs of game $(T_{root},U_L,\tilde{U}_F)$ correspond to $z$. 
% Then $z$ is a unique SSE of the game $(U_L,\tilde{U}_F)$.

\textbf{When $\mathcal{P}(root)=F$:}
If (2)(a) holds, suppose that $v^\prime \in\child{root},v^\prime \neq v$ satisfies $U_L(z)>M_L(v^\prime)$. Construct the follower's payoff function, $\tilde{U}_F$, as follows:
\begin{eqnarray}
\tilde{U}_F(z')=\left\{ \begin{array}{ll}
-M_L(v^\prime)+1 & {z^\prime=z};\\
-M_L(v^\prime)-1 & z^\prime\not\in Z_{v^\prime}\wedge z^\prime\neq z;\\
-U_L(z^\prime) & z^\prime\in Z_{v^\prime}.
\end{array}
\right.
\end{eqnarray}
Since $U_L(z)>M_L(w)$, the best choice for the leader is to commit to a pure strategy that can achieve $z$.
Thus all SSEs lead to $z$. 
% So $z$ would be a unique SSE.

If (2)(b) holds, let $\tilde{U}_F|_{v}$ be a payoff function such that all SSEs of subgame $(T_{v},U_L|_{v},\tilde{U}_F|_{v})$ lead to $z$. We simply extend it by setting $\tilde{U}_F(z)=\min_{z\in Z_v}-\tilde{U}_F|_{v}(z)-1$ on $Z \setminus Z_v$. Then the follower will always best-respond to choose $v$ at $root$, and thus all SSEs of game $(T_{root},U_L,\tilde{U}_F)$ lead to $z$. 
% Then $z$ is a unique SSE of the game $(U_L,\tilde{U}_F)$.

This finishes the proof.
\end{proof}

\subsection{Proof of \Cref{thm: strong inducibility under behavioral}}

\thmstronginducibilityunderbehavioral*

\begin{restatable}{lemma}{lemmacharacterizefeasibleunderbehavioral}
\label{lemma: characterzie feasible under behavioral}
Given game $(T_{root}, U_L,U_F)$ and realizable distribution $p\in\Delta(Z)$, there exists a feasible
% corresponding 
strategy profile 
$(\delta_L,\pi_F \in \BR(\delta_L))$ of $p$
% , that $\pi_F \in \BR(\delta_L)$ 
if and only if $U_L(p)|_{v} \geq M_F(v)$ for any $v \in \{v'\in \mathcal{P}^{-1}(F): p(v')>0\}$. 
\end{restatable}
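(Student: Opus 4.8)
The plan is to prove both directions by structural induction over the game tree, mirroring the pure-commitment analog \Cref{lemma: pure commit} but now permitting the leader to randomize. Throughout I would invoke \Cref{lemma: characterize feasible distributions} to keep $p$ realizable, so that every follower node $v$ with $p(v)>0$ has a single child in the support of $p$, and I would use \Cref{lemma: feasible} to transfer feasibility between a subtree and its parent. I would also rely on the closed forms computed in \Cref{lemma: max-min strategy}: at any node $v$ the follower can secure exactly $M_F(v)$ by playing his maximin strategy, regardless of how the leader plays, and the leader can hold the follower down to $M_F(v)$ by her minimax strategy. These two facts are what turn each local deviation test into the clean threshold $M_F(v)$ appearing in the statement.

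For the \textbf{sufficiency} (condition $\Rightarrow$ feasible), I would construct an explicit profile $(\delta_L,\pi_F)$ realizing $p$. On the unique $p$-path through each follower node (unique by realizability), $\pi_F$ follows $p$, and at leader nodes I let $\delta_L$ split exactly according to $p$. Off the support of $p$ I set the leader to her minimax strategy and the follower to his maximin strategy, using the closed forms of \Cref{lemma: max-min strategy}. The purpose of this off-path choice is that a follower deviation at a node $v$ into a sibling subtree can guarantee him at most $M_F(v)$, while the hypothesis $U_L(p)|_{v}\ge M_F(v)$ says that staying on $p$ already weakly dominates every such deviation at $v$. Applying \Cref{lemma: feasible}(2) bottom-up — feasibility in each child subtree forces feasibility at the parent — then certifies $\pi_F\in\BR(\delta_L)$ on all of $T_{root}$, i.e. $p$ is feasible.

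For the \textbf{necessity} (feasible $\Rightarrow$ condition), I would start from any feasible $(\delta_L,\pi_F)$ of $p$ and fix a follower node $v$ with $p(v)>0$. By \Cref{lemma: feasible}(1) the restriction $(\delta_L,\pi_F)|_{v}$ is still feasible at $T_v$, so $\pi_F|_{v}$ best-responds to $\delta_L|_{v}$; comparing the on-path continuation value at $v$ against the follower's best deviation, which by \Cref{lemma: max-min strategy} secures precisely $M_F(v)$, the best-response property yields $U_L(p)|_{v}\ge M_F(v)$, and since $v$ was an arbitrary follower node in the support the condition follows. I expect the main obstacle to be the bookkeeping across the two node types in the inductive step: at leader nodes the realized value is a convex combination over the children in the support, and one must verify the constructed split is consistent with $p$ while preserving a genuine best response, whereas at follower nodes it is the single-child realizability of \Cref{lemma: characterize feasible distributions} that collapses the deviation comparison to the threshold $M_F(v)$. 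Making these two cases interlock through the recursion, so that the locally best responses assemble into one global best response, is the delicate part of the argument.
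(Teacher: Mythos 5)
Your proposal takes essentially the same route as the paper's proof: for sufficiency you build the same profile (split according to $p$ on the support; leader minimax and follower maximin with respect to $M_F$ off the support, using the closed forms of \Cref{lemma: max-min strategy}), so that any follower deviation at a node $v$ with $p(v)>0$ nets him at most $M_F(v)$; for necessity you deviate at $v$ into a child in $\argmax_{w\in\child{v}}M_F(w)$ and best-respond in that subtree, securing at least $M_F(v)$ and contradicting feasibility via \Cref{lemma: feasible}, exactly as the paper does. Your packaging of the verification as a bottom-up assembly through \Cref{lemma: feasible}(2) is a cosmetic variant of the paper's node-by-node check that $\max_{\pi'_F}U_F(\delta_L,\pi'_F)|_{v}=U_F(p)|_{v}$.

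One caution, at the pivotal step: you state the hypothesis as $U_L(p)|_{v}\geq M_F(v)$ and conclude that staying on $p$ ``weakly dominates'' every deviation at $v$. Dominance here is a statement about the \emph{follower's} payoff — the deviation guarantees him at most (resp.\ at least) $M_F(v)$ — so the quantity that must clear the threshold is the follower's conditional utility $U_F(p)|_{v}$, not $U_L(p)|_{v}$. The $U_L$ in the lemma statement is evidently a typo: the paper's own proof works throughout with $U_F(p)|_{v}\geq M_F(v)$, mirroring the pure-commitment condition $U_F(z)\geq M_F(v)$ of \Cref{lemma: pure commit}. As literally written, your inequality compares the wrong player's utility and the dominance step is a non sequitur; with the condition read in terms of $U_F$, your argument coincides with the paper's.
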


\begin{proof}
\textbf{The sufficiency: } Consider the following commitment of the leader
\begin{equation*}
    \delta_L(v) = \left \{\begin{array}{ll}
         \sum_{w\in\child{v}} \frac{p(w)}{p(v)} w, & p(v) > 0 ; \\
         \argmin_{w\in\child{v}}M_F(w), &  \text{otherwise;}
    \end{array}\right.
\end{equation*}
and the following strategy of the follower
\begin{equation*}
    \pi_F(v) = \left \{\begin{array}{ll}
         \sum_{w\in\child{v}} \frac{p(w)}{p(v)} w, & p(v) > 0 ; \\
         \argmax_{w\in\child{v}}M_F(w), &  \text{otherwise.}
    \end{array}\right.
\end{equation*}
Notice that $(\delta_L,\pi_F)$ correspond to $p$, and since $p$ is realizable, $\pi_F$ is pure. We prove that for any $v\in \mathcal{P}^{-1}(F)$ that $p(v)>0$, $\max_{\pi^\prime_F\in \Pi_F} U_F(\delta_L,\pi^\prime_F)|_{v} = U_F(\delta_L,\pi_F)|_{v}=U_F(p)|_{v}$. For $w\in \child{v}$ that $p(w) = 0$, by \Cref{lemma: max-min strategy}, we have $\pi_F \in \BR(\delta_L)|_{w}$, and the follower get at most $M_F(w)$ at $T_w$. Since $U_F(p)|_{v} \geq M_F(v) = \max_{w^\prime \in\child{v}}M_F(w^\prime) \geq M_F(w) $, the follower's choosing $w$ that $p(w)>0$ will lead to the best utility for the follower at $v$. 

\textbf{The necessity:} Suppose, for the sake of contradiction, that there is a 
% strategy profile 
$(\delta_L,\pi_F\in \BR(\delta_L))$ 
of
% corresponding to
$p$ and a node $v\in \mathcal{P}^{-1}(F)$ that $p(v)>0$ satisfying $U_F(p)|_{v}<M_F(v)$. Then the follower can construct a new strategy $\pi^*_F$, where $\pi^*_F(v)=w\in\argmax_{w^\prime\in\child{v}}M_F(w^\prime)$, $\pi^*_F\in \BR(\pi_L)|_{w}$ and $\pi^*_F(v^\prime) = \pi_F(v')$ for other nodes $v^\prime \ne v$.  By \Cref{eqn: max-min full},  $U_F(\delta_L,\pi^*_F)|_{v} \geq M_F(w) = M_F(v) > U_F(\delta_L,\pi_F)|_{v}$, contradicting to that $(\delta_L,\pi_F\in \BR(\delta_L))$ by \Cref{lemma: feasible}. 
\end{proof}

\begin{proof}[Proof of \Cref{thm: strong inducibility under behavioral}]
We prove by structural induction over the game tree.

\textbf{Inductive Base: }When $root \in Z$, the only realizable distribution $p$ where $p(root)=1$ is strongly inducible, and all the conditions are satisfied. 
% naturally. 

\textbf{Inductive Step: }When $root \in H$, we first prove \textbf{the necessities}. 
% for \textbf{arbitrary} strongly inducible distributions. 

\textbf{When $\mathcal{P}(root)=L$:} consider a strongly inducible distribution $p$, then $p$ is also inducible. Suppose $\operatorname{Supp}(p,root)=\{v_1,v_2\} \subseteq \child{root}$. By the conditions of inducibility $U_L(p)|_{{v_i}}=U_L(p)$ for $i\in [2]$. Extend $p|_{{v_i}}$ to $p_i$ on $T_{root}$ by 
% letting $p_i(z) = p|_{{v_i}}$ for $z\in Z_{v_i}$ and
$p_i(z)=0$ for $z\in Z\setminus Z_{v_i}$, 
then $U_L(p_i)=U_L(p)|_{{v_i}}=U_L(p)$. 
% Consider a strongly inducible distribution $p$, then $p$ is also inducible. Suppose $\operatorname{Supp}(p,root)=\{v_1,\dots,v_k\} \subseteq \child{root}$ and $k\geq 2$, then by the conditions of inducibility $U_L(p)|_{{v_i}}=U_L(p)$ for $i\in [k]$. Extend $p|_{{v_i}}$ to $p_i$ on $T_{root}$ by letting $p_i(z) = p|_{{v_i}}$ for $z\in Z_{v_i}$ and $p_i(z)=0$ otherwise, then $U_L(p_i)=U_L(p)|_{{v_i}}=U_L(p)$. 

For any
% any follower's payoff function
$\tilde{U}_F$ that makes a 
% corresponding 
% strategy profile
$(\delta_L,\pi_F)$ of $p$ SSE and $i\in [2]$, consider an extension of $(\delta_L,\pi_F)|_{{v_i}}$ to $T_{root}$, $(\delta^\prime_L,\pi^\prime_F)$, that $(\delta^\prime_L,\pi^\prime_F)(root)=v_i$, $(\delta^\prime_L,\pi^\prime_F)|_{{v_i}}=(\delta_L,\pi_F)|_{{v_i}}$ and is defined arbitrarily on other nodes. Then $(\delta^\prime_L,\pi^\prime_F)$ corresponds to $p_i$, is also feasible by \Cref{lemma: feasible} and $U_L(\delta^\prime_L,\pi^\prime_F) = U_L(\delta_L,\pi_F)$. Thus $(\delta^\prime_L,\pi^\prime_F)$ is an SSE of game $(T_{root}, U_L,\tilde{U}_F)$, leading to a contradiction. 

Now let $\operatorname{Supp}(p,root)=\{v\}$ and suppose $U_L(p) = \max_{v^\prime \in\child{root}, v^\prime \ne v}M_L(v^\prime)$, then for any 
% follower's 
payoff function that makes a 
% corresponding 
strategy profile of $p$ SSE, by \Cref{lemma: min value in behavioral commit}, the leader can always gain at least $\max_{v^\prime\in\child{root},v^\prime \ne v}M_L(v^\prime)$ by choosing at $root$ from $\argmax_{v^\prime \in \child{root}, v^\prime \ne v}M_L(v^\prime)$ and thus not all SSEs correspond to $p$. 

If $p|_{v}$ is not strongly inducible at 
% subgame 
% $(T_{v}, U_L|_{v}, U_F|_{v})$, 
$T_v$, 
then for any  $\tilde{U}_F|_{v}$ on $T_v$ that makes a 
% corresponding 
strategy profile 
$(\delta_L,\pi_F)|_{v}$ of $p|_{v}$ SSE, there is always another $p^\prime|_{v}$, such that $p^\prime|_{v}\ne p|_{v}$,  and one of its 
% corresponding 
strategy profile $(\delta^\prime_L,\pi^\prime_F)|_{v}$ is also an SSE.
Extend $(\delta^\prime_L,\pi^\prime_F)|_{v}$ to $(\delta^\prime_L,\pi^\prime_F)$ on $T_{root}$ by letting $(\delta^\prime_L,\pi^\prime_F)(root)=v$, then $(\delta^\prime_L,\pi^\prime_F)$ corresponds to $p^\prime$ that $p^\prime = p^\prime|_{v}$ on $T_{v}$ and $p^\prime(z)=0$ otherwise.
% Then for any  $\tilde{U}_F$ that makes $(\delta_L,\pi_F)$ an SSE,
Then $(\delta^\prime_L,\pi^\prime_F)$ is also an SSE, leading to a contradiction. 

\textbf{When $\mathcal{P}(root)=F$: }
first suppose for a Y-shape strongly inducible distribution $p$, $\operatorname{Supp}(p)=\{z_1,z_2\}$, and 
% there exists $z_1, z_2 \in Z$ that $z_1 \ne z_2$, $p(z_1)\ne 0, p(z_2) \ne 0$ and 
$U_L(z_1)=U_L(z_2)$. Consider any 
% follower's payoff function 
$\tilde{U}_F$ that makes a 
% corresponding 
strategy profile $(\delta_L,\pi_F)$ of $p$ an SSE, and W.L.O.G., let $\tilde{U}_F(z_1)\leq \tilde{U}_F(z_2)$. Then $\tilde{U}_F(p)|_{v} \leq \tilde{U}_F(z_2)$ for any $v\in H$ on the path from $root$ to $z_2$ and $U_L(p)= U_L(z_2)$. Since $(\delta_L,\pi_F)$ is an SSE and so is feasible, $z_2$ satisfies conditions in \Cref{lemma: characterzie feasible under behavioral}. There exists a 
strategy profile 
$(\delta^\prime_L,\pi^\prime_F \in \BR(\delta^\prime_L))$ 
% corresponding to 
yielding
$z_2$. Thus $(\delta^\prime_L,\pi^\prime_F)$ is also an SSE, leading to a contradiction. 

For a strongly inducible distribution $p$ that $U_L(p) \leq \min_{v^\prime\in\child{root},v^\prime\ne v} M_L(v^\prime)$, we show that $p|_{v}$ is strongly inducible at 
% subgame
% $(T_v,U_L|_{v},U_F|_{v})$. 
$T_v$. 
For any $\tilde{U}_F$ that makes a 
% corresponding 
strategy profile $(\delta_L,\pi_F)$ of $p$ SSE in game $(T_{root},U_L,\tilde{U}_F)$ and any feasible strategy profile $(\delta^\prime_L,\pi^\prime_F\in \BR(\delta^\prime_L))$ in game $(T_{root},U_L,\tilde{U}_F)$, it must be that $\pi^\prime_F(root)=v$, since otherwise the leader can always gain at least $\min_{v^\prime\in\child{root},v^\prime\ne v}M_L(v^\prime) \geq U_L(\delta_L,\pi_F)$ by committing to $\delta^\prime_L$ and lead to a different distribution. Then the problem reduces to the strong inducibility on subgame $T_v$,
% $(T_{v},U_L|_{v},U_F|_{v})$, 
thus $p|_{v}$ is strongly inducible at 
% subgame 
$T_v$. 
% $(T_v, U_L|_{v},U_F|_{v})$. 

Now we prove \textbf{the sufficiency}.
% for \textbf{``Y-shape''} inducible distributions. 
Specifically, we construct the follower's payoff function as in the proof of \Cref{thm: cheat in behavioral commit}. the only difference is that when $p|_{v}$ is strongly inducible at subgame $T_v$,
% $(T_v, U_L|_{v}, U_F|_{v})$,
we utilize the corresponding $\tilde{U}_F|_{v}$ at $T_v$ that all SSEs of subgame $(T_v, U_L|_{v},\tilde{U}_F|_{v})$ correspond to $p|_{v}$. 

We show that all SSEs of game $(T_{root},U_L,\tilde{U}_F)$ correspond to $p$. 

\textbf{When $\mathcal{P}(root)=L$: } First, by \Cref{thm: cheat in behavioral commit}, there is an SSE $(\delta_L,\pi_F)$ of game $(T_{root},U_L,\tilde{U}_F)$ 
% corresponding to 
that leads to 
$p$. Consider any SSE $(\delta^\prime_L,\pi^\prime_F)$ of game $(T_{root},U_L,\tilde{U}_F)$, then $(\delta^\prime_L,\pi^\prime_F)(root)=v$, since choosing any other child node makes the leader get at most $\max_{v^\prime\in\child{root},v^\prime\ne v} M_L(root) < U_L(p)=U_L(\delta_L,\pi_F)$. Then since all SSEs of 
% subgame 
$(T_v,U_L|_{v},\tilde{U}_F|_{v})$ correspond to $p|_{v}$, all SSEs of game $(T_{root},U_L,\tilde{U}_F)$ correspond to $p$.

\textbf{When $\mathcal{P}(root)=F$: }
\textbf{Case 1: When $U_L(p)> \min_{v^\prime\in \child{root},v^\prime \ne v} M_L(v^\prime)$. }
Consider any other SSE $(\delta^\prime_L,\pi^\prime_F\in \BR(\delta^\prime_L))$. 
First if $\pi^\prime_F(root)=v_0$, then the most the leader can get is $m_0 < U_L(\delta_L,\pi_F)$. When $\pi^\prime_F(root)\ne v_0$, by the proof of \Cref{thm: cheat in behavioral commit}, to make $U_L(\delta^\prime_L,\pi^\prime_F) = U_L(\delta_L,\pi_F)$, we have $A(\delta^\prime_L,\pi^\prime_F)\subseteq A(\delta_L,\pi_F)$. 
% and  $\tilde{U}_F(\delta^\prime_L,\pi^\prime_F) \geq \tilde{U}_F(\delta_L,\pi_F) = -U_L(\delta_L,\pi_F)$. 
Since the only distribution that has $A(\delta_L,\pi_F)$ as its support and make the leader's gain $U_L(\delta_L,\pi_F)$ is $p$, thus $(\delta^\prime_L,\pi^\prime_F)$ correspond to $p$.

\textbf{Case 2: When $p|_{v}$ is strongly inducible at 
% subgame 
% $(T_{v},U_L|_{v},U_F|_{v})$.
$T_v$. 
}Since the follower will always choose $v$ at $root$, and all SSEs of 
% subgame 
$(T_v, U_L|_{v},\tilde{U}_F|_{v})$ correspond to $p|_{v}$, all SSEs of game $(T_{root}, U_L,\tilde{U}_F)$ correspond to $p$.

\end{proof}

\subsection{Corollaries of \Cref{thm: algorithm for strongly inducible under behavioral} Necessary for the Following Proofs} 

Actually, one can find out that the proofs of necessities in \Cref{thm: strong inducibility under behavioral} almost apply to arbitrary strongly inducible distributions, except one condition. 

\begin{corollary}
\label{coro: necessities of arbitrary strong inducibility}
If a realizable distribution $p\in \Delta(Z)$ is strongly inducible, then 
\begin{enumerate}
    \item if $\mathcal{P}(root)=L$, then $|\operatorname{Supp}(p,root)
        |=1$. Let $\operatorname{Supp}(p,root)
        =\{v\}$, then all of the following conditions are met:
    \begin{enumerate}
        \item $U_L(p)> \max_{v^\prime \in \child{root}, v^\prime\ne v} M_L(v^\prime)$;
        \item $p|_{v}$ is strongly inducible at $T_v$. 
        % subgame $(T_{v},U_L|_{v},U_F|_{v})$.  
    \end{enumerate}
    \item if $\mathcal{P}(root)=F$,
     let $\operatorname{Supp}(p,root)
        =\{v\}$, then at least one of following two conditions are met: 
    \begin{enumerate}
        \item $U_L(p)> \min_{v^\prime \in \child{root}, v^\prime\ne v} M_L(v^\prime)$;
        \item $p|_{v}$ is strongly inducible at $T_v$. 
        % subgame $(T_{v},U_L|_{v},U_F|_{v})$. 
    \end{enumerate}
\end{enumerate}

\end{corollary}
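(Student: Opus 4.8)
The plan is to prove this corollary by replaying, essentially verbatim, the \emph{necessity} direction of the proof of \Cref{thm: strong inducibility under behavioral}, and checking that almost none of those arguments actually exploit the Y-shape hypothesis $|\operatorname{Supp}(p)|\le 2$. Indeed, as flagged in the sentence preceding the statement, the necessity steps there rely only on \Cref{lemma: feasible}, the minimax guarantees of \Cref{lemma: min value in behavioral commit} and \Cref{lemma: zero-sum in behavioral commit}, and the inducibility conditions of \Cref{thm: cheat in behavioral commit}; none of these mention support size. So the strategy is to restate each necessity step for an arbitrary realizable $p$ and audit it for any hidden use of the two-element support, the one genuine exception being the clause $U_L(z_1)\ne U_L(z_2)$, which is precisely the condition dropped from the present statement. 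The recursive clauses (1)(b) and (2)(b) need no separate inductive hypothesis: they are discharged directly by lifting a subtree SSE to the whole tree.

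For the case $\mathcal{P}(root)=L$ I would first establish $|\operatorname{Supp}(p,root)|=1$. Assume for contradiction that two distinct children $v_1,v_2\in\operatorname{Supp}(p,root)$ carry positive mass. Since $p$ is strongly, hence also, inducible, condition (1)(a) of \Cref{thm: cheat in behavioral commit} forces $U_L(p)|_{v_1}=U_L(p)|_{v_2}$, both equal to $U_L(p)$. Given any $\tilde U_F$ making some strategy profile $(\delta_L,\pi_F)$ of $p$ an SSE, extend its restriction $(\delta_L,\pi_F)|_{v_1}$ to the whole tree by committing to $v_1$ at $root$; by \Cref{lemma: feasible} this is feasible, realizes the distribution $p_1$ supported inside $T_{v_1}$, and yields leader utility $U_L(p)|_{v_1}=U_L(p)=U_L(\delta_L,\pi_F)$, so it is again an SSE but corresponds to $p_1\ne p$, contradicting strong inducibility. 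This argument is blind to the size of the support. The strict inequality (1)(a) and the recursive clause (1)(b) then follow exactly as in the theorem: if $U_L(p)\le\max_{v'\ne v}M_L(v')$ the leader can secure that value at a sibling subtree and escape $p$, and if $p|_v$ fails to be strongly inducible at $T_v$ the alternative subtree SSE lifts, via committing to $v$ at $root$, to an SSE of $(T_{root},U_L,\tilde U_F)$ realizing a distribution $\ne p$.

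For $\mathcal{P}(root)=F$, realizability together with \Cref{lemma: characterize feasible distributions} already forces $\operatorname{Supp}(p,root)=\{v\}$, so there is nothing to prove about the support at $root$. I then assume (2)(a) fails, i.e.\ $U_L(p)\le\min_{v'\ne v}M_L(v')$, and reproduce the theorem's argument that every feasible best response must select $v$ at $root$ (otherwise the leader collects at least $\min_{v'\ne v}M_L(v')\ge U_L(p)$ at a sibling and reaches a different outcome), which reduces strong inducibility of $p$ to that of $p|_v$ on $T_v$, yielding (2)(b). The only place where the Y-shape proof additionally concluded $U_L(z_1)\ne U_L(z_2)$ was the step building a \emph{degenerate} alternative SSE that collapses onto the support leaf with larger $\tilde U_F$; that step genuinely requires the support to be a two-element set, and since the corresponding clause is exactly what we omit here, no analogue is needed. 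The main — and essentially only — obstacle is therefore the bookkeeping: verifying that when each paragraph of the theorem's necessity proof is replayed for arbitrary $p$, the bound $|\operatorname{Supp}(p)|\le 2$ is never silently invoked outside the dropped condition. Once that audit is complete, the argument closes with no new ideas.
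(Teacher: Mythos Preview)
Your proposal is correct and follows essentially the same approach as the paper: both replay the necessity arguments of \Cref{thm: strong inducibility under behavioral} for an arbitrary realizable $p$, noting that the Y-shape hypothesis is invoked only in deriving the clause $U_L(z_1)\ne U_L(z_2)$, which is precisely what the corollary drops. Your audit is in fact slightly cleaner than the paper's own write-up, which retains a vestigial Y-shape paragraph in the $\mathcal{P}(root)=F$ case.
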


\begin{proof}

\textbf{When $\mathcal{P}(root)=L$:} 
Consider a strongly inducible distribution $p$, then $p$ is also inducible. Suppose $\operatorname{Supp}(p,root)=\{v_1,\dots,v_k\} \subseteq \child{root}$ and $k\geq 2$. By the conditions of inducibility $U_L(p)|_{{v_i}}=U_L(p)$ for $i\in [k]$. Extend $p|_{{v_i}}$ to $p_i$ on $T_{root}$ by letting $p_i(z) = p|_{{v_i}}(z)$ for $z\in Z_{v_i}$ and $p_i(z)=0$ otherwise, then $U_L(p_i)=U_L(p)|_{{v_i}}=U_L(p)$. 

For any
% any follower's payoff function
$\tilde{U}_F$ that makes a 
% corresponding
strategy profile $(\delta_L,\pi_F)$ of $p$ SSE and $i\in [k]$, consider an extension of $(\delta_L,\pi_F)|_{{v_i}}$ to $T_{root}$, $(\delta^\prime_L,\pi^\prime_F)$, that $(\delta^\prime_L,\pi^\prime_F)(root)=v_i$, $(\delta^\prime_L,\pi^\prime_F)|_{{v_i}}=(\delta_L,\pi_F)|_{{v_i}}$ and is defined arbitrarily on other nodes. Then $(\delta^\prime_L,\pi^\prime_F)$ corresponds to $p_i$, is also feasible by \Cref{lemma: feasible} and $U_L(\delta^\prime_L,\pi^\prime_F) = U_L(\delta_L,\pi_F)$. Thus $(\delta^\prime_L,\pi^\prime_F)$ is an SSE of game $(T_{root}, U_L,\tilde{U}_F)$, leading to a contradiction. 

Now let $\operatorname{Supp}(p,root)=\{v\}$ and suppose $U_L(p) = \max_{v^\prime \in\child{root}, v^\prime \ne v}M_L(v^\prime)$, then for any 
% follower's 
payoff function that makes a 
% corresponding 
strategy profile of $p$ SSE, by \Cref{lemma: min value in behavioral commit}, the leader can always gain at least $\max_{v^\prime\in\child{root},v^\prime \ne v}M_L(v^\prime)$ by choosing at $root$ from $\argmax_{v^\prime \in \child{root}, v^\prime \ne v}M_L(v^\prime)$ and thus not all SSEs correspond to $p$. 

If $p|_{v}$ is not strongly inducible at 
% subgame 
% $(T_{v}, U_L|_{v}, U_F|_{v})$, 
$T_v$, 
then for any  $\tilde{U}_F|_{v}$ on $T_v$ that makes a 
% corresponding 
strategy profile $(\delta_L,\pi_F)|_{v}$ of $p|_{v}$ SSE, there is always another $p^\prime|_{v}$, such that $p^\prime|_{v}\ne p|_{v}$,  and one of its 
% corresponding 
strategy profile $(\delta^\prime_L,\pi^\prime_F)|_{v}$ is also an SSE. Extend $(\delta^\prime_L,\pi^\prime_F)|_{v}$ to $(\delta^\prime_L,\pi^\prime_F)$ on $T_{root}$ by letting $(\delta^\prime_L,\pi^\prime_F)(root)=v$, then $(\delta^\prime_L,\pi^\prime_F)$ corresponds to $p^\prime$ that $p^\prime = p^\prime|_{v}$ on $T_{v}$ and $p^\prime(z)=0$ otherwise. Then for any  $\tilde{U}_F$ that makes $(\delta_L,\pi_F)$ an SSE, $(\delta^\prime_L,\pi^\prime_F)$ is also an SSE, leading to a contradiction. 

\textbf{When $\mathcal{P}(root)=F$: }
First suppose for a Y-shape strongly inducible distribution $p$, $\operatorname{Supp}(p)=\{z_1,z_2\}$, and 
% there exists $z_1, z_2 \in Z$ that $z_1 \ne z_2$, $p(z_1)\ne 0, p(z_2) \ne 0$ and 
$U_L(z_1)=U_L(z_2)$. Consider any 
% follower's payoff function 
$\tilde{U}_F$ that makes a 
% corresponding 
strategy profile $(\delta_L,\pi_F)$ of $p$ an SSE, and W.L.O.G., let $\tilde{U}_F(z_1)\leq \tilde{U}_F(z_2)$. Then $\tilde{U}_F(p)|_{v} \leq \tilde{U}_F(z_2)$ for any $v\in H$ on the path from $root$ to $z_2$ and $U_L(p)= U_L(z_2)$. Since $(\delta_L,\pi_F)$ is an SSE and so is feasible, $z_2$ satisfies conditions in \Cref{lemma: characterzie feasible under behavioral} , and there exists a 
% strategy profile 
$(\delta^\prime_L,\pi^\prime_F \in \BR(\delta^\prime_L))$ 
% corresponding to
leading to
$z_2$. Thus $(\delta^\prime_L,\pi^\prime_F)$ is also an SSE, leading to a contradiction. 

For a strongly inducible distribution $p$ that $U_L(p) \leq \min_{v^\prime\in\child{root},v^\prime\ne v} M_L(v^\prime)$, we show that $p|_{v}$ is strongly inducible at 
% subgame 
% $(T_v,U_L|_{v},U_F|_{v})$. 
$T_v$. 
For any $\tilde{U}_F$ that makes a 
% corresponding 
strategy profile $(\delta_L,\pi_F)$ of $p$ SSE in game $(T_{root},U_L,\tilde{U}_F)$ and any feasible strategy profile $(\delta^\prime_L,\pi^\prime_F\in \BR(\delta^\prime_L))$ in game $(T_{root},U_L,\tilde{U}_F)$, it must be that $\pi^\prime_F(root)=v$, since otherwise the leader can always gain at least  $\min_{v^\prime\in\child{root},v^\prime\ne v} M_L(v^\prime) \geq U_L(\delta_L,\pi_F)$ by committing to $\delta^\prime_L$ and result in a different distribution. Then the problem reduces to the strong inducibility on subgame
% $(T_{v},U_L|_{v},U_F|_{v})$
$T_v$, thus $p|_{v}$ is strongly inducible at $T_v$. 
% subgame 
% $(T_v, U_L|_{v},U_F|_{v})$. 

\end{proof}

And to facilitate the following proofs, we state the non-recursive characterization for Y-shape strongly inducible distributions. 

\begin{restatable}{corollary}{corollarynonrecursivecharacterizationofyshapestronginducibility}
\label{corollary: non-recursive characterization of y-shape strong inducibility}
A  ``Y-shape'' realizable distribution $p$ is strongly inducible if and only if 
\begin{enumerate}
    \item if $|\operatorname{Supp}(p)|=2$, let $\operatorname{Supp}(p)=\{z_1,z_2\}$, then 
    \begin{enumerate}
        \item $U_L(z_1)\ne U_L(z_2)$; 
        \item $z_1$ and $z_2$ has common ancestors $v\in \mathcal{P}^{-1}(F)$ and $w\in \child{v}$ such that 
        \begin{equation*}
            U_L(p) >  \min_{w'\in\child{v},w'\neq w}M_L(w');
        \end{equation*}
        \item $U_L(p) > M_L(u)$ for $u\in \mathcal{P}^{-1}(L)$ on the path from $root$ to $v$;
    \end{enumerate}
    \item if $|\operatorname{Supp}(p)|=1$, let $\operatorname{Supp}(p)=\{z\}$, then
    \begin{equation*}
    U_L(p) > \max_{v \in \child{u}, v\ne v(u,z)}M_L(v)
    \end{equation*}
    for $u\in \mathcal{P}^{-1}(L)$ and $v(u,z) \in \child{u}$ on the path from $root$ to $z$.  
\end{enumerate}
\end{restatable}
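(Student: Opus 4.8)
The plan is to derive the non-recursive conditions directly from the recursive characterization in \Cref{thm: strong inducibility under behavioral}, exactly as \Cref{corollary: non-recursive characterization of y-shape inducibility} is obtained from \Cref{thm: cheat in behavioral commit}: I would unroll the recursion along the path descending from $root$ toward the support of $p$, translating each recursive step into a maximin inequality. Before doing so, I would record one auxiliary fact that glues the steps together: \emph{if $p$ is strongly inducible at $T_r$, then $U_L(p)>M_L(r)$}. This follows by a short structural induction on \Cref{thm: strong inducibility under behavioral}: when $\mathcal{P}(r)=F$, clause (2)(a) gives $U_L(p)>\min_{v'\neq v}M_L(v')\geq M_L(r)$ outright, while clause (2)(b) plus the inductive hypothesis on the support child $v$ gives $U_L(p)>M_L(v)\geq M_L(r)$; when $\mathcal{P}(r)=L$, clause (1)(a) handles every child except the support child $v$, and the inductive hypothesis applied through (1)(b) handles $v$ itself, so $U_L(p)>\max_{v'}M_L(v')=M_L(r)$. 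This fact is precisely what upgrades the theorem's ``$\max_{v'\neq v}M_L(v')$'' into the corollary's ``$M_L(u)$''.

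For the case $|\operatorname{Supp}(p)|=1$ with $\operatorname{Supp}(p)=\{z\}$, I would follow the single path from $root$ to $z$. At each leader node $u$ on the path the recursive predicate forces clause (1)(a), i.e.\ $U_L(p)>\max_{v'\neq v(u,z)}M_L(v')$, which is exactly the stated inequality; at each follower node the predicate is a disjunction that can be satisfied by recursing through (2)(b) with no numeric cost, and the recursion bottoms out at the leaf $z$ as the inductive base. Collecting the leader-node inequalities gives the claimed characterization, and the necessity direction forces (1)(a) at each leader node visited by the recursion.

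The case $|\operatorname{Supp}(p)|=2$ is where the real work lies. Writing $\operatorname{Supp}(p)=\{z_1,z_2\}$ and letting $LCA$ be their least common ancestor, I would first note that realizability together with \Cref{lemma: characterize feasible distributions} forces $\mathcal{P}(LCA)=L$, since at $LCA$ both children carrying $z_1$ and $z_2$ receive positive probability. Consequently $p|_{LCA}$ can never be strongly inducible at $T_{LCA}$ (clause (1) of \Cref{thm: strong inducibility under behavioral} demands a single support child at a leader node), so the recursion cannot terminate by reaching $LCA$. Since leader nodes force recursion through their support child while only follower nodes may terminate it via (2)(a), strong inducibility must ``break out'' at some follower node $v$ strictly above $LCA$ — this is the common $F$-ancestor of condition (1)(b), and the break-out inequality $U_L(p)>\min_{w'\neq w}M_L(w')$ is exactly (2)(a) there. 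The follower-node clause simultaneously forces $U_L(z_1)\neq U_L(z_2)$, giving condition (1)(a); and for every leader node $u$ on the path from $root$ to $v$, clause (1)(a) of the theorem combined with the auxiliary fact applied to the (strongly inducible) support child $v_u$ of $u$, which yields $U_L(p)>M_L(v_u)$, gives $U_L(p)>M_L(u)$, i.e.\ condition (1)(c). Sufficiency reverses this: given (a)--(c), I would build the witness by breaking out at $v$ through the construction of \Cref{thm: cheat in behavioral commit}, recursing through the follower nodes above $v$ via (2)(b), and verifying (1)(a) of the theorem at each leader node above $v$ from (1)(c).

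The main obstacle I anticipate is the bookkeeping that makes the equivalence of the theorem's per-node ``$\max_{v'\neq v}$'' conditions with the corollary's global ``$M_L(u)$'' conditions airtight — correctly accounting for the support child's own subtree through the auxiliary strict-inequality fact, and pinning down that the break-out in the two-leaf case is both necessary (the $LCA$ obstruction) and sufficient (the construction's uniqueness argument needs $U_L(z_1)\neq U_L(z_2)$ precisely to rule out competing SSEs supported on the same two leaves). Keeping the strict versus non-strict inequalities consistent throughout, so that strong inducibility demands strict domination and no tie-breaking can spawn an alternative SSE, is the delicate part.
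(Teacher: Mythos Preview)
Your overall strategy---unrolling \Cref{thm: strong inducibility under behavioral} along the root-to-support path---matches the paper's, and for the two-leaf case your argument is essentially correct. The auxiliary fact you isolate is exactly what is needed to upgrade the theorem's $\max_{v'\neq v}M_L(v')$ to the corollary's $M_L(u)$; the paper simply asserts the corresponding step. One minor repair: as stated, your auxiliary fact fails when $r$ is a leaf (there $U_L(p)=M_L(r)$, not strictly greater), so the structural induction has no valid base. Since in the two-leaf case you only ever apply the fact at internal nodes strictly above the least common ancestor, the clean fix is to induct \emph{upward} from the break-out follower node $v$ (where clause~(2)(a) supplies the strict inequality directly) rather than downward from $root$.

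The single-leaf necessity, however, has a real gap that the paper's one-line proof shares. You assume the recursion visits every leader node on the $root$--$z$ path and collects clause~(1)(a) there; but at a follower node the theorem's disjunction may be satisfied by~(2)(a) alone, \emph{without} the restriction to the support child being strongly inducible---so leader nodes below that follower node are never reached and need not satisfy~(1)(a). Concretely: let $root\in\mathcal{P}^{-1}(F)$ have children $v_1\in\mathcal{P}^{-1}(L)$ and a leaf $v_2$, let $v_1$ have leaf children $z_1,z_2$, and set $U_L(z_1)=1$, $U_L(z_2)=2$, $U_L(v_2)=0$. The point mass at $z_1$ is strongly inducible by \Cref{thm: strong inducibility under behavioral} (clause~(2)(a) at $root$ gives $1>M_L(v_2)=0$; reporting $\tilde U_F(z_1)=\tilde U_F(v_2)$ and $\tilde U_F(z_2)\ll 0$ makes $z_1$ the unique SSE outcome), yet condition~(2) of the corollary at $u=v_1$ would demand $1>M_L(z_2)=2$. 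Your unrolling therefore cannot establish necessity of condition~(2) as written; any correct argument must explicitly account for early termination at a follower ancestor via clause~(2)(a).
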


\begin{proof}

First notice that for a Y-shape distribution $p$ to be realizable and $|\operatorname{Supp}(p)|=2$, let $\operatorname{Supp}(p)=\{z_1,z_2\}$, then the least common ancestor of $z_1$ and $z_2$ must belong to $\mathcal{P}^{-1}(L)$, which we denote as $LCA$. Now we first consider \textbf{Case (1)}:

\textbf{The necessity:}
Condition (a) follows from \Cref{thm: strong inducibility under behavioral}. 
Suppose for the sake of contradiction, condition (b) is not satisfied: by the proof of \Cref{corollary: non-recursive characterization of y-shape inducibility}, $z_1$ and $z_2$ must have at least one common ancestor in $\mathcal{P}^{-1}(F)$. Now suppose for any common ancestors $v\in \mathcal{P}^{-1}(F)$ and $w\in\child{v}$ of $z_1$ and $z_2$, 
\begin{equation*}
    U_L(p) \leq \min_{w^\prime\in\child{v}, w^\prime \ne w} M_L(w^\prime),
\end{equation*}
then by the conditions in \Cref{thm: strong inducibility under behavioral}, $p|_{u}$ has to be strongly inducible at  $T_u$ for any $u$ on the path from $root$ to $LCA$. When $u=LCA\in \mathcal{P}^{-1}(L)$,  that $|\operatorname{Supp}(p|_{{LCA}})|=2$  leads to a contradiction. 
% conditions of \Cref{thm: strong inducibility under behavioral}. 

Condition (c) follows from condition (1)(b) of \Cref{thm: strong inducibility under behavioral}.

\textbf{The sufficiency: } By condition (2)(a) of \Cref{thm: strong inducibility under behavioral}, $p|_{v}$ is strongly inducible at $T_v$. Suppose $p|_{u}$ is strongly inducible at $T_u$ for ancestor $u$ of $v$, we prove that for $v$'s ancestor $t$, that $u\in\child{t}$, $p|_{t}$ is strongly inducible at $T_t$: the case when $t\in \mathcal{P}^{-1}(F)$ follows from condition (2)(b) of \Cref{thm: strong inducibility under behavioral}; when $t\in\mathcal{P}^{-1}(L)$, conditions (1)(b) are satisfied and (1)(a) of \Cref{thm: strong inducibility under behavioral} follows from condition (c). 

Then we consider \textbf{Case (2)}: \textbf{The necessity} follows from condition (1)(b) of \Cref{thm: strong inducibility under behavioral}. 

\textbf{The sufficiency: } By checking conditions of \Cref{thm: strong inducibility under behavioral}, $p|_{z}$ is strongly inducible at $T_z$. Suppose  $p|_{u}$ is strongly inducible at $T_u$ for ancestor $u$ of $v$, for $v$'s ancestor $t$, that $u\in\child{t}$, $p|_{t}$ is strongly inducible at $T_t$ by \Cref{thm: strong inducibility under behavioral}. 
\end{proof}

Finally, we show that for any strongly inducible distribution $p\in \Delta(Z)$, there always exists a dominant Y-shape strongly inducible distribution $p^*$. 

\begin{corollary}
\label{coro: simple strongly inducible distribution}
For any extensive-form game $(T_{root},U_L,U_F)$ and any strongly inducible distribution $p$, there exists a Y-shape strongly inducible distribution $p^*\in \Delta(Z)$, such that $U_F(p^*)\geq U_F(p)$
 and $U_L(p^*) \geq U_L(p)$. 
\end{corollary}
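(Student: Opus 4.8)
The plan is to prove the stronger statement by structural induction over the game tree, exactly paralleling the proof of \Cref{lemma: simple inducible sp}: for every strongly inducible $p$ I will produce a Y-shape strongly inducible $p^*$ satisfying both $U_F(p^*)\ge U_F(p)$ and $U_L(p^*)\ge U_L(p)$. Carrying the leader inequality alongside the follower inequality is what makes the induction close, since the recursive conditions for strong inducibility in \Cref{thm: strong inducibility under behavioral} are phrased through $U_L$ and $M_L$. The base case $root\in Z$ is immediate, as the unique distribution is already supported on one leaf.

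For the inductive step I will split on $\mathcal{P}(root)$ and invoke \Cref{coro: necessities of arbitrary strong inducibility}, whose necessity conditions apply to \emph{arbitrary} (not merely Y-shape) strongly inducible distributions. When $\mathcal{P}(root)=L$, that corollary forces $\operatorname{Supp}(p,root)=\{v\}$, $U_L(p)>\max_{v'\ne v}M_L(v')$, and strong inducibility of $p|_{v}$ at $T_v$; I apply the inductive hypothesis on $T_v$, extend the resulting $p^*|_{v}$ by zero outside $Z_v$, and verify that conditions (1)(a),(b) of \Cref{thm: strong inducibility under behavioral} survive because $U_L(p^*)=U_L(p^*|_{v})\ge U_L(p|_{v})=U_L(p)>\max_{v'\ne v}M_L(v')$. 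When $\mathcal{P}(root)=F$ and $p|_{v}$ is strongly inducible at $T_v$ (case (b) of the corollary), the identical extend-and-recurse argument goes through condition (2)(b). In both situations the chosen pair of leaves inherits realizability from $p$, since any two leaves in the support of a realizable distribution have their least common ancestor at an $L$-node.

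The one genuinely new difficulty is the remaining $\mathcal{P}(root)=F$ case, $U_L(p)>\min_{v'\ne v}M_L(v')$, where $p$ may have large support and must be compressed to support size at most $2$ while respecting the extra requirement of \Cref{thm: strong inducibility under behavioral}(2) that a two-leaf distribution be strongly inducible \emph{only} when $U_L(z_1)\ne U_L(z_2)$. I will examine the finite point set $\{(U_F(z),U_L(z)):z\in\operatorname{Supp}(p)\}\subseteq Z_v\times\mathbb{R}$; the realized pair $(U_F(p),U_L(p))$ lies in its convex hull, so its northeast Pareto frontier contains a point weakly dominating it. If that point is a vertex I take the corresponding single leaf; if it lies on a non-horizontal edge I take the mixture of its two endpoints, which automatically has $U_L(z_1)\ne U_L(z_2)$; and if it lies on a horizontal edge I slide to the endpoint of larger $U_F$, again yielding a single leaf. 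In every case the chosen $p^*$ is supported in $Z_v$, satisfies $U_L(p^*)\ge U_L(p)>\min_{v'\ne v}M_L(v')$ and $U_F(p^*)\ge U_F(p)$, and is Y-shape strongly inducible by \Cref{thm: strong inducibility under behavioral}(2)(a) via either its single-leaf clause or its distinct-$U_L$ two-leaf clause. This horizontal-edge collapse is the crux of the argument: it is precisely the device that reconciles the two-coordinate dominance requirement with the $U_L(z_1)\ne U_L(z_2)$ constraint, and it is where the reasoning departs from the plain-inducibility proof of \Cref{lemma: simple inducible sp}.
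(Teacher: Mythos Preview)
Your proof is correct and follows essentially the same structural-induction approach as the paper, including the same case split via \Cref{coro: necessities of arbitrary strong inducibility} and the same horizontal-edge-collapse device in the $\mathcal{P}(root)=F$, case-(a) branch (the paper phrases it less geometrically: pick a dominating two-leaf mixture, then if $U_L(z_1)=U_L(z_2)$ collapse to the single leaf with larger $U_F$). One minor slip: in the $L$ and $F$-(b) branches the leaves of $p^*$ come from the inductive hypothesis and need not lie in $\operatorname{Supp}(p)$, so realizability there follows from the IH (strongly inducible $\Rightarrow$ realizable on $T_v$, and extending by zero preserves this), not from your LCA-at-$L$-node argument; that argument is exactly what is needed---and what you should invoke explicitly---in case (a), where the chosen leaves do lie in $\operatorname{Supp}(p)$.
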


\begin{proof}
We show by structural induction over the game tree. 

\textbf{Inductive Base: }When $root\in Z$, the only distribution $p$ that $p(root)=1$ is strongly inducible and is Y-shape. 

\textbf{Inductive Step: }When $root \in H$, suppose the inductive hypothesis holds. 

\textbf{When $\mathcal{P}(root)=L$: }Since $p$ satisfies the conditions in \Cref{coro: necessities of arbitrary strong inducibility}, let $\operatorname{Supp}(p,root)=\{v\}$, then $p|_{v}$ is strongly inducible at 
$T_v$. 
% subgame $(T_v,U_L|_{v},U_F|_{v})$. 
By the inductive hypothesis, there is a Y-shape strongly inducible distribution $p^*|_{v}$ on $T_{v}$ satisfies $U_L(p^*)|_{v} \geq U_L(p)|_{v}$, and $U_F(p^*)|_{v} \geq U_F(p)|_{v}$. Let $p^*$ be its extension to $T_{root}$ that $p^*(z)=0$ for $z\in Z\setminus Z_v$, then 
\begin{equation*}
    U_L(p^*) \geq U_L(p) > \max_{v^\prime\in\child{root},v^\prime\ne v}M_L(v^\prime). 
\end{equation*}
Thus $p^*$ is strongly inducible and $U_F(p^*) \geq U_F(p)$, $U_L(p^*) \geq U_L(p)$.

\textbf{When $\mathcal{P}(root)=F$: }
\textbf{Case 1: When $U_L(p)> \min_{v^\prime\in \child{root},v^\prime \ne v} M_L(v^\prime)$. } If $p$ is not Y-shape, then $|\operatorname{Supp}(p)|>2$, and $U_F(p)$ is a mixture of utilities of more than two leaf nodes. There exists a mixture of two leaf nodes $z_1$ and $z_2$ in the set $\{z|z\in Z, p(z)>0\}$, say $p^*$, satisfying $U_L(p^*)\geq U_L(p)> \min_{v^\prime\in \child{root},v^\prime \ne v} M_L(v^\prime)$ and $U_F(p^*)\geq U_F(p)$,. 

If $U_L(z_1) \ne U_L(z_2)$, then $p^*$ is strongly inducible. Else if $U_L(z_1)=U_L(z_2) \geq U_L(p_1)$, suppose $U_F(z_1)\leq U_F(z_2)$, then distribution $p^{**}$ that puts all probability on $z_2$ satisfies $U_F(p^{**})=U_F(z^*) \geq U_F(p)$. $p^{**}$ is Y-shape,  satisfies all conditions in \Cref{thm: strong inducibility under behavioral} and thus is strongly inducible.

\textbf{Case 2: When $p|_{v}$ is strongly inducible at $T_v$. 
% subgame $(T_{v},U_L|_{v},U_F|_{v})$. 
} By the inductive hypothesis, there exists a  Y-shape strongly inducible distribution $p^*|_{v}\in \Delta(Z_v)$ satisfying $U_L(p^*)|_{v} \geq U_L(p)|_{v}$ and $U_F(p^*)|_{v} \geq U_F(p)|_{v}$. Let $p^*$ be its extension to $T_{root}$ that $p^*(z)=0$ for $z\in Z\setminus Z_v$, then $p^*$ is Y-shape and strongly inducible.

\end{proof}

\subsection{Proof of \Cref{thm: algorithm for strongly inducible under pure}}

\thmalgorithmforstronglyinducibleunderpure*

\begin{proof}
Note that we can enumerate each leaf node to check if it satisfies the conditions in \Cref{thm: strong inducibility under pure}. If there is no leaf node satisfying such conditions, then we conclude that there does not exist a strongly inducible leaf node. Otherwise, pick the strongly inducible leaf node with maximal follower's utility and construct a 
% corresponding 
follower's payoff function to induce it by \Cref{thm: strong inducibility under pure}. This process can be solved in $O((|H|+|Z|)^2)$ time.
\end{proof}

\subsection{Proof of \Cref{thm: algorithm for strongly inducible under behavioral}}

\thmalgorithmforstronglyinducibleunderbehavioral*

\begin{proof}
We first use a lemma to prove that if $SP(\Delta_L\times\Pi_F)\ne \emptyset$, either there exists an optimal strongly inducible distribution, or for any $\epsilon$, an $\epsilon-$optimal one exists

\begin{lemma}\label{lemma: existence of near optimal}
For game $(T_{root}, U_L,U_F)$, if $SP(\Delta_L\times\Pi_F)\ne \emptyset$, then either
\begin{enumerate}
    \item there exists a distribution $p^*\in SP(\Delta_L\times\Pi_F)$, such that 
\begin{equation*}
    U_F(p^*)=\sup_{p\in SP(\Delta_L\times \Pi_F)}U_F(p)
\end{equation*} or
    \item  for any $\epsilon>0$, there exists $p(\epsilon)\in SP(\Delta_L\times\Pi_F)$, such that 
\begin{equation*}
U_F(p(\epsilon)) \geq \sup_{p\in SP(\Delta_L\times \Pi_F)}U_F(p)-\epsilon
\end{equation*}
\end{enumerate}
\end{lemma}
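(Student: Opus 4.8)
The plan is to reduce the optimization over all strongly inducible distributions to the tractable family of Y-shape ones, and then to observe that over this family $U_F$ is being maximized over a finite union of line segments, where the elementary behavior of an affine objective on an interval produces exactly the stated dichotomy. First I would invoke \Cref{coro: simple strongly inducible distribution}: for every $p \in SP(\Delta_L \times \Pi_F)$ there is a Y-shape strongly inducible $p'$ with $U_F(p') \ge U_F(p)$. Since also $SYP(\Delta_L\times\Pi_F)\subseteq SP(\Delta_L\times\Pi_F)$, this yields
\[
\sup_{p\in SP(\Delta_L\times\Pi_F)}U_F(p)=\sup_{p\in SYP(\Delta_L\times\Pi_F)}U_F(p),
\]
so it suffices to analyze the right-hand side. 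As $U_F(p)=\sum_{z\in Z}U_F(z)p(z)\le \max_{z\in Z}U_F(z)$, this supremum is a finite real number, which I will call $V$; this is what rules out any degenerate ``infinite'' case and makes the dichotomy exhaustive.

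Next I would describe $SYP(\Delta_L\times\Pi_F)$ through the non-recursive characterization in \Cref{corollary: non-recursive characterization of y-shape strong inducibility}. Each such distribution either puts all mass on a single leaf $z$, or has the form $p_{\alpha,z_1,z_2}$ with $U_L(z_1)\ne U_L(z_2)$. There are only finitely many single-leaf members, so their contribution to the supremum is attained (or is vacuous). For a fixed admissible pair $(z_1,z_2)$, conditions (b) and (c) of \Cref{corollary: non-recursive characterization of y-shape strong inducibility} are all strict lower bounds on $U_L(p_{\alpha,z_1,z_2})=\alpha U_L(z_1)+(1-\alpha)U_L(z_2)$, which is affine in $\alpha$. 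For each fixed witnessing ancestor $v$ their conjunction is an intersection of affine strict half-line constraints, hence a subinterval of $[0,1]$ that is relatively open at every endpoint where a strict inequality binds; ranging over the finitely many admissible $v$ in condition (b) then gives a finite union of such intervals. On each such interval the follower's payoff $U_F(p_{\alpha,z_1,z_2})=\alpha U_F(z_1)+(1-\alpha)U_F(z_2)$ is again affine in $\alpha$, so its supremum is attained at a closed endpoint (or everywhere, if constant), and fails to be attained only when the unique maximizing endpoint is an open one.

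Finally I would assemble the two cases. The value $V$ is the maximum, over the finitely many single leaves and the finitely many segments just described, of these per-segment suprema. If $V$ is realized at a single leaf, at a closed endpoint of some feasible $\alpha$-interval, or on an interior piece where the affine objective is constant, then case (1) holds with that maximizer as $p^*$. Otherwise $V$ is approached only as $\alpha$ tends to an open endpoint of some segment; but every point of that (relatively open) segment lies in $SYP(\Delta_L\times\Pi_F)\subseteq SP(\Delta_L\times\Pi_F)$ and $U_F(p_{\alpha,z_1,z_2})$ is continuous in $\alpha$, so for each $\epsilon>0$ a value of $\alpha$ close enough to the open endpoint gives $p(\epsilon)\in SP(\Delta_L\times\Pi_F)$ with $U_F(p(\epsilon))\ge V-\epsilon$, which is case (2).

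The main obstacle I expect is the boundary bookkeeping in the middle step: one must verify that the strict inequalities governing strong inducibility along the whole root-to-ancestor path genuinely reduce to a finite union of intervals in the single variable $\alpha$ (using convexity of each affine constraint), and, more importantly for case (2), that the approximating points truly satisfy strong inducibility rather than merely lying in the closure. Both rest on the observation that strong inducibility is cut out by \emph{strict} inequalities, so the feasible $\alpha$-set, while possibly open at the maximizing endpoint, always contains an approximating relatively open subinterval on which the characterization continues to hold; this is precisely why non-attainment of $V$ is only ever a supremum and never a gap.
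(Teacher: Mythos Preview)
Your proposal is correct and follows essentially the same route as the paper: reduce to Y-shape strongly inducible distributions via \Cref{coro: simple strongly inducible distribution}, then use the non-recursive characterization (\Cref{corollary: non-recursive characterization of y-shape strong inducibility}) to exhibit $U_F(SYP(\Delta_L\times\Pi_F))$ as a finite union of open intervals together with finitely many isolated points, from which the dichotomy follows. The paper carries out the same decomposition, condensing the per-pair constraints into a single threshold $M(z_1,z_2)$ so that each $SYP_{(z_1,z_2)}$ is a single open $\alpha$-interval rather than your more cautious ``finite union''; either description suffices.
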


\begin{proof}[Proof of \Cref{lemma: existence of near optimal}]
By \Cref{coro: simple strongly inducible distribution}, it suffices to consider all the Y-shape strongly inducible distributions, which we denote as $SYP(\Delta_L\times\Pi_F)$. 

Note that, to prove the lemma, it suffices to prove that there exists a finite number of open intervals $\{(a_i,b_i)\}_{i\in [k]}$ and a finite number of points $\{r_i\}_{i\in [s]}$, such that 
$U_F(SYP(\Delta_L\times\Pi_F))=\left(\cup_{i\in [k]} (a_i,b_i)\right) \cup \left (\cup_{i\in[s]} r_i\right)$. 
For each $z\in Z$, define 
\begin{equation*}
    M(z) \coloneqq \underset{{\substack{u\in \mathcal{P}^{-1}(L)\text{~and~} v(u,z)\in \child{u}\\ \text{~on the path from~} root \text{~to~} z}}}{\max}\max_{v\in \child{u},v\ne v(u,z)}M_L(v),
\end{equation*} then by \Cref{corollary: non-recursive characterization of y-shape strong inducibility}, $SYP_1\coloneqq \{z\in Z: U_L(z) > M(z)\}$ consists of all strongly inducible distributions that has support size of $1$.  

Consider $(z_1,z_2) \in E(root)$, that is, the least common ancestor of $z_1,z_2 \in Z$ belongs to $\mathcal{P}^{-1}(L)$. And we consider those that $U_L(z_1)\ne U_L(z_2)$. For any of their common ancestors $v\in \mathcal{P}^{-1}(F)$ and $w\in \child{v}$, define
\begin{equation*}
    M(v,(z_1,z_2)) \coloneqq \max\{\underset{{\substack{u\in \mathcal{P}^{-1}(L)\\ \text{~on the path from~} root \text{~to~} v}}}{\max}M_L(u), \min_{w^\prime\in\child{v},w^\prime\ne w}M_L(w^\prime)\}
\end{equation*}
denote all of their common ancestors that belong to $\mathcal{P}^{-1}(F)$ as $CAF(z_1,z_2)$, and define 
\begin{equation*}
    M(z_1,z_2)\coloneqq \min_{v\in CAF(z_1,z_2)} M(v,(z_1,z_2)),
\end{equation*}
Given $z_1,z_2\in Z$ and $\alpha \in [0,1]$, recall that $p_{\alpha,z_1,z_2}$ denotes the 
% probability 
distribution $p\in \Delta(Z)$ that $p(z_1)=\alpha$ and $p(z_2)=1-\alpha$. Then $SYP_{(z_1,z_2)}\coloneqq \{p_{\alpha,z_1,z_2}: \alpha\in (0,1), U_L(p_{\alpha,z_1,z_2})>M(z_1,z_2) \}$ denotes all the strongly inducible distributions that has support $\{z_1,z_2\}$.  $U_F(SYP_{(z_1,z_2)})$ is an open interval. 

Since $E(root)$ and $Z$ are finite sets, and \begin{equation*}
    U_F(SYP(\Delta_L\times \Pi_F)) = \left(\cup_{(z\in SYP_1)}U_F(z)\right) \cup \left(\cup_{(z_1,z_2)\in E(root)}U_F(SYP_{(z_1,z_2)})\right),
\end{equation*} 
$U_F(SYP(\Delta_L\times\Pi_F))$ is a union of a finite number of  of open intervals and a finite set. This finishes the proof. 

\end{proof}

Now we finish the proof of \Cref{thm: algorithm for strongly inducible under behavioral}.

The algorithm first calculates the sets $SYP_1$ and $SYP_{(z_1,z_2)}$ for each $(z_1,z_2)\in E(root)$ as in the proof of \Cref{lemma: existence of near optimal}.

Let $C\coloneqq \sup\left(\cup_{(z\in SYP_1)}U_F(z)\right) \cup \left(\cup_{(z_1,z_2)\in E(root)}U_F(SYP_{(z_1,z_2)})\right)$.

\textbf{Case 1:} If $C\in \left(\cup_{(z\in SYP_1)}U_F(z)\right) \cup \left(\cup_{(z_1,z_2)\in E(root)}U_F(SYP_{(z_1,z_2)})\right)$, then we enumerate $z\in SYP_1$ to check if $U_F(z)=C$ and enumerate $(z_1,z_2)\in E(root)$ to check if $C\in U_F(SYP_{(z_1,z_2)})$. Since $C\in \left(\cup_{(z\in SYP_1)}U_F(z)\right) \cup \left(\cup_{(z_1,z_2)\in E(root)}U_F(SYP_{(z_1,z_2)})\right)$, we must find a Y-shape 
% probability 
distribution $p$ such that $U_F(p)=C$ and $p$ is strongly inducible.

\textbf{Case 2:} Otherwise, suppose that we are given an $\epsilon>0$. 
Then we enumerate $z\in SYP_1$ to check if $C-\epsilon\leq U_F(z)$ and enumerate $(z_1,z_2)\in E(root)$ to check if $C-\epsilon\leq \sup( U_F(SYP_{(z_1,z_2)}))$. Since $\left(\cup_{(z\in SYP_1)}U_F(z)\right) \cup \left(\cup_{(z_1,z_2)\in E(root)}U_F(SYP_{(z_1,z_2)})\right)$ is a union of a finite number of open intervals and a finite set, we must find a Y-shape 
% probability 
distribution $p$ such that $U_F(p)\geq C-\epsilon$ and $p$ is strongly inducible.

Note that once we get a ``Y-shape'' 
% probability 
distribution, we can construct a 
corresponding 
% follower's
payoff function to induce it according to  the proof of \Cref{thm: cheat in behavioral commit} by \Cref{thm: strong inducibility under behavioral}.

This finishes the proof.
\end{proof}

\subsection{Proof of \Cref{proposition: higher utility under strongly inducibility}}

\propositionhigherutilityunderstronglyinducibility*

\begin{proof}
Same as in the proof of \cref{proposition: higher utility}, it suffices to show that $SP(\Pi_L\times \Pi_F) \subseteq SP(\Delta_L\times \Pi_F)$. Denote the set of all Y-shape strongly inducible distributions as $SYP(\Delta_L\times\Pi_F)$, since $p\in SP(\Pi_L\times \Pi_F)$ is Y-shape, we show by induction that $p$ satisfies conditions in \Cref{thm: strong inducibility under behavioral}. 

\textbf{Inductive Base: }When $root \in Z$, the only realizable distribution $p$ where $p(root)=1$ is both strongly inducible with pure and behavioral commitment. 

\textbf{Inductive Step: }When $root \in H$, we assume the inductive hypothesis holds in subtrees and consider any $p\in SP(\Pi_L\times \Pi_F)$. 

\textbf{When $\mathcal{P}(root) = L$,} by the definition of $SP(\Pi_L\times \Pi_F)$, $|\operatorname{Supp}(p,root)|=1$, and condition (1)(a) of \Cref{thm: strong inducibility under behavioral} is satisfied.  Let $\operatorname{Supp}(p,root)=\{v\}$, since $p_{v}$ is strongly inducible with pure commitment at subgame
$T_v$, 
% $(T_v,U_L|_{v},U_F|_{v})$, 
by the inductive hypothesis, it is also strongly inducible with behavioral commitment. Thus condition (1)(b) is satisfied. 
$p$ is strongly inducible with behavioral commitment at $T_{root}$. 
% in game $(T_{root}, U_L,U_F)$. 

\textbf{When $\mathcal{P}(root)=F$,} let $\operatorname{Supp}(p,root)=\{v\}$. 

If $U_L(p) > \min_{v^\prime \in \child{root},v^\prime \ne v}M_L(v^\prime)$, then $p$ satisfies condition (2)(a) of \Cref{thm: strong inducibility under behavioral} and is strongly inducible with behavioral commitment. 

If $p|_{v}$ is strongly inducible with pure commitment at $T_v$,
% subgame $(T_{v},U_L|_{v},U_F|_{v})$,
then  by the inductive hypothesis, it is also strongly inducible with behavioral commitment. Thus condition (2)(b) is satisfied. 
$p$ is strongly inducible with behavioral commitment at $T_{root}$. 
% in game $(T_{root}, U_L,U_F)$. 

\end{proof}

\subsection{Proof of \Cref{lemma: equivalence of two near optimality}}

\lemmaequivalenceoftwonearoptimality*

\begin{proof}

\textbf{The neccesity} follows from that $SYP(\Delta_L\times\Pi_F)\subseteq SP(\Delta_L\times\Pi_F)$. As for the \textbf{the sufficiency}, 
% Comparing the necessary conditions of inducibility (\Cref{thm: cheat in behavioral commit}) and strong inducibility (\Cref{coro: necessities of arbitrary strong inducibility}), one can find out  
denote an optimal inducible distribution as $p^*$ and suppose for any $\epsilon\geq 0$, there exists a strongly inducible distribution $p_1$ that $U_F(p_1)\geq U_F(p^*)-\epsilon$. By \Cref{coro: simple strongly inducible distribution}, there exists a Y-shape strongly inducible distribution $p_2$ that $U_L(p_2)\geq U_L(p_1)$, and $U_F(p_2) \geq U_F(p_1)$, which yields $U_F(p_2) \geq U_F(p^*)-\epsilon$.

\end{proof}

\subsection{Proof of \Cref{thm: full characterization of near optimality}}

\thmfullcharacterizationofnearoptimality*

By analyzing the possible optimal Y-shape inducible distributions a game may possess, we first prove two important lemmas.  

\begin{restatable}{lemma}{lemmayshapetwosupport}
\label{lemma: y-shape 2 support}
If game $(T_{root},U_L,U_F)$ satisfies \Cref{condition 1},  then it satisfies property \ref{eqn: near optimality}.
\end{restatable}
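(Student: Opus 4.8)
The plan is to reduce the whole statement to a one–dimensional perturbation along the two leaf nodes $\{z_1,z_2\}=\operatorname{Supp}(p^*)$ supplied by \Cref{condition 1}, exploiting that inducibility and strong inducibility of a distribution supported on $\{z_1,z_2\}$ differ only by a strict-versus-weak inequality against one and the same threshold that is independent of the mixing weight.

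First I would fix notation: write $p_\alpha\coloneqq p_{\alpha,z_1,z_2}$, assume WLOG $U_L(z_1)>U_L(z_2)$ (legitimate since \Cref{condition 1} guarantees $U_L(z_1)\ne U_L(z_2)$), and let $\alpha^*\in(0,1)$ be the weight with $p^*=p_{\alpha^*}$, which is interior because $|\operatorname{Supp}(p^*)|=2$. Recall the quantity $M(z_1,z_2)$ from the proof of \Cref{lemma: existence of near optimal}. Combining \Cref{corollary: non-recursive characterization of y-shape inducibility} with \Cref{corollary: non-recursive characterization of y-shape strong inducibility}, I would record the key equivalence: for $\alpha\in(0,1)$, the distribution $p_\alpha$ is inducible iff $U_L(p_\alpha)\ge M(z_1,z_2)$ and is strongly inducible iff $U_L(p_\alpha)>M(z_1,z_2)$ — same threshold, the only difference being strictness. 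Finally, since every strongly inducible distribution is inducible and $p^*$ maximizes $U_F$ over all inducible distributions, we have $U_F(p)\le U_F(p^*)$ for all $p\in SP(\Delta_L\times\Pi_F)$, hence $\sup_{p\in SP(\Delta_L\times\Pi_F)}U_F(p)\le U_F(p^*)$; it remains only to prove the reverse inequality, since that yields property \ref{eqn: near optimality}.

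Next I would branch on the value $U_L(p^*)$. As $p^*$ is inducible, $U_L(p^*)\ge M(z_1,z_2)$. If this inequality is strict, then $p^*$ itself is strongly inducible, so $\sup_{SP}U_F=U_F(p^*)$ and we are done. If instead $U_L(p^*)=M(z_1,z_2)$, I would perturb toward the higher-$U_L$ leaf: for small $\delta>0$ set $\alpha=\alpha^*+\delta$. Because the coefficient $U_L(z_1)-U_L(z_2)>0$, we get $U_L(p_{\alpha^*+\delta})=M(z_1,z_2)+\delta\bigl(U_L(z_1)-U_L(z_2)\bigr)>M(z_1,z_2)$, so $p_{\alpha^*+\delta}$ is strongly inducible for every sufficiently small $\delta>0$ (the threshold $M(z_1,z_2)$ depends only on $z_1,z_2$, not on $\alpha$, and $\alpha^*+\delta\in(0,1)$ keeps the support equal to $\{z_1,z_2\}$ since $\alpha^*<1$). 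Meanwhile $U_F(p_{\alpha^*+\delta})=U_F(p^*)+\delta\bigl(U_F(z_1)-U_F(z_2)\bigr)\to U_F(p^*)$ as $\delta\to0^+$, so for any $\epsilon>0$ a small enough $\delta$ gives a strongly inducible $p_{\alpha^*+\delta}$ with $U_F(p_{\alpha^*+\delta})>U_F(p^*)-\epsilon$. This yields $\sup_{p\in SP(\Delta_L\times\Pi_F)}U_F(p)\ge U_F(p^*)$, hence equality, establishing property \ref{eqn: near optimality}.

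The main obstacle is not the perturbation idea but the bookkeeping that makes it legitimate: I must justify carefully that inducibility and strong inducibility of $p_\alpha$ are governed by the \emph{identical}, $\alpha$-independent threshold $M(z_1,z_2)$, so that pushing mass onto the leaf with larger $U_L$ strictly raises $U_L(p_\alpha)$ and thereby converts an inducible boundary point into a strongly inducible one, while $U_F$ varies continuously and hence stays within $\epsilon$ of the optimum. I would also confirm that $\alpha^*$ is genuinely interior (from $|\operatorname{Supp}(p^*)|=2$), giving room to increase $\alpha$ while remaining a two-support distribution in $(0,1)$; with these checks the weak-to-strict upgrade goes through verbatim.
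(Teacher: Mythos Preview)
Your proposal is correct and follows essentially the same approach as the paper: perturb the mixing weight toward the leaf with larger $U_L$, so the weak inequalities guaranteed by inducibility of $p^*$ become the strict inequalities needed for strong inducibility, while $U_F$ changes only by $O(\delta)$. The paper carries this out without the explicit case split (it directly constructs $p'$ with $U_L(p')>U_L(p^*)$ and $U_F(p')>U_F(p^*)-\epsilon$ and invokes \Cref{corollary: non-recursive characterization of y-shape inducibility} for the witness $v$), whereas you package the same inequalities into the single threshold $M(z_1,z_2)$ and branch on whether $p^*$ already meets it strictly; the content is the same.
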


\begin{proof}
W.L.O.G., suppose $p(z_1)=\alpha>0$ and $U_F(z_1)\leq U_F(z_2)$, thus $U_F(p)=\alpha U_F(z_1) + (1-\alpha) U_F(z_2)$ and $U_L(p) = \alpha U_F(z_1) + (1-\alpha)U_F(z_2)$. Since $U_L(z_1)\ne U_L(z_2)$, for any small enough $\epsilon > 0$, there exists a $\alpha^\prime>0$, and distribution $p^\prime$ such that $p(z_1)=\alpha^\prime$ and $p(z_2) = 1-\alpha^\prime$, satisfying $U_F(p^\prime) > U_F(p)-\epsilon$, and $U_L(p^\prime) > U_L(p)$. 

Furthermore, suppose $v\in \mathcal{P}^{-1}(F)$ and $w\in\child{v}$ are common ancestors of $z_1$ and $z_2$ that $U_L(p)\geq \min_{w\in\child{v},w^\prime \ne w} M_L(w^\prime)$, the existence of which is assured by \Cref{corollary: non-recursive characterization of y-shape strong inducibility}, 
% where $w\in\child{v}$ is a common ancestor of $z_1$ and $z_2$, 
then 
\begin{equation*}
    \begin{aligned}
    U_L(p^\prime) &> U_L(p) \geq \min_{w\in\child{v},w^\prime \ne w} M_L(w^\prime);\\
    U_L(p^\prime) &>U_L(p) \geq M_L(u) ~\text{for any } u~\text{on the path from}~root~\text{to}~v.
    \end{aligned}
\end{equation*}
Thus $p^\prime$ is strongly inducible and game $(T_{root},U_L,U_F)$ satisfies Property \ref{eqn: near optimality}. 
\end{proof}

\begin{restatable}{lemma}{lemmayshapeonesupport}
\label{lemma: y-shape 1 support}
If game $(T_{root},U_L,U_F)$ does not satisfy \Cref{condition 1}, then it satisfies property \ref{eqn: y-shape near optimality} if and only if it satisfies \Cref{condition 2}. 
\end{restatable}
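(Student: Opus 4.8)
The plan is to reduce everything to Y-shape distributions and then read the two non-recursive characterizations against one another. By \Cref{lemma: simple inducible sp} an optimal inducible distribution may be taken Y-shape; write $U_F^\ast=U_F(p^\ast)$ for this optimal follower value. Every strongly inducible distribution is in particular inducible, so $U_F(p)\le U_F^\ast$ for all $p\in SYP(\Delta_L\times\Pi_F)$, and by \Cref{coro: simple strongly inducible distribution} the supremum in \ref{eqn: y-shape near optimality} is the same as over all strongly inducible distributions. Hence the ratio is always at most $1$ and the only content is whether it approaches $1$. I would first record the structural consequence of the hypothesis ``not \Cref{condition 1}'': every optimal Y-shape inducible $p^\ast$ has either $|\operatorname{Supp}(p^\ast)|=1$, or $\operatorname{Supp}(p^\ast)=\{z_1,z_2\}$ with $U_L(z_1)=U_L(z_2)$. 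In the latter case inducibility is weight-independent (\Cref{corollary: non-recursive characterization of y-shape inducibility}, Case (2), since $U_L(p)=U_L(z_1)=U_L(z_2)$ for every weight), so pushing the weight toward the higher-$U_F$ leaf keeps the distribution inducible; optimality then forces $U_F(z_1)=U_F(z_2)=U_F^\ast$. Thus in all cases there is a leaf $z^\ast$ with $p^\ast(z^\ast)>0$ and $U_F(z^\ast)=U_F^\ast$, matching the $z^\ast$ of \Cref{condition 2}.

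For the direction \Cref{condition 2}${}\Rightarrow{}$\ref{eqn: y-shape near optimality} I would split on the two clauses of \Cref{condition 2}. If $p^\ast$ is strongly inducible then $p^\ast\in SYP(\Delta_L\times\Pi_F)$ realizes the ratio $1$ exactly. Otherwise I use the witness $(v,z_1)$ of \Cref{condition 2}(2) and the family $p_\alpha\coloneqq p_{\alpha,z^\ast,z_1}$ with $\alpha\to 1$, checking the three clauses of \Cref{corollary: non-recursive characterization of y-shape strong inducibility}, Case (1): clause (a) holds since $U_L(z_1)>U_L(z^\ast)$; clause (b) follows by taking the follower ancestor $v$ with its child $w\coloneqq w(v,z^\ast)=w(v,z_1)$ (the leaves split only at their leader least-common-ancestor, which lies strictly below $v$), because $U_L(p_\alpha)>U_L(z^\ast)\ge\min_{w'\ne w}M_L(w')$ by \Cref{condition 2}(2)(a); and clause (c) follows because the leader nodes on the path to $v$ lie on the path to $z^\ast$, where $U_L(z^\ast)\ge M_L(u)$, so $U_L(p_\alpha)>U_L(z^\ast)\ge M_L(u)$. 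Since $U_F(p_\alpha)=\alpha U_F(z^\ast)+(1-\alpha)U_F(z_1)\to U_F(z^\ast)=U_F^\ast$, the supremum reaches $1$.

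For the converse \ref{eqn: y-shape near optimality}${}\Rightarrow{}$\Cref{condition 2}, I would assume that no optimal $p^\ast$ is strongly inducible (otherwise \Cref{condition 2}(1) already holds) and take a sequence $p_n\in SYP(\Delta_L\times\Pi_F)$ with $U_F(p_n)\to U_F^\ast$. By finiteness of leaves and leaf-pairs some support recurs infinitely often. A recurring singleton $\{z\}$ would give $U_F(z)=U_F^\ast$ with $z$ strongly inducible, hence an optimal strongly inducible distribution, contradicting the assumption; so a recurring pair $\{z_1,z^\ast\}$ carries the supremum, and by \Cref{corollary: non-recursive characterization of y-shape strong inducibility} it satisfies $U_L(z_1)\ne U_L(z^\ast)$. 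Taking $z^\ast$ to be the lower-$U_L$ leaf, I would argue that $U_F$ is driven up to $U_F^\ast$ only as the weight concentrates on $z^\ast$, forcing $U_F(z^\ast)=U_F^\ast$ and $U_L(z_1)>U_L(z^\ast)$; reading clauses (b)--(c) of the strong characterization back onto this pair then exhibits the follower ancestor $v$ with $U_L(z^\ast)\ge\min_{w'\ne w}M_L(w')$, i.e.\ \Cref{condition 2}(2)(a), while $z_1$ together with the leader least-common-ancestor of $z_1,z^\ast$ gives $(z_1,z^\ast)\in E(v)$ with $U_L(z_1)>U_L(z^\ast)$, i.e.\ \Cref{condition 2}(2)(b).

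The hard part is this converse, and specifically the leader-node bookkeeping around clause (c). One must (i) confirm that the recurring pair can be aligned with an \emph{optimal} leaf, i.e.\ $U_F(z^\ast)=U_F^\ast$ rather than merely $\le U_F^\ast$, ruling out via optimality and the failure of \Cref{condition 1} that the supremum is carried by a pair both of whose leaves strictly beat every optimal leaf; and (ii) reconcile the strict max-min inequalities enjoyed by the approximating $p_n$ with the non-strict witness inequalities of \Cref{condition 2}(2) in the limit. The same delicate point surfaces in the forward direction when $p^\ast$ has two equal-$U_L$ leaves rather than one: there $z^\ast$ need not be an inducible singleton, so the bound $U_L(z^\ast)\ge M_L(u)$ on the path to $v$ must be extracted from the inducibility of $p^\ast$ together with the position of $v$ relative to the support's least-common-ancestor, using that any increase of $M_L$ below that ancestor is shielded by an intervening follower node. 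Getting this alignment precise, rather than the interval/finiteness argument for the supremum, is where the real work lies.
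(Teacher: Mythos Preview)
Your plan mirrors the paper's: forward direction via the family $p_\alpha$ checked against \Cref{corollary: non-recursive characterization of y-shape strong inducibility}, converse via finiteness of leaf pairs. Where you extract a recurring subsequence the paper works with explicit thresholds $\epsilon_0,\epsilon_1$; this is only a cosmetic difference.

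Two places your converse is looser than the paper's. First, you write ``taking $z^\ast$ to be the lower-$U_L$ leaf, I would argue that $U_F$ is driven up to $U_F^\ast$ only as the weight concentrates on $z^\ast$''. This presumes the lower-$U_L$ leaf is also the higher-$U_F$ leaf. The paper earns this orientation: it first uses $\epsilon_0=\min_{z:\,U_F(z)\ne U_F^\ast}|U_F^\ast-U_F(z)|$ to argue that any near-optimal strongly inducible $p$ must have two-support with $U_F(z_1)\ne U_F(z_2)$, and only then takes $U_L(z_1)>U_L(z_2)$, $U_F(z_1)<U_F(z_2)$ without loss. Second---and this is your flagged point (i)---the paper establishes more than $U_F(z^\ast)=U_F^\ast$: it shows $z^\ast$ is \emph{optimally inducible}, i.e.\ the singleton $z^\ast$ is itself an inducible distribution. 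The device is a second threshold
\[
\epsilon_1\;=\;U_F^\ast\;-\;\max_{(z_1',z_2')\in I}\ \max_{\alpha:\,p_{\alpha,z_1',z_2'}\text{ inducible}}U_F(p_{\alpha,z_1',z_2'})
\]
over the set $I$ of pairs with neither leaf optimally inducible; failure of \Cref{condition 1} forces $\epsilon_1>0$, since an inducible two-support with unequal $U_L$'s achieving $U_F^\ast$ would witness \Cref{condition 1}. Hence for $\epsilon<\epsilon_1$ the recurring pair cannot lie in $I$, so one of its leaves is an inducible singleton with follower value $U_F^\ast$.

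This stronger conclusion is exactly what closes your clause-(c) worry in the forward direction. The paper's sufficiency argument simply writes ``since $z^\ast$ is inducible'' and reads off $U_L(z^\ast)\ge M_L(u)$ along the full root-to-$z^\ast$ path from \Cref{corollary: non-recursive characterization of y-shape inducibility} Case (2); combined with $U_L(p_\alpha)>U_L(z^\ast)$ this gives the strict inequality of clause (c) on the shorter root-to-$v$ path. So the route the paper takes is not positional bookkeeping below the LCA of a two-support $p^\ast$ as you propose; rather, the converse is arranged so that the Condition-2(2) witness may always be taken with $p^\ast$ an optimal \emph{singleton}, whereupon clause (c) is immediate. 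Your proposal would close both gaps simultaneously by reproducing this ``$z^\ast$ is itself inducible'' step.
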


Given $z_1,z_2\in Z$ and $\alpha \in [0,1]$, recall that $p_{\alpha,z_1,z_2}$ denotes the 
% probability 
distribution $p\in \Delta(Z)$ that $p(z_1)=\alpha$ and $p(z_2)=1-\alpha$. 

\begin{proof}

\textbf{The necessity: }
Now consider the case for all optimal Y-shape inducible distributions $p^*$, $p^*$ is not strongly inducible, and either $|\operatorname{Supp}(p^*)|=1$ or $U_L(z_1) = U_L(z_2)$ (which also means $U_F(z_1)=U_F(z_2)$) when $|\operatorname{Supp}(p^*)|=2$ and $\operatorname{Supp}(p^*)=\{z_1,z_2\}$. First noticing that for property \ref{eqn: y-shape near optimality} to be satisfied, for small enough $\epsilon>0$, there must exist a Y-shape strongly inducible distribution $p$ with $|\operatorname{Supp}(p)|=2$ such that $U_F(p)\geq U_F(p^*)-\epsilon$: Let $\epsilon_0 = \min_{z\in Z, U_F(z)\ne U_F(p^*)}|U_F(p^*)-U_F(z)|$. Since leaf nodes with utility $U_F(p^*)$ is not strongly inducible, for $\epsilon < \epsilon_0$, there does not exists a strongly inducible leaf node with utility at least $U_F(p^*)-\epsilon$. 

For any $\epsilon < \epsilon_0$, consider an aformentioned Y-shape strongly inducible distribution $p$ with $|\operatorname{Supp}(p)|=2$, and $\operatorname{Supp}(p)=\{z_1,z_2\}$,  then $U_L(z_1)\ne U_L(z_2)$, and $U_F(z_1)\ne U_F(z_2)$. W.L.O.G., assume $U_L(z_1)>U_L(z_2)$ and $U_F(z_1)<U_F(z_2)$, then $U_L(p^*)<U_L(z_1)$ and $U_F(p^*)\leq U_F(z_2)$. Again we argue that with small enough $\epsilon>0$, $z_2$ must be optimally inducible: otherwise 
% for
denote the set of all pairs of leaf nodes $(z^\prime_1,z^\prime_2)$ as $I$, 
% the set of which we denote as $I$,
which satisfy that: neither node is optimally inducible, and
% and there exists $\alpha>0$ such that probability distribution $p_{\alpha,z^\prime_1,z^\prime_2}$ is inducible. 
we define 
\begin{equation*}
    \epsilon_1 = U_F(p^*)- \max_{(z^\prime_1,z^\prime_2)\in I}\max_{\substack{\alpha \in [0,1] \\ p_{\alpha,z^\prime_1,z^\prime_2}\text{ is inducible}}} U_F(p_{\alpha, z^\prime_1,z^\prime_2}),
\end{equation*}
then $\epsilon_1>0$ and for $\epsilon <\epsilon_1$, there does not exist $(z^\prime_1,z^\prime_2)\in I$ and an inducible distribution $p_{\alpha, z^\prime_1,z^\prime_2}$ with utility at least $U_F(p^*)-\epsilon$.  Neither does a strongly inducible one. 

Now for any node $z_1$ that its least common ancestor with $z^*$, denoted as $LCA$ , belongs to $\mathcal{P}^{-1}(L)$ and $U_L(z_1)>U_L(z^*)$,  noticing that for distribution $p_{\alpha,z_1,z^*}$, where $\alpha \in (0,1)$ to be inducible, there must exist a node belonging to $\mathcal{P}^{-1}(F)$. If $U_L(z^*) < \min_{w\in\child{v}, w\ne w(v,z^*)}M_L(w)$ for all $v\in \mathcal{P}^{-1}(F)$ on the path from $root$ to $z^*$, then for small enough $\alpha>0$, $U_L(p_{\alpha,z_1,z^*})< \min_{w\in\child{v}, w\ne w(v,z^*)}M_L(w)$ and thus is not inducible, let alone strongly inducible. Thus there must exists a node $v$ on the path from $root$ to $LCA$, such that $U_L(z^*) \geq \min_{w\in\child{v}, w\ne w(v,z^*)}M_L(w)$. This finishes the proof of the necessities. 
% otherwise for all pairs of leaf nodes $(z^\prime_1,z^\prime_2)$ that $U_L(z^\prime_1)\ne U_L( z^\prime_2)$ and there exists $\alpha>0$,such that $\alpha U_F(z^\prime_1)+(1-\alpha) U_F(z^\prime_2) = U_F(p^*)$, by the conditions of all optimal inducible Y-shape distributions and let $\alpha^\prime_{(z_1,z_2)}$ be that distributions $p(z_1)=\alpha_1$. 

\textbf{The sufficiency: } that $p^*$ is strongly inducible makes the game satisfies property \ref{eqn: near optimality} and thus \ref{eqn: y-shape near optimality}. Now suppose game $(T_{root},U_L,U_F)$ satisfies condition (2), then for any $\epsilon>0$, there exists $\alpha>0$, such that $U_F(p_{\alpha,z_1,z^*})>U_F(z^*)-\epsilon$, and $U_L(p_{\alpha,z_1,z^*})>U_L(z^*)$. Since $z^*$ is inducible, and $U_L(z^*) \geq \min_{w\in\child{v}, w\ne w(v,z^*)}M_L(w)$, thus $p_{\alpha,z_1,z^*}$ is strongly inducible.

\end{proof}

\begin{proof}[Proof of \Cref{thm: full characterization of near optimality}]
Since the optimal Y-shape inducible distributions either satisfies conditions in \Cref{lemma: y-shape 1 support} or in \Cref{lemma: y-shape 2 support}, and by \Cref{lemma: equivalence of two near optimality}, the theorem is proved. 
\end{proof}

\clearpage
% Bibliography
\bibliographystyle{ACM-Reference-Format}
%%% -*-BibTeX-*-
%%% Do NOT edit. File created by BibTeX with style
%%% ACM-Reference-Format-Journals [18-Jan-2012].

\end{document}